\numberwithin{equation}{section}
\newtheorem{theorem}{Theorem}[section]
\newtheorem{proposition}[theorem]{Proposition}
\newtheorem{lemma}[theorem]{Lemma}
\crefname{claim}{Claim}{Claims}
\newtheorem{corollary}[theorem]{Corollary}
\newtheorem{conjecture}[theorem]{Conjecture}
\newtheorem*{question*}{Question}
\theoremstyle{definition}
\newtheorem{definition}[theorem]{Definition}
\newtheorem{question}[theorem]{Question}
\newtheorem*{definition*}{Definition}
\theoremstyle{remark}
\newcommand{\fA}{\mathcal{A}}
\newcommand{\fB}{\mathcal{B}}
\newcommand{\fE}{\mathcal{E}}
\newcommand{\fF}{\mathcal{F}}
\newcommand{\fG}{\mathcal{G}}
\newcommand{\fI}{\mathcal{I}}
\newcommand{\fN}{\mathcal{N}}
\newcommand{\fO}{\mathcal{O}}
\newcommand{\fP}{\mathcal{P}}
\newcommand{\fR}{\mathcal{R}}
\newcommand{\fT}{\mathcal{T}}
\DeclareMathOperator{\re}{\mathcal R}
\DeclareMathOperator{\rere}{\overline{\mathcal R}}
\newcommand{\noco}{\fN_{\Pi}}
\newcommand{\edco}{\fE_{\Pi}}
\newcommand{\gee}{g_{\Pi}}
\newcommand{\rnoco}{\fN_{\re(\Pi)}}
\newcommand{\redco}{\fE_{\re(\Pi)}}
\newcommand{\rgee}{g_{\re(\Pi)}}
\newcommand{\rrnoco}{\fN_{\rere(\re(\Pi))}}
\newcommand{\rredco}{\fE_{\rere(\re(\Pi))}}
\newcommand{\rrgee}{g_{\rere(\re(\Pi))}}
\newcommand{\orrnoco}{\fN_{\rere(\Pi)}}
\newcommand{\orredco}{\fE_{\rere(\Pi)}}
\newcommand{\orrgee}{g_{\rere(\Pi)}}
\newcommand{\xA}{\mathsf{A}}
\newcommand{\xB}{\mathsf{B}}
\newcommand{\xa}{\mathsf{a}}
\newcommand{\xb}{\mathsf{b}}
\newcommand{\inn}{{\operatorname{in}}}
\newcommand{\out}{{\operatorname{out}}}
\newcommand{\sinn}{\Sigma_{\inn}}
\newcommand{\sout}{\Sigma_{\out}}
\newcommand{\spinn}{\Sigma^{\Pi}_{\inn}}
\newcommand{\spout}{\Sigma^{\Pi}_{\out}}
\newcommand{\rspinn}{\Sigma^{\re(\Pi)}_{\inn}}
\newcommand{\rspout}{\Sigma^{\re(\Pi)}_{\out}}
\newcommand{\rrspout}{\Sigma^{\rere(\re(\Pi))}_{\out}}
\newcommand{\orrspinn}{\Sigma^{\rere(\Pi)}_{\inn}}
\newcommand{\orrspout}{\Sigma^{\rere(\Pi)}_{\out}}
\newcommand{\congest}{$\mathsf{CONGEST}$\xspace}
\newcommand{\local}{$\mathsf{LOCAL}$\xspace}
\newcommand{\prodlocal}{$\mathsf{PROD\text{-}LOCAL}$\xspace}
\newcommand{\lca}{$\mathsf{LCA}$\xspace}
\newcommand{\volume}{$\mathsf{VOLUME}$\xspace}
\newcommand{\poly}{\operatorname{\text{{\rm poly}}}}
\newcommand{\N}{\mathbb{N}}
\begin{document}

\title{The Landscape of Distributed Complexities on Trees and Beyond} 





\author{Christoph Grunau\footnote{The author ordering was randomized. } \\ \small{ETH Zurich} \\ \small{cgrunau@inf.ethz.ch} 
\and \textcircled{r} \and
\and Václav Rozhoň\\ \small{ETH Zurich} \\ \small{rozhonv@ethz.ch}
\and \textcircled{r} \and
Sebastian Brandt\\ \small{CISPA Helmholtz Center for Information Security} \\ \small{brandt@cispa.de}
}

\maketitle

\begin{abstract}
    We study the local complexity landscape of locally checkable labeling (LCL) problems on constant-degree graphs with a focus on complexities below $\log^* n$. 
    Our contribution is threefold:
    \begin{enumerate}
        \item Our main contribution is that we complete the classification of the complexity landscape of LCL problems on trees in the \local model, by proving that every LCL problem with local complexity $o(\log^* n)$ has actually complexity $O(1)$. This result improves upon the previous speedup result from $o(\log \log^* n)$ to $O(1)$ by [Chang, Pettie, FOCS 2017].
        \item In the related \lca and \volume models [Alon, Rubinfeld, Vardi, Xie, SODA 2012, Rubinfeld, Tamir, Vardi, Xie, 2011, Rosenbaum, Suomela, PODC 2020], we prove the same speedup from $o(\log^* n)$ to $O(1)$ for \emph{all} constant-degree graphs. 
        \item Similarly, we complete the classification of the \local complexity landscape of oriented $d$-dimensional grids by proving that any LCL problem with local complexity $o(\log^* n)$ has actually complexity $O(1)$. This improves upon the previous speed-up from $o(\sqrt[d]{\log^* n})$ by Suomela, explained in [Chang, Pettie, FOCS 2017].
    \end{enumerate}
\end{abstract}

\section{Introduction}
\label{sec:intro}
One fundamental aspect in the theory of distributed computing is \emph{locality}: 
for a given graph problem, how far does each individual node has to see in order to find its own part of a globally correct solution.
The notion of locality is captured by the local complexity of the \local model of distributed computing \cite{linial1987LOCAL}.
In recent years, our understanding of locality has improved dramatically by studying it from a complexity theoretical perspective \cite{naorstockmeyer, chang2016exp_separation, chang2017time_hierarchy, brandt_grids, balliurooted21}. 
That is, characterizing the local complexity of all problems from a certain class of problems at once.
Essentially all the past research has focused on the study of so-called Locally Checkable Labeling problems (LCLs) \cite{naorstockmeyer}. 
Informally speaking, LCLs are graph problems that can be verified efficiently. That is, if a given output looks correct in the local neighborhood of each node, then the output is guaranteed to be a globally correct solution.
Prominent examples of LCLs are Maximal Independent Set, Maximal Matching and all kinds of coloring problems.
LCLs are only defined for constant-degree graphs. 

Interestingly, the study of local complexity has also been shown to be quite relevant for other subfields of computer science such as local computation algorithms or massively parallel computation \cite{rubinfeld2011fast,ghaffari2019conditional_hardness}, and is finding its applications in subareas of discrete mathematics as well \cite{holroyd_schramm_wilson2017,bernshteyn2020LLL,brandt_chang_grebik_grunau_rozhon_vidnyanszky2022trees}.

\paragraph*{Below $\log^* n$}
A classical result of Linial \cite{linial1987LOCAL} shows that the deterministic \local complexity of $O(\Delta^2)$-coloring a graph with maximum degree $\Delta$ is $\Theta(\log^* n)$.\footnote{The function $\log^*(n)$ is defined as the minimum number of times one needs to repeatedly apply the $\log$ function to $n$ until the result becomes at most $1$.} In particular, this result directly implies that many basic symmetry breaking problems like ($\Delta+1$)-coloring have a deterministic \local complexity of $\Theta(\log^* n)$ on constant-degree graphs. In contrast, there are many basic problems with complexity $O(1)$. An example is to ``find the maximum degree of a node in your $2$-hop neighborhood''. The common theme for all our contributions is to understand what happens in between the two worlds. By a result of Chang and Pettie~\cite{chang2017time_hierarchy}, whose proof was inspired by ideas introduced by Naor and Stockmeyer~\cite{naorstockmeyer}, we know that all LCLs with local complexity $o(\log \log^* n)$ have, in fact, complexity $O(1)$. 

Complementing this result, the paper \cite{balliu2018new_classes-loglog*-log*} gave examples of LCLs with complexities between $\Theta(\log\log^* n)$ and $\Theta(\log^* n)$. 
Roughly speaking, examples of those problems are the following: one is asked to solve a basic $\Theta(\log^*n)$-complexity problem on a path. The input graph, however, is not just a path. It is a path $P$ plus a shortcutting structure on top of it that ensures that the $t$-hop neighborhood of each node $u \in P$ in the full input graph $G$ actually contains the $f(t)$-hop neighborhood of $u$ in the subgraph $P$ for some function $f$. 
This results in a problem with complexity equal to $\Theta(f^{-1}(\log^* n))$. 
The growth rate of the function $f$ can range from linear (if no shortcutting is involved) up to exponential, since the neighborhood at distance $t$ in the full graph $G$ can contain up to roughly $\Delta^t$ nodes. 
Hence we obtain complexities from the range $\Theta(\log \log^* n) - \Theta(\log^* n)$. 
It should be noted that such constructed problems have a special structure. In fact, one can observe the following two points:
\begin{enumerate}
    \item To construct such a problem, the input graph needs to contain shortcuts and, hence, cycles. In \cref{thm:main_informal} we prove that if the input graph is a tree, there cannot be any such problems. This completes the classification of LCLs on trees. 
    \item Although the necessary radius that a node $u$ needs to check to solve the constructed problem can be only $O(\log \log^* n)$, the number of nodes that $u$ needs to look at to solve the problem is still $O(\log^* n)$. 
    In \cref{thm:volume_speedup_informal} we prove that in the \volume complexity model, where the main measure is the number of nodes $u$ needs to query, not the radius, there are in fact no complexities between $\omega(1)$ and $o(\log^* n)$. 
\end{enumerate}

\subsection{Main Contribution: Finishing the Classification of LCLs on Trees}\label{subsec:introtree}
In this section we present a prominent achievement of the theory of LCLs: by a long line of work~\cite{balliu2020almost_global_problems,balliu2018new_classes-loglog*-log*,brandt_LLL_lower_bound,chang2016exp_separation,chang2017time_hierarchy,cole86,FischerGK17,ghaffari_grunau_rozhon2020improved_network_decomposition,ghaffari_harris_kuhn2018derandomizing,ghaffari_su2017distributed_degree_splitting,linial1987LOCAL,naor1991lower_bound_ring_coloring,naorstockmeyer,RozhonG19}, we now know that there are only four types of LCL problems, as described in \cref{fig:big_picture}, bottom left:
\begin{enumerate}
    \item[(A)] Problems that can be solved in $O(1)$ rounds.
    \item[(B)] Problems with complexity (both randomized and deterministic) in the range $\Theta(\log\log^* n) - \Theta(\log^* n)$; these include basic symmetry breaking problems like ($\Delta+1$)-coloring and maximal independent set.
    \item[(C)] Problems with randomized complexity $\poly\log\log n$ and deterministic complexity $\poly\log n$; these problems can be solved by reformulating them as an instance of the Lovász local lemma (LLL).
    \item[(D)] Global problems with complexity $\Omega(\log n)$; their randomized and deterministic complexity is the same up to a polylogarithmic factor.
\end{enumerate}


One particular focus in the long line of work on complexities of LCLs has been on understanding the complexity landscape on trees---not only because trees are a natural subclass of general graphs, but also because they turned out to be important for the development of understanding the complexities on general graphs; in particular many lower bounds for LCLs are proven on trees, e.g., via the round elimination technique \cite{brandt19automatic_speedup_theorem}.
As a result the complexity landscape on trees is quite well understood: previous to our work, each LCL on trees was known to have one of the following complexities (unless stated otherwise, the deterministic and randomized complexities are the same):
\begin{enumerate}
    \item $O(1)$,
    \item in the range $\Theta(\log \log^* n) - \Theta(\log^* n)$,
    \item deterministic complexity $\Theta(\log n)$ and randomized complexity $\Theta(\log \log n)$,
    \item complexity $\Theta(\log n)$,
    \item complexity $\Theta(n^{1/k})$ for some positive integer $k$.
\end{enumerate}

This follows from a long line of work~\cite{naorstockmeyer,brandt_LLL_lower_bound,chang2016exp_separation,chang2017time_hierarchy,chang2018edge_coloring,balliu2020almost_global_problems,chang2020n1k_speedups}.
The only missing piece was to determine for which complexities in the range $\Theta(\log \log^* n) - \Theta(\log^* n)$ there actually exists an LCL with that complexity.
We complete the complexity landscape of LCLs on trees by showing that any LCL with a complexity in this range has actually complexity $\Theta(\log^* n)$.
That is, we prove that any LCL on trees with complexity $o(\log^* n)$ has complexity $O(1)$.

\begin{theorem}[Informal version of \cref{thm:treemain}]
\label{thm:main_informal}
Let $\Delta$ be any fixed positive integer. Any LCL on trees (with maximum degree at most $\Delta$) with \local complexity $o(\log^* n)$ has, in fact, \local complexity $O(1)$.  
\end{theorem}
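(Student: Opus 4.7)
The plan is to combine round elimination with a Naor--Stockmeyer-style Ramsey argument, extending the Chang--Pettie result~\cite{chang2017time_hierarchy} (which gives the gap $o(\log \log^* n) \Rightarrow O(1)$) to the stronger gap $o(\log^* n) \Rightarrow O(1)$. The reason Chang--Pettie stops at $\log\log^* n$ is that at that complexity a node's $T$-hop view contains only $\Delta^{o(\log\log^* n)} = o(\log^* n)$ labels, small enough for direct Ramsey on the space of possible views; for $T(n) = o(\log^* n)$ the view is exponentially larger, so I would first shrink the effective problem via round elimination and apply Ramsey only at the end of the chain.

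First, I would reduce to $\Delta$-regular trees of arbitrarily large girth, using the standard fact that the \local complexity on such trees matches the complexity on arbitrary max-degree-$\Delta$ trees up to an additive constant. Given an LCL $\Pi$ with complexity $T(n) = o(\log^* n)$ on such trees, I would iterate the round elimination operators $\re$ and $\rere$ to produce a sequence of LCLs $\Pi_0 = \Pi, \Pi_1, \ldots, \Pi_{T(n)}$ where $\Pi_i$ has \local complexity at most $T(n) - i$. At each step the output alphabet grows by at most one level of exponential, so $|\Sigma^{\Pi_{T(n)}}_{\out}| \leq \tow(T(n)) \cdot |\Sigma^{\Pi}_{\out}|$; since $T(n) = o(\log^* n)$ this is $o(n)$, meaning $\Pi_{T(n)}$ is a genuine LCL over a bounded alphabet that is solvable in $0$ rounds. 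Applying the Naor--Stockmeyer Ramsey theorem~\cite{naorstockmeyer} to this $0$-round algorithm extracts an infinite subset of IDs on which its output depends only on the node's input label and degree, yielding an ID-independent (``trivial'') $0$-round solution to $\Pi_{T(n)}$.

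The main obstacle is turning this trivial solvability of $\Pi_{T(n)}$ back into an $O(1)$-round algorithm for $\Pi$. A naive inversion of round elimination recovers $T(n)$ rounds, which is useless. What I would need is a structural result showing that the round elimination sequence $\Pi_0, \Pi_1, \Pi_2, \ldots$ cannot support super-constantly many strict ``simplifications'' before trivializing---for instance, a combinatorial invariant of the constraint pair $(\fN_{\Pi_i}, \fE_{\Pi_i})$ that takes values in a set bounded independently of $n$, strictly decreases along $\re$ and $\rere$ whenever the current problem is non-trivial, and is minimized exactly by problems admitting an $O(1)$-round algorithm. Such an invariant would force any $\Pi$ whose chain eventually becomes trivial to do so in $O(1)$ steps, giving the required $O(1)$-round algorithm for $\Pi$. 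The tree structure is critical here: the algebra of node/edge constraints on trees is rigid enough to support such an invariant, consistent with the fact that intermediate complexities between $\omega(1)$ and $o(\log^* n)$ genuinely exist on graphs with cycles. Designing this invariant and verifying its compatibility with the Ramsey step is what I expect to be the technical heart of the proof.
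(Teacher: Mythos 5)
Your proposal correctly identifies round elimination as the core tool and correctly recognizes that naively undoing $T(n)$ rounds of elimination is useless. But the mechanism you conjecture for the final step---a bounded combinatorial invariant of $(\fN_{\Pi_i},\fE_{\Pi_i})$ that strictly decreases under $\re$/$\rere$ and forces the chain to trivialize in $O(1)$ steps---is not what the paper does, and I do not see how to make it work; the chain need not trivialize after a bounded number of steps, and the paper never claims it does.

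The actual argument sidesteps this entirely, and the missing idea is a scale-fixing trick. Rather than asking when the chain trivializes as a function of $n$, fix a single sufficiently large $n_0$ depending only on $\Delta$ and $\Pi$. Starting from a randomized $T(n)$-round algorithm with failure probability $1/n$, apply round elimination $T(n_0)$ times on $n_0$-node inputs, carefully tracking the growth of the \emph{local} failure probability at each step (this is Theorem~\ref{thm:fullspeedup}; the bound $S p^{1/(3\Delta+3)}$ is the technical heart). After $T(n_0)$ steps you have a $0$-round randomized algorithm for $f^{T(n_0)}(\Pi)$ whose local failure probability is strictly below $1/\lvert\Sigma_{\out}^{f^{T(n_0)}(\Pi)}\rvert^{2\Delta}$. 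From this a deterministic $0$-round algorithm is extracted by a direct probabilistic argument (pick, for each input tuple, the most likely output tuple, and observe that failures would contradict the failure-probability bound)---not by Naor--Stockmeyer Ramsey. The crucial observation you are missing is then that a $0$-round algorithm is just a function from input tuples to output tuples, and hence is \emph{independent of $n$}: the deterministic $0$-round algorithm derived for $n_0$-node inputs is correct on forests of every size. Now Lemma~\ref{lem:up} (the easy direction of round elimination, which preserves determinism) lifts it back up to a deterministic $T(n_0)$-round algorithm for $\Pi$ on all sizes, and $T(n_0)$ is a constant. This is where the $O(1)$ comes from---not from an invariant argument.

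Two further points where your plan diverges in a way that matters. First, you propose reducing to $\Delta$-regular trees with no inputs as a ``standard fact,'' but no such reduction is available here, and avoiding it is one of the paper's technical contributions: the round elimination failure-probability analysis had only been done for regular trees without inputs, and the paper extends it to irregular degree sequences and input-labeled half-edges. Second, the paper works on \emph{forests} rather than trees for a subtle reason: the failure-probability argument requires that any small collection of local extensions of a ball can be realized simultaneously inside some $n$-node graph of the class, which can fail on $n$-node trees (because the extensions might ``close off'' the tree at the wrong node count); Lemma~\ref{lem:treetoforest} then transfers the result back from forests to trees. Your Ramsey instinct is not wrong in general---the paper uses exactly a Naor--Stockmeyer/Chang--Pettie Ramsey argument for the \volume model speedup---but it is not the mechanism used here, and substituting it in for the failure-probability tracking would break the quantitative control needed to reach $0$ rounds.
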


Put together with the aforementioned work on the classification of LCLs on trees, we get the following corollary (see \cref{fig:big_picture}, top left, for an illustration). 

\begin{corollary}[Classification of LCLs on trees, see \cref{fig:big_picture}]
\label{cor:speedup_trees}
Let $\Pi$ be an LCL on trees. Then the deterministic/randomized \local complexity of $\Pi$ is one of the following:
\begin{enumerate}
    \item $O(1)$,
    \item $\Theta(\log^*n)$,
    \item $\Theta(\log n)$ deterministic and $\Theta(\log\log n)$ randomized,
    \item $\Theta(\log n)$,
    \item $\Theta(n^{1/k})$ for $k \in \mathbb{N}$.
\end{enumerate}
Moreover, all mentioned complexity classes are non-empty, i.e., for each class (and each $k$), there exists some LCL with the indicated complexity.
\end{corollary}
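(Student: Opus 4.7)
The plan is to derive \cref{cor:speedup_trees} by combining the new speed-up \cref{thm:main_informal} with the long line of gap results and example constructions surveyed in \cref{subsec:introtree}. \cref{thm:main_informal} supplies the one previously missing speed-up on trees (from $o(\log^*n)$ down to $O(1)$); every other ingredient is already present in the literature, so the argument is essentially a bookkeeping assembly.

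Let $\Pi$ be an arbitrary LCL on trees and let $T_{\det}(n)$ and $T_{\mathrm{rand}}(n)$ denote its deterministic and randomized \local complexities. I would establish the classification by chaining the following known tree-specific speed-ups. (a) By Chang and Pettie~\cite{chang2020n1k_speedups}, $T_{\det}(n) = \omega(\log n)$ forces $T_{\det}(n) = \Theta(n^{1/k})$ for some positive integer $k$, and the same statement holds for $T_{\mathrm{rand}}$. (b) By the logarithmic gap of Chang and Pettie~\cite{chang2017time_hierarchy} (building on Naor--Stockmeyer~\cite{naorstockmeyer}), no deterministic complexity lies strictly between $\omega(\log^*n)$ and $o(\log n)$, and no randomized complexity lies strictly between $\omega(\log\log n)$ and $o(\log n)$. (c) By the exponential separation of Chang, Kopelowitz and Pettie~\cite{chang2016exp_separation}, $T_{\mathrm{rand}}(n) = o(\log n)$ implies $T_{\mathrm{rand}}(n) = O(\log\log n)$, while the LLL lower bound of Brandt et al.~\cite{brandt_LLL_lower_bound} shows the $\Theta(\log\log n)$ endpoint is tight. (d) Finally, \cref{thm:main_informal} closes the remaining gap: $T_{\det}(n) = o(\log^*n)$ implies $T_{\det}(n) = O(1)$, and the analogous randomized statement follows because randomized complexity below $\Theta(\log\log n)$ coincides with deterministic complexity in this regime by~\cite{chang2016exp_separation}. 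Items (a)--(d) together force the pair $(T_{\det}, T_{\mathrm{rand}})$ into exactly the five cells listed in the corollary.

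For the non-emptiness statement I would exhibit a canonical witness for each cell: the trivial constant-time problem inhabits the $O(1)$ class; $(\Delta+1)$-coloring a constant-degree tree realizes $\Theta(\log^*n)$ by Cole--Vishkin~\cite{cole86} together with the Linial/Naor ring lower bound~\cite{linial1987LOCAL,naor1991lower_bound_ring_coloring}; sinkless orientation on $\Delta$-regular trees realizes the split pair $(\Theta(\log n), \Theta(\log\log n))$~\cite{brandt_LLL_lower_bound,chang2016exp_separation}; $3$-coloring a tree realizes the symmetric $\Theta(\log n)$ class~\cite{chang2016exp_separation}; and for every positive integer $k$, Chang and Pettie~\cite{chang2020n1k_speedups} together with Balliu et al.~\cite{balliu2020almost_global_problems} construct tree LCLs of complexity $\Theta(n^{1/k})$. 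The main technical obstacle is \cref{thm:main_informal} itself; once it is in hand, the corollary reduces to cross-referencing and a short case analysis.
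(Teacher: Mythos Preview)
Your proposal is correct and matches the paper's approach: the paper does not give a standalone proof of \cref{cor:speedup_trees} but simply states that it follows by combining \cref{thm:main_informal} with the prior classification results cited in \cref{subsec:introtree}. Your breakdown is in fact more detailed than the paper's own treatment; one small inaccuracy is that \cite{chang2020n1k_speedups} is due to Chang alone, not Chang and Pettie.
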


An intriguing implication of our work is that the complexity landscape of LCLs on trees is \emph{discrete}, while the landscape of LCLs on general (constant-degree) graphs contains ``dense'' regions (in classes (B) and (D)~\cite{balliu2018new_classes-loglog*-log*,balliu2020almost_global_problems}).
It would be interesting to see to what extent this is a more general phenomenon beyond LCLs on constant-degree graphs.

We also note that for any LCL problem $\Pi$, its complexity $T$ on trees is the same as its complexity on high-girth graphs (assuming that the girths are at least a suitable additive constant larger than $2T$).
The reason is that, due to the definition of an LCL problem, the correctness of a claimed solution only depends on the correctness of the solution in each node neighborhood of radius $T$ plus some constant.
We conclude that our gap result also applies to high-girth graphs (roughly speaking with girths from $\omega(\log^* n)$).

\begin{figure}
    \centering
    \includegraphics[width=.48\textwidth]{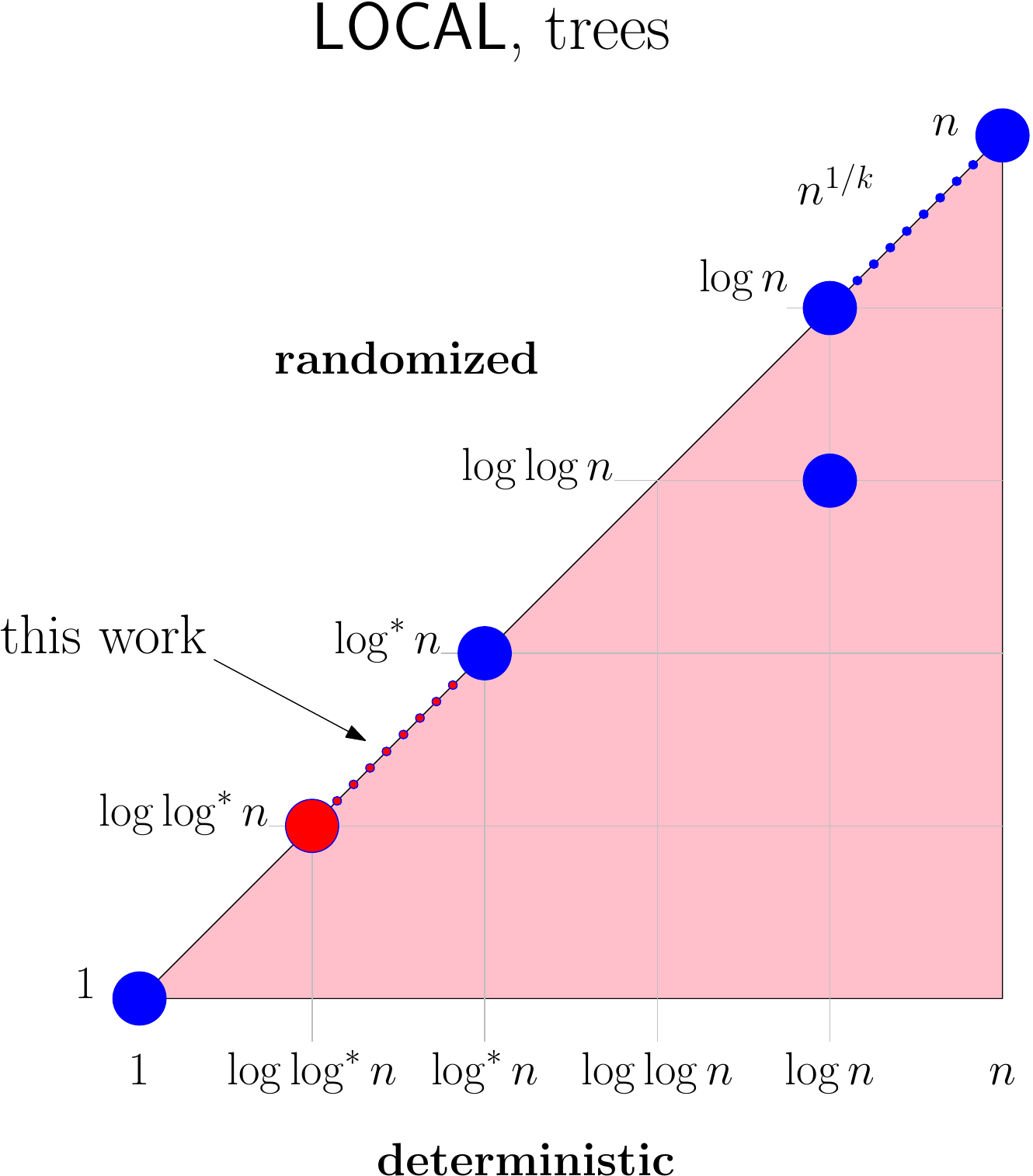}
    \includegraphics[width=.47\textwidth]{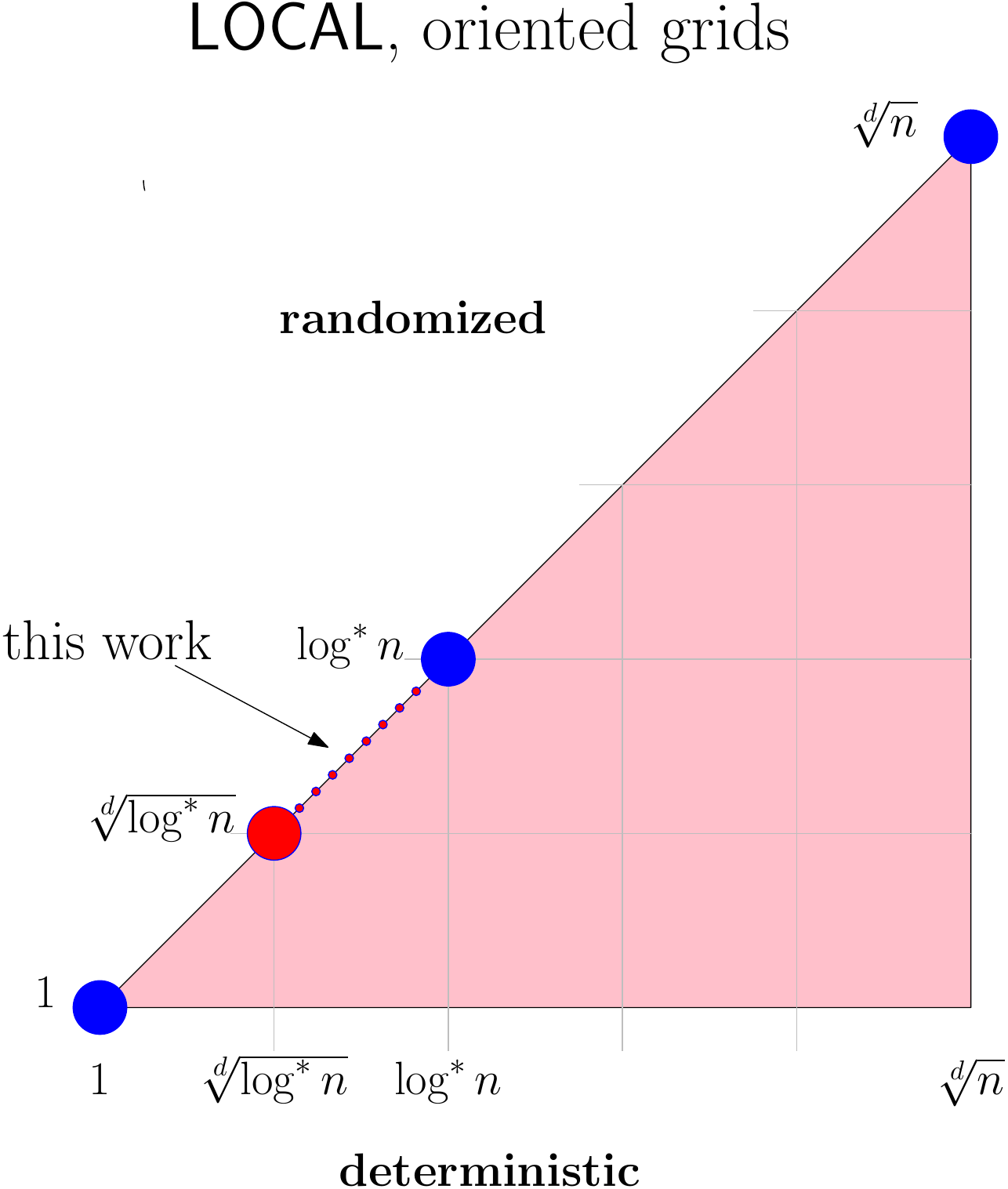}
    
    \vspace*{2cm}
    
    \includegraphics[width=.48\textwidth]{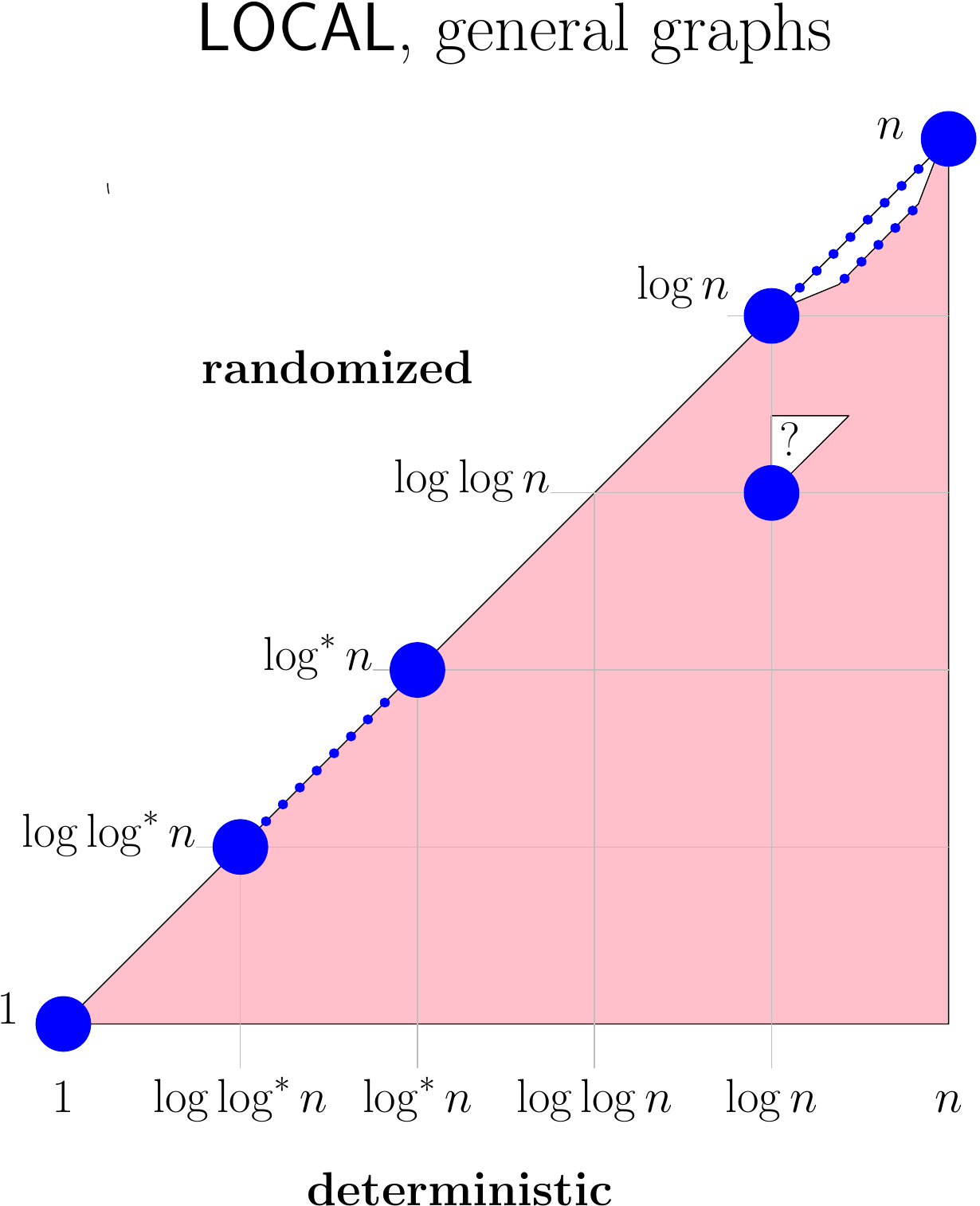}
    \includegraphics[width=.48\textwidth]{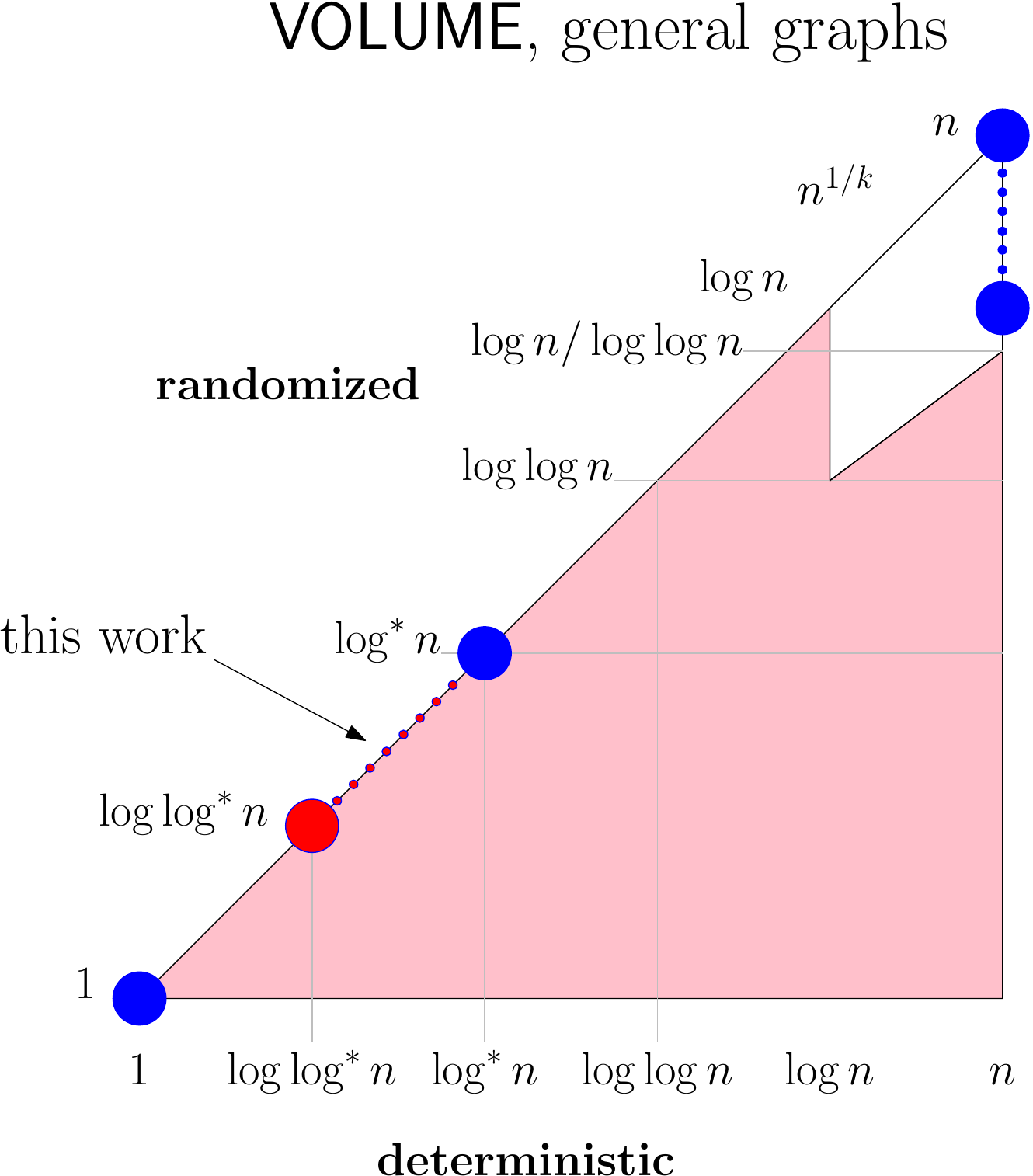}
    
    \caption{The LCL landscape of local complexities on trees (top left), oriented grids (top right), general constant-degree graphs (bottom left) and the landscape of the volume model (bottom right). 
    Blue circles correspond to possible deterministic and randomized local complexities, while the red color means that no problems with the given complexity are possible in the area. The arrows point to our contribution: completion of the classification of LCLs on trees and oriented grids by showing that there are no problems with complexities between $\Theta(\log^* n)$ and $O(1)$. We now also understand the landscape of the \volume / \lca model in the regime of low complexities. 
    See \cite{Jukka_presentation} for a friendly introduction to the topic of LCL landscapes. }
    \label{fig:big_picture}
\end{figure}

\paragraph*{Our method, in a nutshell}

Our approach is based on the round elimination technique~\cite{brandt19automatic_speedup_theorem}, a highly successful technique for proving local lower bounds. 
In essence, round elimination is an explicit process that takes an LCL $\Pi$ on trees as input and returns an LCL $\Pi'$ with complexity exactly one round less. More precisely:
\begin{enumerate}
    \item[(1)] 
    If there is a $T$-round randomized algorithm $\fA$ for $\Pi$, then there is also a $(T-1)$-round randomized algorithm $\fA'$ for $\Pi'$ such that the (local) failure probability of $\fA'$ is bounded by a reasonable function of the (local) failure probability of $\fA$. 
    \item[(2)] If we have a $(T-1)$-round algorithm $\fA'$ for $\Pi'$, we can use it to construct a $T$-round algorithm $\fA$ for the original problem $\Pi$; if $\fA'$ is deterministic, then $\fA$ is deterministic as well. 
\end{enumerate}

So far, in the literature, the standard use case for applying round elimination has been to prove lower bounds for some \emph{concrete, fixed} problem such as maximal matching or Lov\'asz local lemma~\cite{brandt_LLL_lower_bound,brandt19automatic_speedup_theorem,balliu2019LB,balliu2020ruling}.
We provide a novel application of round elimination by showing that, perhaps surprisingly, it can also be used to prove gap results, which are results that reason about \emph{all} LCLs on a given graph class. 
More precisely, we show that with the tool of round elimination at hand, there is an elegant way to prove Theorem~\ref{thm:main_informal}, which roughly proceeds as follows.

We start with any problem $\Pi$ for which there exists a randomized algorithm $\fA$ that solves $\Pi$ in $T(n) = o(\log^* n)$ rounds, with probability $1 - 1/\poly(n)$. We fix some sufficiently large number $n_0$ of nodes, and apply bullet point (1) $T = T(n_0)$ times to get a $0$-round algorithm $\fA^{(T)}$ for a certain problem $\Pi^{(T)}$. By analyzing the development of the (local) failure probabilites of the algorithms appearing during the $T$ applications of (1), we can show that algorithm $\fA^{(T)}$ still has a large probability of success, and the fact that $\fA^{(T)}$ is a $0$-round algorithm enables us to infer that $\Pi^{(T)}$ is in fact so easy that it can be solved with a \emph{deterministic} $0$-round algorithm. 
Finally, we apply bullet point (2) $T$ times to obtain a deterministic $T$-round algorithm for the original problem $\Pi$.
Due to fixing the number of nodes to $n_0$, the obtained $T$-round algorithm is only guaranteed to produce a correct output on $n_0$-node trees; however, due to the nature of $0$-round algorithms and the precise definition of the round elimination process (which are both independent of the number of nodes of the input graph), the obtained algorithm can be shown to also work on trees with an arbitrary number of nodes, with precisely the same, constant runtime $ T(n_0) = O(1)$.

Unfortunately, the known approach for analyzing the change of (local) failure probability in bullet point (1) considers only the restricted setting of regular graphs and LCLs without inputs\footnote{We say that an LCL is an \emph{LCL without inputs} if the correctness of a solution does not depend on input labels in the graph (such as lists in a list coloring problem). In the general setting, an LCL allows the correctness to depend on input labels (though we might emphasize this by using the term \emph{LCL with inputs}).} (which usually suffices when proving a lower bound for a concrete LCL).
One of our technical contributions is to develop an extension that also works in the general setting of irregular graphs and LCLs with inputs, which might be of independent interest.

\paragraph*{Further related work}
Previous to our work, the gap result of \cref{thm:main_informal} was known for a subclass of LCLs, called homogeneous LCLs \cite{balliu2019hardness_homogeneous}.
Roughly speaking, problems in this class require the output of a node $u$ to be correct only if the part of the tree around $u$ is a perfect $\Delta$-regular tree without any inputs.
In contrast, our argument works in full generality, i.e., the degrees of nodes can differ and there can be inputs in the graph. Also, we believe that our argument is substantially simpler conceptually.

A recent work~\cite{balliurooted21} gave a complete classification of possible complexities of LCLs on \emph{rooted regular} trees, showing that each LCL on such trees has a complexity of $O(1)$, $\Theta(\log^* n)$, $\Theta(\log n)$, or $\Theta(n^{1/k})$ for some positive integer $k$ (and all of these complexity classes are nonempty).
Moreover, the complexity of each LCL is independent of whether randomization is allowed and whether the \local or the \congest model\footnote{The \congest model differs from the \local model in that it only allows messages of size $O(\log n)$ bits.} is considered.
We note that their approach relies heavily on the provided orientation and, to the best of our knowledge, does not generalize to unrooted trees.
We will discuss the decidability results from this work in Section~\ref{sec:gendec}.

Even more recently, a paper \cite{balliusmall21} showed that the asymptotic complexity of any LCL problem on trees is the same in the \local and the \congest model.
This implies that the complexity landscape of LCLs on trees in \congest is precisely the same as in \local, and in particular extends our gap between $\omega(1)$ and $o(\log^* n)$ to the \congest model.
They also show that their result does not extend to general graphs by explicitly providing an LCL with different complexities in \local and \congest.

\subsection{Speedup in the \volume Model}

Recently, Rosenbaum and Suomela\cite{Rosenbaum2020} initiated the study of the complexity landscape of locally checkable problems in the \volume model on general constant-degree graphs, as was done before in the \local model. 
In the \volume model, a node $v$ can adaptively probe its local neighborhood in order to determine its local output.
In comparison to the \local model, $v$ does not learn its entire local neighborhood. Instead, it can only explore parts of it before computing its output.
The \volume model is very similar (and, in the complexity regime we consider, identical, due to the work of \cite{goos_nonlocal_probes_16}) to the well-studied \lca model \cite{alon2012space,rubinfeld2011fast}. 
The basic landscape of LCLs in the \volume model following from \cite{Rosenbaum2020,brandt_grunau_rozhon2021LLL_in_LCA} is summarized in the bottom right part of \cref{fig:big_picture}. Note that while their work focuses on giving examples of problems of high complexities, such as $\Theta(n^{1/k})$ for $k\in \mathbb{N}$, we settle how the landscape looks for the class of symmetry breaking problems. 

\begin{theorem}[Informal version of \cref{thm:volume_speedup}]
\label{thm:volume_speedup_informal}
If the deterministic or randomized \volume complexity of an LCL is $o(\log^* n)$, it is, in fact, $O(1)$. 
\end{theorem}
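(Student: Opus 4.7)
The plan is to adapt the Ramsey-theoretic argument of Naor and Stockmeyer~\cite{naorstockmeyer} (and its extension by Chang and Pettie~\cite{chang2017time_hierarchy}) to the \volume model. The key quantitative observation is that a \volume-$T$ algorithm probes only $T$ other nodes, whereas a \local-$T$ algorithm sees the full radius-$T$ neighborhood, which can contain $\Delta^T$ nodes. The natural Ramsey threshold depends on how many IDs a node's output can simultaneously depend on; passing from $\Delta^T$ to $T$ shifts this threshold from $\log \log^* n$ (which yielded the Chang--Pettie \local speedup) all the way up to $\log^* n$, which is exactly the gap we want to prove.

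First I would reduce the randomized case to the deterministic case by a standard union-bound derandomization: any individual probed substructure involves only $T+1 = o(\log^*n)$ nodes, so there are subpolynomially many possible substructures, and a random string of polynomial length can be fixed to make the algorithm succeed on all of them simultaneously. This produces a deterministic \volume-$T$ algorithm with the same asymptotic complexity. (A cleaner alternative is to simply run the Ramsey argument below in the joint space of IDs and random coins.)

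Next, I view the deterministic algorithm $\fA$ as an adaptive decision tree of depth $T$, where each internal node issues a probe and each leaf outputs a label. The output at a node $v$ is then a function of $v$'s own (ID, input) together with a sequence of at most $T$ query responses; each response carries a $\Theta(\log n)$-bit ID and a constant amount of additional information (input label, incident edges). Applying Ramsey's theorem for $(T+1)$-uniform hypergraphs over the ID space $[n^{O(1)}]$ with a constant number of colors, I obtain a subset $S$ of IDs, of size $\omega(1)$, on which $\fA$'s entire behavior, both its probe trajectory and its output, depends only on the order type of the IDs encountered, not on their concrete values. The Ramsey number here is a tower of height $O(T)$, and this fits inside $n^{O(1)}$ precisely because $T = o(\log^* n)$.

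Finally I would argue that an order-invariant \volume-$T$ algorithm can be compressed to an $O(1)$-volume algorithm. Since on $S$-labeled graphs the algorithm's output depends only on the order type of a constant number of adjacent identifiers and inputs, a node can read off enough information to simulate $\fA$ by probing a constant number of neighbors and consulting a lookup table; to handle arbitrary ID assignments (not merely those drawn from $S$), one uses the standard trick of replacing IDs by their local ranks, which preserves correctness by order-invariance. The main obstacle is the interaction between Ramsey and adaptivity: because which nodes get probed can itself depend on the IDs seen so far, one has to apply Ramsey either iteratively, level-by-level down the depth-$T$ decision tree, or in one shot to ``probe-history'' tuples. Either way the tower height grows by at most a constant factor, so the $o(\log^* n)$ budget still suffices, and the argument goes through uniformly for all bounded-degree graphs, not just trees.
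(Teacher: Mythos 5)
Your high-level plan matches the paper's proof: reduce to the deterministic case, run the Naor--Stockmeyer/Chang--Pettie Ramsey argument in the \volume model by coloring subsets of IDs according to the algorithm's behavior on \emph{probe-history tuples} (your ``one shot'' variant is exactly the paper's Lemma~\ref{lem:order_invariant_over_set}, which colors $(T(n)+1)$-subsets $X$ of IDs by the function $f_X$ describing the whole adaptive trajectory when the abstract probe history is relabeled into $X$), and then compress the resulting order-invariant algorithm to constant probe complexity. You also correctly identify the $T$-versus-$\Delta^T$ contrast as the reason the threshold moves from $\log\log^* n$ to $\log^* n$, and correctly flag adaptivity as the main technical obstacle.

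Two corrections, one minor and one a genuine gap. Minor: the Ramsey coloring does \emph{not} use a constant number of colors; the paper counts roughly $\bigl(T(n)^{O(T(n))}\bigr)^{T(n)^{O(T(n))}}$ colors, which is fine only because the hypergraph Ramsey bound $\log^* R(p,m,c) = p + \log^* m + \log^* c + O(1)$ makes the uniformity $p = T(n)+1$ the dominant term, with $\log^* c = o(T(n))$. The genuine gap is in your final compression step. The Ramsey set $S_n$ has size $(T(n)+1)\Delta^{r+1}$, which is \emph{not} constant, so the claim that ``the output depends only on the order type of a constant number of adjacent identifiers and inputs'' and can therefore be simulated ``by probing a constant number of neighbors and consulting a lookup table'' is false as stated: after the Ramsey step you still have a $T(n) = o(\log^* n)$-probe algorithm, just an order-invariant one. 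The correct compression (the paper's Theorem~\ref{thm:speedup_order_invariant}) is a separate fooling argument: pick $n_0$ large enough that $\Delta^{r+1}(T(n_0)+1) \le n_0/\Delta$, and run the order-invariant algorithm with $n_0$ hard-coded as the number of nodes. If this ever erred in some $r$-hop neighborhood of an $n$-node graph, order-invariance plus the bounded probe volume would let you embed the entire ``visible'' part of that execution into an $n_0$-node graph, where the original algorithm would already have to fail, a contradiction; hence the algorithm with $n_0$ hard-coded is correct on all $n$ and has constant probe complexity $T(n_0)$.
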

We note that together with a result of \cite{brandt_grunau_rozhon2021LLL_in_LCA}, \cref{thm:volume_speedup_informal} implies that the only deterministic \volume complexities for LCLs when identifiers can be from an exponential range are $\Theta(1), \Theta(\log^* n)$ and $\Theta(n)$.
We find it interesting that the \volume complexity landscape is in this regard substantially cleaner than the \local complexity landscape that contains complexities between $\Theta(\log\log^* n)$ and $\Theta(\log^* n)$, while one would, a priori, assume that the usage of a finer measure in the  \volume model will lead to a subtler and more complicated landscape. The reason is simple and sketched next. 

\paragraph*{Our method in a nutshell}
We adapt the Ramsey-theoretical argument of \cite{naorstockmeyer}, which was improved by \cite{chang2017time_hierarchy}, to the \volume model. 
We observe that the heart of their argument is essentially a \volume model argument that shows that algorithms with $o(\log^* n)$ \volume complexity can be made order-invariant and, hence, they can be sped up to complexity $O(1)$. 
The \volume nature of the argument is exactly the reason why the original argument applied to the \local model gives only an $o(\log\log^*n)$ speedup in general graphs and an $o(\sqrt{\log^* n})$ speedup in two-dimensional grids as this is the local complexity that implies $o(\log^* n)$ volume. 
The main conceptual difference between our proof and the original approach \cite{naorstockmeyer,chang2017time_hierarchy} is that \volume model algorithms can perform \emph{adaptive} probes unlike \local algorithms. 

\subsection{Speedup in Oriented Grids}

Our third result completes the \local complexity landscape for LCLs on oriented grids by proving that there are no local complexities between $\omega(1)$ and $o(\log^* n)$.

In an oriented grid, all the edges are oriented in a consistent matter. 
Moreoever, each edge is labeled with a value from $[d]$, indicating the dimension the edge corresponds to. 

Oriented grids may be a bit too special model to consider from the perspective of distributed computing. 
However, in the related fields where ideas from local algorithms can be applied, such the study of factors of iid solutions\cite{holroyd_schramm_wilson2017,holroydliggett2015finitely_dependent_coloring,Spinka,grebik_rozhon2021toasts_and_tails} or constructions in descriptive combinatorics including the famous circle squaring problem \cite{laczkovich,OlegCircle,Circle,JordanCircle,conley_grebik_pikhurko2020divisibility_of_spheres}, oriented grids are often the graph class under consideration. 

\begin{theorem}
\label{thm:main_grids_informal}
[Informal version of \cref{thm:main_grids}]
Let $d$ be a fixed positive constant. Any LCL on a $d$-dimensional oriented grid with local complexity $o(\log^* n)$ has, in fact, local complexity $O(1)$. 
\end{theorem}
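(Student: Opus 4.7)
The plan is to adapt the Ramsey-theoretic speedup underlying \cref{thm:volume_speedup_informal} to the \local model on oriented grids, exploiting the rigid, vertex-transitive structure of such grids. Because the $T$-hop neighborhood of every node is the same combinatorial object (a ball of radius $T$ in the grid), a $T$-round deterministic algorithm $\fA$ for an LCL $\Pi$ is fully described by a single function $f$ mapping identifier assignments on a ball of size $\Theta(T^d)$ to labels in $\sout$. (The randomized case can be handled in a black-box fashion by standard derandomization of low-complexity algorithms, or else by tracking local failure probabilities as in the volume argument.)

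The first step is to use a hypergraph Ramsey argument to make $f$ \emph{order-invariant} on a large subset $S$ of the identifier space, i.e., to ensure that $f$'s output depends only on the relative order of the identifiers in the ball. Applying Ramsey directly to the $\Theta(T^d)$-sized ordered tuples requires $T^d = O(\log^* n)$, which yields only Suomela's speedup from $o(\sqrt[d]{\log^* n})$ to $O(1)$. To reach the full $o(\log^* n)$ threshold, I would iterate Ramsey axis by axis: at step $i \in \{1, \ldots, d\}$, invoke Ramsey on $O(T)$-sized tuples lying along the $i$-th coordinate direction to obtain $S_i \subseteq S_{i-1}$ on which $f$ is order-invariant along all axes processed so far. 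Each individual step only needs the Ramsey number for $O(T)$-tuples, which is bounded by $\tow(O(T)) \le n$ whenever $T = o(\log^* n)$, and iterating $d = O(1)$ times leaves the final subset $S_d$ still very large.

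Once $f$ is fully order-invariant on $S_d$, the remaining step is to conclude that $\Pi$ admits a deterministic $O(1)$-round algorithm. Following the Naor--Stockmeyer / Chang--Pettie template already used for $d = 1$, one feeds $\fA$ an adversarial identifier assignment from $S_d$ -- for instance, drawn from a monotone or periodic sequence along each coordinate -- so that the algorithm's output becomes forced by the grid structure alone; combining this with the local checkability of $\Pi$ produces a constant-round algorithm that works on arbitrary inputs. The main obstacle I foresee is the compatibility of the iterated Ramsey steps, since each application must preserve the order-invariance established in previous steps, which requires carefully coupling the color classes across iterations and keeping track of how the output partition refines. A secondary difficulty is lifting the one-dimensional collapse argument to a genuinely $d$-dimensional setting, where the adversarial identifier assignment must simultaneously defeat the algorithm along all $d$ coordinate axes while remaining compatible with the LCL's (possibly input-dependent) local constraints.
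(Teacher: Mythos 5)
Your diagnosis of the obstacle is exactly right: applying Ramsey to the full set of per-node identifiers in a radius-$T$ ball requires uniformity $p = \Theta(T^d)$, and since $\log^* R(p,m,c) = p + \log^* m + \log^* c + O(1)$ this only works for $T = O(\sqrt[d]{\log^* n})$. But the proposed fix --- iterating Ramsey ``axis by axis'' on $O(T)$-sized tuples of node identifiers lying along each coordinate direction --- has a genuine gap. Order-invariance of the algorithm's decision function is a statement about the \emph{joint} relative order of all $\Theta(T^d)$ identifiers in the ball; two nodes that do not lie on a common axis-parallel line still have identifiers whose relative order the function may depend on, and no sequence of per-axis Ramsey applications on collinear tuples ever constrains those comparisons. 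Moreover, to make any single axis step meaningful you would have to color an $O(T)$-subset by the behavior of $f$ as a function of the identifiers of all the \emph{off-axis} nodes, which is not a finite combinatorial object you can quantify over without reintroducing the $\Theta(T^d)$ dependence. As stated, the iteration does not converge to full order-invariance, and I do not see how to repair it while keeping per-node identifiers.

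The paper sidesteps this by changing the identifier structure rather than the Ramsey argument: it passes to the \prodlocal model, in which each node carries $d$ \emph{coordinate} identifiers and two nodes sharing their $i$-th coordinate share their $i$-th identifier. A \local algorithm yields a \prodlocal algorithm (combine the $d$ coordinate identifiers into one globally unique identifier from a polynomial range), and in \prodlocal a radius-$T$ ball contains only $d(2T+1) = O(T)$ distinct identifier values in total, so a \emph{single} Ramsey application with uniformity $p = d(2T+1)$ already gives full order-invariance for $T = o(\log^* n)$. The conversion back also becomes clean: the orientation of the grid supplies a canonical local order on coordinate identifiers, so the $O(1)$-round order-invariant \prodlocal algorithm obtained from the standard order-invariant speedup runs directly in \local with fabricated, consistently ordered identifiers --- no adversarial or periodic identifier assignment is needed. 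If you want to salvage your outline, the move to make is not to iterate Ramsey but to replace per-node identifiers by per-coordinate ones before applying it once.
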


The above theorem together with the work of \cite{chang2016exp_separation,chang2017time_hierarchy,brandt_grids} implies the following characterization of possible local complexities on oriented grids. 

\begin{corollary}
Let $d$ be a fixed positive constant. Then the deterministic/randomized complexity of an LCL problem on a $d$-dimensional oriented grid is one of the following:
\begin{enumerate}
    \item $O(1)$, 
    \item $\Theta(\log^* n)$, 
    \item $\Theta(\sqrt[d]{n})$. 
\end{enumerate}
\end{corollary}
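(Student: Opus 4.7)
The plan is to adapt the Ramsey-theoretic machinery underlying \cref{thm:volume_speedup_informal} to the \local model on oriented grids. At the core of that proof is the fact that any algorithm accessing at most $o(\log^* n)$ IDs from $[n]$ can be made order-invariant and thereby collapsed to complexity $O(1)$. The obstacle in our setting is that a $T$-round \local algorithm on a $d$-dimensional oriented grid reads \emph{all} $\Theta(T^d)$ IDs of its $T$-ball, so directly invoking the volume bound only gives $T = o(\sqrt[d]{\log^* n})$, recovering Suomela's result rather than the desired $o(\log^* n)$.

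To improve upon this, the key observation is that the oriented grid has a very rigid, axis-aligned structure: the $T$-ball is canonically isomorphic to a fixed subset of $\mathbb{Z}^d$, and the $d$ coordinate axes are globally well-defined. Concretely, I would fix a sufficiently large $n_0$ and set $T = T(n_0) = o(\log^* n_0)$. After a standard derandomization step analogous to the one used for \cref{thm:main_informal}, the algorithm $\fA$ becomes a deterministic function $f$ from ID-labelings of the canonical $T$-ball to output labels, which we can view as a coloring of $\Theta(T^d)$-tuples from $[n_0]$ by a constant-size alphabet.

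The crucial step is to apply Ramsey's theorem \emph{axis-by-axis} rather than once to the whole ball. Using the consistent orientation, we single out a canonical $1$-dimensional spine in the first coordinate direction and apply a $1$-dimensional Ramsey statement on $T$-tuples, treating the remaining IDs as parameters; this costs only a single $\log^*$ in the universe size. Iterating along each of the $d$ axes---each step selecting a new spine and restricting IDs to the subset produced by the previous step---leaves $f$ jointly order-invariant on the entire $T$-ball at a total cost of only $d = O(1)$ nested $\log^*$'s, which is exactly what is needed to bring the universe-size requirement down to $T = o(\log^* n_0)$. Once $f$ is order-invariant on a sufficiently large subset of $[n_0]$, the standard Naor--Stockmeyer-style collapse (as used in \cref{thm:volume_speedup_informal}) produces a canonical $O(1)$-round algorithm; the passage from ``correct on $n_0$-vertex grids'' to ``correct on all oriented grids'' is handled exactly as in the tree proof of \cref{thm:main_informal}.

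The main obstacle will be verifying that the iterated axis-wise Ramsey argument really delivers \emph{joint} order-invariance on the full $T$-ball, rather than merely per-axis order-invariance whose combination might still be sensitive to the absolute values of IDs. The consistent global orientation of the grid is essential here: it canonically identifies tuple coordinates with grid positions and eliminates the automorphism-induced symmetries that would otherwise obstruct composing the individual Ramsey statements into a single joint one.
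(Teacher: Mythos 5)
The corollary is derived in the paper by combining Theorem~5.1 (\cref{thm:main_grids}, the $\omega(1)$--$o(\log^* n)$ gap on oriented grids) with prior work; so the substance of your proposal is really a proof attempt of that gap theorem, and it takes a genuinely different route from the paper's. Unfortunately, there is a gap in your argument that I do not believe can be patched without importing the paper's key idea.

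Your axis-by-axis Ramsey step does not do what you claim. In the \local model, the $T$-ball of a node in a $d$-dimensional grid contains $\Theta(T^d)$ nodes, each carrying its \emph{own arbitrary identifier}; the output is a function of all of them. When you ``single out a spine of $O(T)$ IDs and apply $1$-dimensional Ramsey, treating the remaining IDs as parameters,'' you run into the following problem: in a hypergraph Ramsey argument, the color assigned to a $p$-subset must be a function of that subset alone. To encode ``the behaviour of the algorithm over all settings of the $\Theta(T^d)-O(T)$ parameter IDs'' into the color, the number of colors $c$ would have to grow with the size of the domain from which the parameter IDs are drawn, i.e., with $n$. Then $\log^* c$ is no longer negligible and the bound $\log^* R(p,m,c) = p + \log^* m + \log^* c + O(1)$ gives nothing. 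If instead you try to iterate Ramsey $d$ times, each with $p_i = O(T)$, you simply never cover the $\Theta(T^d)-O(dT)$ off-spine positions; a $d$-fold product/Gallai--Witt argument would cover them, but then the effective tuple size is back to $\Theta(T^d)$ and you land at $T=o(\sqrt[d]{\log^* n})$, which is exactly Suomela's earlier bound you are trying to beat. The rigidity of the grid's coordinate system does not help here because the \local identifiers themselves are unstructured: knowing that a node sits at position $(x_1,\ldots,x_d)$ tells you nothing about its ID.

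The paper escapes this by \emph{changing the identifier space}, not the Ramsey machinery. It introduces the \prodlocal model (\cref{fact:prod}) in which every node carries a $d$-tuple of coordinate-identifiers, and two nodes with the same $i$-th grid coordinate share the same $i$-th identifier. In that model a $T$-ball contains only $p = d\,(2T+1) = O(dT)$ distinct identifier values (one per row per axis), so a \emph{single} Ramsey application (\cref{prop:ramsey}) with $p = O(T)$ already gives order-invariance at cost $T = o(\log^* n)$. The reduction \local $\Rightarrow$ \prodlocal is the easy direction (pack the $d$ coordinate-IDs into one polynomial-range ID and simulate), and the orientation is then used a second time in \cref{prop:final} to eliminate identifiers altogether. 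Your proposal replicates the last step but is missing the \prodlocal restructuring, which is exactly the step that brings $p$ down from $\Theta(T^d)$ to $O(T)$; without it the Ramsey bound cannot be improved.
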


Sometimes one considers the case of unoriented grids, that is, just the underlying graph of an oriented grid, without any additional labeling (in the study of randomized processes and descriptive combinatorics, the grid is usually oriented). 
Our result unfortunately does not generalize to this case, as those graphs do not locally induce an implicit order on vertices. 
We conjecture, however, that the local complexity landscape remains the same. 

\begin{conjecture}
On unoriented grids, all LCL problems with complexity $o(\log^* n)$ have local complexity $O(1)$. 
\end{conjecture}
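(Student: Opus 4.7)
The plan is to follow the Ramsey-style approach used in this paper for oriented grids and the volume model, but to replace the role of the orientation with a symmetry-based reduction. First, I would apply the Naor--Stockmeyer / Chang--Pettie machinery: on sufficiently large $d$-dimensional unoriented grids, any randomized $o(\log^* n)$-round algorithm with polynomial identifier range can be converted to a deterministic order-invariant algorithm $A$ of the same round complexity, whose output at a node $v$ depends only on the isomorphism type of the $T$-ball around $v$ together with the rank pattern of the identifiers in that ball.

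Given $A$, my first attempt would be a reduction to the oriented case. Since $A$ does not read orientation labels, the same procedure also solves $\Pi$ when the same underlying grid is equipped with an arbitrary orientation. By \cref{thm:main_grids_informal}, $\Pi$ then admits a constant-round algorithm $B$ on oriented grids. The main task is to symmetrize $B$ back into an $O(1)$-round algorithm on unoriented grids. A natural attempt is: at each node, enumerate all $2^d d!$ possible local orientations of the $O(1)$-radius ball, run $B$ on each, and combine the resulting outputs via a tie-breaking rule that depends only on the isomorphism type of the ball and the rank pattern of identifiers. The hope is that the LCL constraints of $\Pi$ are preserved because locally adjacent nodes see the same ambiguous structure and apply the same canonical rule.

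The main obstacle is precisely this symmetrization step. The algorithm $B$ may exploit the orientation in ways that are incompatible with voting or averaging, and two adjacent nodes may only partially agree on their locally enumerated orientations, so a naive combination can easily violate the LCL constraints across an edge. One strategy to circumvent this is to strengthen \cref{thm:main_grids_informal} so that, whenever the LCL $\Pi$ does not itself reference coordinate labels, the $O(1)$-round algorithm on oriented grids can be chosen to be invariant under the group of signed coordinate permutations; this would render the combination rule essentially trivial. An alternative, more ambitious route is to bypass the oriented-grid theorem altogether and develop round elimination directly on unoriented grids, mirroring the tree argument of this paper. The key technical hurdle for that route is to identify a class of intermediate LCLs that is stable under a round-elimination operator designed for unoriented grids, where, unlike on trees, the local neighborhoods are rigid geometric objects whose symmetries severely restrict how labels can be propagated and relabeled. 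I expect either path to require substantial additional ideas beyond what suffices for trees or oriented grids, which is why the statement is still only a conjecture.
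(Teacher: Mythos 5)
The statement you are attempting to prove is explicitly left as a conjecture in the paper; the authors state that their oriented-grid argument ``does not generalize to this case, as those graphs do not locally induce an implicit order on vertices.'' So there is no proof in the paper to compare against, and the relevant question is whether your sketch closes the gap. It does not, and the flaw is concentrated in your very first step.

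You claim that the Naor--Stockmeyer/Chang--Pettie Ramsey machinery converts any $o(\log^* n)$-round algorithm on unoriented $d$-dimensional grids into an order-invariant algorithm of the same complexity. This is false, and the quantitative reason is precisely why the paper needs the orientation. In the paper's Proposition~\ref{prop:ramsey}, the Ramsey argument is run in the \prodlocal model, where each node carries $d$ \emph{coordinate} identifiers rather than one node identifier; a $t$-ball therefore involves only $p = d(2t+1)$ identifiers, and the constraint $\log^* n \gtrsim p$ permits $t = o(\log^* n)$. Without an orientation you cannot factor the identifier space along dimensions, so each of the $\Theta(t^d)$ nodes in a $t$-ball contributes its own identifier, $p = \Theta(t^d)$, and the Ramsey bound $\log^* R(p,m,c) = p + \log^* m + \log^* c + O(1)$ only permits $t = O((\log^* n)^{1/d})$. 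This recovers exactly Suomela's $o(\sqrt[d]{\log^* n})$ speedup that the paper is improving upon in the oriented case, and it is the content of the conjecture that this barrier can be overcome. Note also that if your step one were available, the remainder of your argument would be superfluous: an order-invariant $o(\log^* n)$-round \local algorithm on any bounded-degree graph class is already sped up to $O(1)$ by the generic Theorem~\ref{thm:speedup_order_invariant}, with no detour through oriented grids.

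The second half of your proposal---solve on an arbitrarily chosen orientation via Theorem~\ref{thm:main_grids_informal}, then symmetrize over the $2^d d!$ signed coordinate permutations---founders for the reason you yourself identify: adjacent nodes need not agree on a canonical orientation, and there is no general mechanism to combine the $2^d d!$ outputs of $B$ into a single labeling that satisfies the constraints of $\Pi$ across every edge. The ``strengthen Theorem~\ref{thm:main_grids_informal} to produce a signed-permutation-invariant $B$'' route is essentially a restatement of the conjecture rather than a reduction of it, since an invariant constant-round algorithm on oriented grids immediately yields one on unoriented grids. Your honest concluding assessment---that either route requires substantial new ideas---is correct; the proposal as written does not constitute a proof.
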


\subsection{Decidability}\label{sec:gendec}
While understanding the complexity landscapes of LCL problems in different models and on different graph classes is certainly interesting in its own right, it also constitutes the first step towards a more ambitious goal---that of \emph{deciding} for a given LCL problem how fast it can be solved, i.e., into which complexity class it falls.
Unfortunately, it is known that this is not possible on general graphs: a result by Naor and Stockmeyer~\cite{naorstockmeyer} states that it is undecidable whether a given LCL can be solved in constant time.
More specifically, the work of Naor and Stockmeyer implies that, on $d$-dimensional oriented grids, it is undecidable whether a given LCL has complexity $O(1)$ or $\Theta(\sqrt[d]{n})$.
For the case of toroidal oriented grids, i.e., grids that wrap around (in the natural sense), it is undecidable whether a given LCL has complexity $O(\log^* n)$ or is a global problem \cite{brandt_grids}. 

However, if we restrict attention to trees, no undecidability results are known, i.e., it is entirely possible that one day a procedure is developed that takes as input an arbitrary LCL and returns its complexity on trees.
While it turned out that decidability questions for LCLs are quite complex already in very simple settings, considerable effort has gone into making progress towards this exciting goal.
On the positive side, it is known that in paths and cycles the only \local complexities are $O(1)$, $\Theta(\log^* n)$, and $\Theta(n)$, and it can be decided in polynomial time into which class a given LCL problem falls, provided that the LCL does not have inputs~\cite{naorstockmeyer,brandt_grids,chang2017time_hierarchy,changauto21}.
When considering general LCLs (i.e., LCLs with inputs) on paths and cycles, it remains decidable which asymptotic complexity a given LCL has, but the question becomes PSPACE-hard~\cite{balliu2019distributed}.
The PSPACE-hardness result extends to the case of LCLs without inputs on constant-degree trees, which was subsequently improved by Chang~\cite{chang2020n1k_speedups} who showed that in this setting the question becomes EXPTIME-hard.

Nevertheless, there are also good news regarding decidability on trees beyond cycles and paths.
For instance, the asymptotic complexity of problems from a natural subclass of LCLs, called \emph{binary labeling problems} can be decided efficiently~\cite{balliu20binary}.
Moreover, in their work providing the complexity classification of LCLs on regular rooted trees~\cite{balliurooted21}, the authors showed that it is decidable into which of the four complexity classes $O(1)$, $O(\log^* n)$, $\Theta(\log n)$, and $n^{\Theta(1)}$ a given LCL without inputs falls.
The decidability is achieved by defining so-called ``certificates'' for $O(\log n)$-round, $O(\log^* n)$-round, and constant-round solvability---for each $T$ from $\{ O(\log n), O(\log^* n), O(1) \}$, an LCL problem is $T$-round solvable if and only if there exists a certificate for $T$-round solvability for the given LCL problem, and the existence of such a certificate is decidable. Unfortunately, the definitions of these certificates rely heavily on the fact that the input tree is rooted (in particular on the availability of a canonical direction in the tree, which is given by the parent-child relation); it is entirely unclear how such an approach could be extended to unrooted trees.

On unrooted trees, Chang and Pettie~\cite{chang2017time_hierarchy} showed that it is decidable whether a given LCL can be solved in logarithmic time or requires polynomial time, and Chang~\cite{chang2020n1k_speedups} showed that in the latter case (i.e., if the given LCL has complexity $\Theta(n^{1/k})$ for some positive integer $k$) the exact exponent is decidable.
However, decidability on (unrooted) trees below $\Theta(\log n)$ is wide open; in fact, even the following simple question is a major open problem in the complexity theory of LCLs.

\begin{question}\label{q:constant}
    Is it decidable whether a given LCL can be solved in constant time on (constant-degree) trees?
\end{question}

Note that constant-time-solvability of LCLs on trees is semidecidable as for each constant $c$ and fixed LCL $\Pi$, there are only constantly many different candidate $c$-round \local algorithms for solving $\Pi$---the difficult direction is to prove (or disprove) semidecidability for the impossibility of constant-time-solvability.
While our proof of \cref{thm:main_informal} does not settle Question~\ref{q:constant}, it may be a first step towards a resolution as it reduces the problem to proving (or disproving) semidecidability of an $\Omega(\log^* n)$-round lower bound for the given LCL.
Note that this avoids consideration of all the ``messy'' complexities shown to exist between $\Theta(\log \log^* n)$ and $\Theta(\log^* n)$ in general (constant-degree) graphs (such as $2^{\Theta(\log^{\alpha} \log^* n)}$ for any positive rational number $\alpha \leq 1$)~\cite{balliu2018new_classes-loglog*-log*}.

\subsection{Organization of the Paper}
In \cref{sec:preliminaries} we formally define the settings that we work with, and prove basic technical results.
In \cref{sec:main_proof} we prove the speedup theorem for trees in the \local model, i.e., \cref{thm:main_informal}. 
In \cref{sec:volume} we prove the speedup theorem in the \volume model, i.e., \cref{thm:volume_speedup_informal}. 
In \cref{sec:grids} we prove the speedup theorem for oriented grids in the \local model, i.e., \cref{thm:main_grids_informal}.

\section{Preliminaries}
\label{sec:preliminaries}

We use classical graph-theoretical notation, e.g. we write $G=(V,E)$ for an unoriented graph. 
A \emph{half-edge} is a pair $h = (v,e)$, where $v \in V$, and $e \in E$ is an edge incident to $v$.
We denote the set of half-edges of $G$ by $H = H(G)$, i.e., $H = \{ (v,e) \mid v \in e, v \in V, e\in E\}$.
Furthermore, for every vertex $v'$, we denote the set of half-edges $(v,e) \in H$ where $v = v'$ by $H[v']$, and for every edge $e'$, we denote the set of half-edges $(v,e) \in H$ where $e = e'$ by $H[e']$.
Often we assume that $G$ additionally carries a labeling of vertices or half-edges.
We use $B_G(u,r)$ to denote the ball of radius $r$ around a node $u$ in $G$ and we call it the \emph{$r$-hop neighborhood of $u$}. 
When talking about half-edges in $B_G(u,r)$, we mean all half-edges $(v,e)$ such that $v \in B_G(u, r)$.
For example, $B_G(u, 0)$ contains all half-edges incident to $u$.

The reader should keep in mind that our setting is graphs of maximum degree bounded by some constant $\Delta$.
This is sometimes explicitly stated (or it is implied by the constraints) but most of the time it is tacitly assumed.
Of special interest to us will be the class of all trees with maximum degree at most $\Delta$, which we denote by $\fT$.
Similarly, we denote the class of all forests with maximum degree at most $\Delta$ by $\fF$.
Moreover, for any positive integer $n$, any set $N$ of positive integers, and any $\fG \in \{ \fF, \fT \}$, we will use $\fG_n$, resp.\ $\fG_N$, to denote the class of members of $\fG$ with $n$ nodes, resp.\ with a number of nodes that is contained in $N$.


\subsection{\local Model and LCL Problems}
\label{subsec:local}

In this section, we define our main model of computation and discuss the problem class considered in this work.
Our main model of computation is the \local model~\cite{linial92}.
Although it is often convenient to think about \local algorithms as message-passing procedures, it will be simpler to work with the following equivalent definition.

\begin{definition}[\local model]
\label{def:local_model}
The input to a problem in the \local model is an $n$-node graph $G$, for some positive integer $n$.
Each node of the graph is considered as a computational entity and equipped with a globally unique identifier, i.e., positive integer from a polynomial range (in the case of deterministic algorithms), or with a private random bit string (in the case of randomized algorithms).
Additionally, depending on the problem considered, other inputs might be stored at the nodes or half-edges.
In a $T$-round algorithm each node $v$ is aware of the number $n$ of nodes of the input graph\footnote{Our results work equally well in the setting where only some upper bound $n'$ on $n$ is given to the nodes; in that case, as usual, the complexities are functions of $n'$ instead of of $n$. We remark that our setting where the nodes are aware of the exact value of $n$ is in fact the more difficult one (for our approach) as evidenced by the issue discussed in Section~\ref{subsec:probseq}.} and of its $T$-hop neighborhood, i.e., of all nodes in distance at most $T$ from $v$, all edges that have at least one endpoint in distance at most $T-1$ from $v$, and all half-edges whose endpoint is in distance at most $T$ from $v$ (as well as all inputs stored therein).
Based on this information, $v$ has to decide on its output whose specification is given by the considered problem.
In other words, a $T$-round algorithm is simply a function (parameterized by $n$) from the the space of all possible (labeled) $T$-hop neighborhoods of a node to the space of outputs.
Which neighborhoods are ``possible'' is determined by the considered problem and the considered graph class.

For technical reasons, we will also assume that each graph comes with a \emph{port numbering}, i.e., each node $v$ has $\deg(v)$ \emph{ports} $1, \dots, \deg(v)$, and each edge incident to $v$ is connected to $v$ via a unique one of those ports.
In other words, the ports at a node provide a total order on the set of incident edges.
It is straightforward to verify that the addition of a port numbering does not change the computational power of the model (asymptotically) as each node can infer a total order on the set of incident edges from the unique identifiers of its neighbors (in the deterministic case) or from the random bits of its neighbors, with arbitrarily large success probability (in the randomized case).
\end{definition}



The class of problems considered in this work are \emph{LCL problems} (or \emph{LCLs}, for short), which were introduced by Naor and Stockmeyer~\cite{naorstockmeyer}.
In their seminal paper, Naor and Stockmeyer provided a definition for LCL problems where input and output labels were assigned to nodes, and remarked that a similar definition can be given for edge-labeling problems.
A modern definition that captures both kinds of LCL problems (and their combinations) assigns labels to \emph{half-edges} (instead of vertices or edges).
Before we can provide this definition, we need to define some required notions.

A \emph{half-edge labeling} of a graph $G$ (with labels from a set $\Sigma$) is a function $f \colon H(G) \to \Sigma$.
A \emph{$\Sigma_{\inn}$-$\Sigma_{\out}$-labeled graph} is a triple $(G, f_{\inn}, f_{\out})$ consisting of a graph $G$ and two half-edge labelings $f_{\inn} \colon H(G) \to \Sigma_{\inn}$ and $f_{\out} \colon H(G) \to \Sigma_{\out}$ of $G$.
We analogously define a \emph{$\Sigma_{\inn}$-labeled graph} by omitting $f_{\out}$.

We can now define an LCL problem as follows.
\begin{definition}[LCL problem]
\label{def:lcl_problem}
An LCL problem $\Pi$ is a quadruple $(\Sigma_{\inn}, \Sigma_{\out}, r, \fP)$ where $\Sigma_{\inn}$ and $\Sigma_{\out}$ are finite sets, $r$ is a positive integer, and $\fP$ is a finite collection of $\Sigma_{\inn}$-$\Sigma_{\out}$-labeled graphs.
A correct solution for an LCL problem $\Pi$ on a $\Sigma_{\inn}$-labeled graph $(G, f_{\inn})$ is given by a half-edge labeling $f_{\out} \colon H(G) \to \Sigma_{\out}$ such that, for every node $v \in V(G)$, the triple $(B_G(v,r), f'_{\inn}, f'_{\out})$ is isomorphic to a member of $\fP$, where $f'_{\inn}$ and $f'_{\out}$ are the restriction of $f_{\inn}$ and $f_{\out}$, respectively, to $B_G(v,r)$.
\end{definition}

Intuitively, the collection $\fP$ provides the constraints of the problem by specifying how a correct output looks \emph{locally}, depending on the respective local input.
From the definition of a correct solution for an LCL problem it follows that members of $\fP$ that have radius $> r$ can be ignored.
Also the finiteness of $\fP$ automatically implies that we are restricting ourselves to graphs of degree at most $\Delta$ for some constant $\Delta$.

The main tool in our proof of \cref{thm:main_informal}, the round elimination technique, applies (directly) only to a subclass of LCL problems: roughly speaking, it is required that the local correctness constraints specified by $\fP$ can be translated into node and edge constraints, i.e., correctness constraints that can be verified by looking at the label configurations on each edge and around each node.
The definition of this subclass of LCL problems is given in the following.
Note that, despite the complicated appearance, the definition is actually quite intuitive: essentially, in order to define the LCL problem, we simply specify a set of label configurations that are allowed on an edge, a set of label configurations that are allowed around a node, and an input-output label relation that specifies for each input label which output label is allowed at the same half-edge.

\begin{definition}[Node-edge-checkable LCL problem]
\label{def:node-edge-checkable_lcl_problem}
    A \emph{node-edge-checkable LCL} $\Pi$ is a quintuple $(\spinn, \spout, \noco, \edco, \gee)$, where $\spinn$ and $\spout$ are finite sets, $\edco$ is a collection of cardinality-$2$ multisets $\{\xB_1, \xB_2\}$ with $\xB_1, \xB_2 \in \spout$, $\noco = (\noco^1, \noco^2, \dots)$ consists of collections $\noco^i$ of cardinality-$i$ multisets $\{\xA_1, \dots, \xA_i\}$ with $\xA_1, \dots, \xA_i \in \spout$, and $\gee \colon \spinn \to 2^{\spout}$ is a function that assigns to each label from $\spinn$ a subset of the labels of $\spout$.
    A correct solution for a node-edge-checkable LCL $\Pi$ on a $\spinn$-labeled graph $(G, f_{\inn})$ is given by a half-edge labeling $f_{\out} \colon H(G) \to \spout$ such that
    \begin{enumerate}
        \item for every node $v \in V(G)$, the multiset consisting of the labels assigned by $f_{\out}$ to the half-edges in $H[v]$ is contained in $\noco^{\deg(v)}$,
        \item for every edge $e \in E(G)$, the multiset consisting of the labels assigned by $f_{\out}$ to the half-edges in $H[e]$ is contained in $\edco$, and
        \item for every half-edge $h \in H(G)$, the label $f_{\out}(h)$ is contained in the label set $\gee(f_{\inn}(h))$.
    \end{enumerate}
    We call $\noco$ the \emph{node constraint} of $\Pi$ and $\edco$ the \emph{edge constraint} of $\Pi$.
    Moreover, we call the elements $\{\xA_1, \dots, \xA_i\}$ of $\noco^i$ \emph{node configurations} and the elements $\{\xB_1, \xB_2\}$ of $\edco$ \emph{edge configurations} (of $\Pi$).
    In a \local algorithm solving a node-edge-checkable problem $\Pi$, each node is supposed to output a label for each incident half-edge such that the induced global half-edge labeling is a correct solution for $\Pi$.
\end{definition}

Even though the round elimination technique can be applied directly only to node-edge-checkable LCL problems, the results we obtain apply to all LCL problems.
The reason for this is that for each LCL problem $\Pi$ there exists a node-edge-checkable LCL problem $\Pi'$ such that the time complexities of $\Pi$ and $\Pi'$ differ only by an additive constant, as we show in Lemma~\ref{lem:LCL_and_nodeedgeLCL_are_same}.
This fact suffices to lift our results for node-edge-checkable LCL problems to general LCL problems: in particular, the existence of an LCL problem with time complexity in $\omega(1)$ and $o(\log^*n)$ would imply the existence of a node-edge-checkable LCL problem with the same complexity constraints, leading to a contradiction.

Before stating and proving Lemma~\ref{lem:LCL_and_nodeedgeLCL_are_same}, we formally define the local failure probability of an algorithm solving a node-edge-checkable LCL, and the complexity of an LCL problem.

\begin{definition}[Local failure probability]
Let $\Pi = (\spinn, \spout, \noco, \edco, \gee)$ be some node-edge-checkable LCL problem.
We say that a half-edge labeling $f_{\out} \colon H(G) \to \Sigma_{\out}$ is \emph{incorrect on some edge $e = \{u, v\}$} of graph $(G, f_{\inn})$ if
\begin{enumerate}
    \item\label{edgecase1} $\{f_{\out}((u,e)), f_{\out}((v,e))\} \notin \edco$, or
    \item\label{edgecase2} $f_{\out}((u,e)) \notin \gee(f_{\inn}((u,e)))$ or $f_{\out}((v,e)) \notin \gee(f_{\inn}((v,e)))$.
\end{enumerate}
Similarly, we say that $f_{\out}$ is \emph{incorrect at some node $v$} if
\begin{enumerate}
    \item\label{nodecase1} $\{f_{\out}(h)\}_{h \in H[v]} \notin \noco^{\deg(v)}$, or
    \item\label{nodecase2} $f_{\out}(h) \notin \gee(f_{\inn}(h))$ for some $h \in H[v]$.
\end{enumerate}
We say that an algorithm $\fA$ \emph{fails on some edge $e$, resp.\ at some node $v$}, if the output produced by $\fA$ is incorrect on $e$, resp.\ at $v$.
Furthermore, we say that a (randomized) algorithm $\fA$ has \emph{local failure probability} $p$ on some graph $G$ if $p$ is the smallest (real) number such that, for each edge $e$ and node $v$ in $G$, the probability that $\fA$ fails on $e$, resp.\ at $v$, is upper bounded by $p$.
Moreover, for each $n$, the local failure probability of $\fA$ on some class of $n$-node graphs is the maximum of the local failure probabilities of $\fA$ on the graphs in the class.
(In contrast, the definition of \emph{(global) failure probability} is as commonly used, i.e., we say that $\fA$ has (global) failure probability $p = p(n)$ if the (worst-case) probability that $\fA$ does not produce a correct solution for $\Pi$ is upper bounded by $p$ and $p$ is minimal under this constraint.)
\end{definition}

The \local complexity of a (node-edge-checkable or common) LCL is simply the minimum complexity of an algorithm $\fA$ that solves it on all graphs.

\begin{definition}[Complexity of an LCL problem]
\label{def:complexity_of_lcl_problem}
The \emph{determinstic (round) complexity} of an LCL $\Pi$ is the function $T \colon \N \rightarrow \N \cup \{ 0 \}$ satisfying that for each $n \in \N$, there exists a deterministic algorithm $\fA_n$ solving $\Pi$ in $T(n)$ rounds on all $n$-node graphs $G$ with each half-edge labeled with a label from $\Sigma_{\inn}$, but no deterministic algorithm solving $\Pi$ in $T(n) - 1$ rounds on this class of graphs.
The \emph{randomized (round) complexity} of an LCL $\Pi$ is defined analogously, where deterministic algorithms are replaced by randomized algorithms with a (global) failure probability of at most $1/n$.

%
\end{definition}

When we talk about the complexity of an LCL on trees, we further restrict the above definition to graphs that are trees (and similarly for other graph classes).

Now we are ready to state and prove the following lemma, which ensures that we can restrict attention to node-edge-checkable LCLs.

\begin{lemma}
\label{lem:LCL_and_nodeedgeLCL_are_same}
    For any LCL problem $\Pi$, there exists a node-edge-checkable LCL problem $\Pi'$ such that (in both the randomized and deterministic \local model) the complexities of $\Pi$ and $\Pi'$ on trees (and on forests) are asymptotically the same.
\end{lemma}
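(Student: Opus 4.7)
The plan is to construct $\Pi'$ by augmenting the output alphabet so that each half-edge carries, in addition to its original $\Sigma_{\out}$-label, a complete description of the $\Sigma_{\inn}$-$\Sigma_{\out}$-labeled $r$-hop neighborhood of its owning node. The node and edge constraints of $\Pi'$ will then force these local views to be mutually consistent, to be anchored to the actual inputs, and to each be isomorphic to a member of $\fP$, so that the projection onto $\Sigma_{\out}$ is automatically a valid solution to $\Pi$.

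Concretely, let $\fV$ denote the (finite) set of isomorphism classes of $\Sigma_{\inn}$-$\Sigma_{\out}$-labeled rooted graphs of radius at most $r$ and maximum degree at most $\Delta$, each equipped with a distinguished half-edge incident to the root. Take the input alphabet of $\Pi'$ to be $\Sigma_{\inn}$ and the output alphabet to be $\Sigma_{\out} \times \fV$. Declare $(x, V) \in \gee(\sigma)$ iff the distinguished half-edge of $V$ is labeled with input $\sigma$ and output $x$. The node constraint at a vertex of degree $i$ admits exactly those multisets $\{(x_1, V_1), \dots, (x_i, V_i)\}$ for which (i) all $V_j$ share a common underlying rooted labeled ball $V^\star$ whose root has degree $i$; (ii) the distinguished half-edges selected by the $V_j$ form a bijection with the root-half-edges of $V^\star$ and the paired $x_j$'s match the corresponding root output-labels of $V^\star$; and (iii) $V^\star$ (forgetting the distinguished half-edge) is isomorphic to a member of $\fP$. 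The edge constraint admits pairs whose two views agree on the overlap of their respective $r$-neighborhoods (i.e., on the $(r-1)$-hop balls around the two endpoints of the edge).

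For the complexity equivalence, one direction is immediate: projecting a $T(n)$-round algorithm for $\Pi'$ onto its $\Sigma_{\out}$-component yields a $T(n)$-round algorithm for $\Pi$, because the node constraint of $\Pi'$ locally witnesses membership in $\fP$. Conversely, given a $T(n)$-round algorithm for $\Pi$, each node first computes its $\Pi$-output in $T(n)$ rounds and then uses $r$ additional rounds to gather the labeled $r$-hop output neighborhood, which it encodes as the $V$-component. This yields a $(T(n)+r)$-round algorithm for $\Pi'$; since $r$ is a constant the complexities agree up to an additive constant in both the deterministic and the randomized setting, so asymptotically they coincide. The (local and global) failure probabilities transfer along both reductions, since a violation of $\Pi'$ at an edge or node corresponds to a $\Pi$-violation in the same bounded neighborhood, and vice versa.

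The one bookkeeping subtlety, which I expect to be the main obstacle to write out cleanly, is that the node constraint sees an unordered multiset rather than an ordered list of half-edge labels, so the constraint must still be able to identify which physical half-edge of $v$ corresponds to which slot in the shared view $V_v$; this is precisely the role of the distinguished half-edge in each element of $\fV$. Once this is in place, soundness is routine: the node constraint forces each $V_v$ to be a labeled $r$-ball isomorphic to a member of $\fP$ with the correct local outputs, the edge constraints propagate agreement between neighboring views across $e = \{u,v\}$, and $\gee$ anchors each view to the genuine input at every half-edge, so by local-to-global propagation $V_v$ faithfully depicts the real $r$-neighborhood. The construction is insensitive to the ambient graph class, so it applies in particular to $\fT$ and $\fF$.
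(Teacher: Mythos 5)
Your construction is essentially the paper's: encode the full $\Sigma_{\inn}$-$\Sigma_{\out}$-labeled $r$-ball (with a marked special half-edge) into the output label, let the node/edge constraints enforce realizability and pairwise consistency of neighboring views, and let $\gee$ anchor each view to the true input; the two directions of the reduction (projection in $0$ rounds one way, $r$ extra rounds of gathering the other way) are also the same. The only differences are cosmetic — you carry the $\Sigma_{\out}$-label as an explicit pair component and handle the multiset/slot-identification issue via the distinguished half-edge and a bijection condition, where the paper instead bakes port orderings into the encoded neighborhoods — so the proposal is correct and matches the paper's proof.
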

\begin{proof}
    Suppose $\Pi = (\Sigma_\inn, \Sigma_\out, r, \fP)$ is an LCL. We create a node-edge-checkable LCL $\Pi' = (\Sigma_\inn^{\Pi'}, \Sigma_\out^{\Pi'}, \fN_{\Pi'}, \fE_{\Pi'}, g_{\Pi'})$ as follows: 
    \begin{itemize}
        \item $\Sigma_\inn^{\Pi'} = \Sigma_\inn$.
        \item $\Sigma_\out^{\Pi'}$ contains all possible labelings of $r$-hop neighborhoods of a node, each neighborhood has marked a special half-edge, each vertex and each edge has an order on incident half-edges and each half-edge is labeled with a label from $\Sigma_\out$, moreover, the labeling by $\Sigma_\out$ has to be accepted by $\fP$.
        \item $\fN_{\Pi'}$ contains such sets  $S = \{\sigma_1, \dots, \sigma_d\}$ with $\sigma_i \in \Sigma_\out^{\Pi'}$ such that there exists an $r$-hop neighborhood $N$ of a node $u$ of degree $d$ together with each node and each edge having an order on incident half edges and such that half-edges have labels from $\Sigma_\out$ such that we can assign labels from $S$ to half-edges around $u$ in such a way that each $\sigma_i$ assigned to $(u,e_i)$ describes $N$ with the special half-edge of $\sigma_i$ being $e_i$.
        \item $\fE_{\Pi'}$ is defined analogously to $\fN_{\Pi'}$, but we require an existence of a neighborhood of an edge $e$ that is consistent from the perspective of labels from $\Sigma_\out^{\Pi'}$ assigned to $(u,e)$ and $(v,e)$. 
        \item $g_{\Pi'}$ maps each label $\tau \in \Sigma_\inn$ to the set of labels $\sigma \in \Sigma_\out^{\Pi'}$ such that the special half-edge of $\sigma$ is labeled by $\tau$. 
    \end{itemize}
    
    Suppose we have a valid solution of $\Pi$. Then, in $r$ rounds each half-edge can decide on its $\Pi'$-label by encoding its $r$-hop neighborhood, including port numbers of each vertex and each edge that give ordering on its half-edges, into a label from $\Sigma_\out^{\Pi'}$. The constraints $\fN_{\Pi'}, \fE_{\Pi'}, g_{\Pi'}$ will be satisfied. 
    
    On the other hand, suppose we have a valid solution for $\Pi'$. In $0$-rounds, each half-edge $(v,e)$ can label itself with the label on the special half-edge in its $\Pi'$-label $f'_\out((v,e))$. We claim that the $\Pi$-labeling we get this way around a half-edge $(u,e)$ is isomorphic to the $\Sigma_\inn$-labeling described by the label $f_\out((v,e))$, hence the new labeling is a solution to $\Pi'$. To see this, consider running a BFS from $(v,e)$. The node constraints $\fN_{\Pi'}$ and the edge constraints $\fE_{\Pi'}$ are ensuring that the labels from $\Sigma_\out^{\Pi'}$ of visited half-edges are describing compatible neighborhoods, while the function $g$ ensures that the description of $\Sigma_\inn$ labels in the $r$-hop neighborhood of $u$ by the labels from $\Sigma_\out^{\Pi'}$ agrees with the actual $\Sigma_\inn$ labeling of the $r$-hop neighborhood of $u$. As the label $f'_\out((v,e))$ needs to be accepted by $\fP$, we get that $\fP$ accepts the $r$-hop neighborhood of $u$, as needed. 
\end{proof}

It is crucial that the way in which we define the node-edge-checkable LCL problem $\Pi'$ in Lemma~\ref{lem:LCL_and_nodeedgeLCL_are_same} guarantees that the considered (input-labeled) graph class remains the same as for $\Pi$ (and does not turn into a graph class with a promise on the distribution of the input labels, which would be the result of the straightforward approach of defining $\Pi'$ by encoding the input labels contained in a constant-sized ball in $\Pi$ in a single input label in $\Pi'$, and doing the same for output labels).
If this property was not guaranteed, it would be completely unclear (and perhaps impossible) how to extend the round elimination framework of \cite{brandt19automatic_speedup_theorem} to our setting with input labels.

\subsection{Order-Invariant Algorithms}
\label{subsec:other_algorithms}

In this section, we formally define the notion of an order-invariant algorithm and introduce further computational models of interest. We also show that oftentimes order-invariant algorithms can be sped up to improve the round/probe complexity, both in the \local and the \volume model.

\begin{definition}[Order-invariant \local algorithm \cite{naorstockmeyer}]
\label{def:order-invariant}
A deterministic $T(n)$-round \local algorithm $\fA$ is called order-invariant if the following holds: Consider two assignments of distinct identifiers to nodes in $B_G(v,T(n))$ denoted by $\mu$ and $\mu'$. Assume that for all $u,w \in B_G(v,T(n))$ it holds that $\mu(u) > \mu(w)$ if and only if $\mu'(u) > \mu'(w)$. Then, the output of $\mathcal{A}$ on the set of half-edges $H[v]$ will be the same in both cases.
\end{definition}

Next, we define the notions necessary for \cref{sec:volume}. We start by defining the \volume model \cite{Rosenbaum2020}. We define the \volume model in a more mathematically rigorous way compared to \cite{Rosenbaum2020}, as this will help us later with the proofs. Before defining the \volume model, we start with one more definition.

\begin{definition}
 For an arbitrary $S \subseteq \mathbb{N}$, we define
 \[Tuples_S = \{(id,deg,in) \colon id \in S, deg \in [\Delta], in \colon [deg] \mapsto \Sigma_{in}\}\] and for $i \in \mathbb{N}$, we define

\[Tuples_{i,S} = \{(t_1,t_2,\ldots,t_i) \colon \text{for every $j \in [i]$, $t_j \in Tuple_S$}\}.\]
 
For a given $i \in \mathbb{N}$ and $\ell \in [2]$, let
\[t^{(\ell)} = ((id^\ell_1,deg_1,in_1), (id^\ell_2,deg_2,in_2), \ldots, (id^\ell_i,deg_i,in_i)) \in Tuples_{i,\mathbb{N}}\] be arbitrary.
We say that the tuples $t^{(1)}$ and $t^{(2)}$ are almost identical if for every $j_1,j_2 \in [i]$, $id^1_{j_1} < id^2_{j_2}$ implies $id^2_{j_1} < id^2_{j_2}$, $id^1_{j_1} > id^1_{j_2}$ implies $id^2_{j_1} > id^2_{j_2}$ and $id^1_{j_1} = id^1_{j_2}$ implies $id^2_{j_1} = id^2_{j_2}$.
\end{definition}

A tuple in $Tuples_S$ can encode the local information of a node $v$, including its ID, its degree and the input assigned to each of its incident half-edges, i.e., $in(k)$ is the input assigned to the $k$-th half edge. 
We denote with $t_v$ the tuple that encodes the local information of $v$.
A tuple in $Tuples_{i,S}$ can be used to encode all the information (modulo the number of nodes of the input graph) that a node knows about the input graph after having performed $i-1$ probes.
The notion of almost identical tuples will be helpful for defining the notion of order-invariance for the \volume model.

\begin{definition}[\volume model]

Let $\Pi = (\Sigma_{in}, \Sigma_{out}, r, \mathcal{P})$ be an LCL problem. 
A deterministic \volume model algorithm $\mathcal{A}$ for $\Pi$ with a probe complexity of $T(n)$ can be seen as a set of computable functions $f_{n,i}$ for every $n\in \mathbb{N}$ and $i \in [T(n) + 1]$ with $f_{n,i} \colon Tuples_{i,\mathbb{N}} \mapsto [i] \times [\Delta]$ for $i \in [T(n)]$ encoding the $i$-th adaptive probe and $f_{n,T(n)+1} \colon Tuples_{T(n) + 1,\mathbb{N}} \mapsto \Sigma_{out}^{[\Delta]}$, where $\Sigma_{out}^{[\Delta]}$ refers to the set of functions mapping each value in $[\Delta]$ to a value in $\Sigma_{out}$, encoding the output that $\mathcal{A}$ assigns to the incident half edges of the queried node.
Next, we have to define what it means for $\mathcal{A}$ to be a valid algorithm.
To that end, let $(G,f_{in})$ be an arbitrary $\Sigma_{in}$-labeled graph on $n$ nodes with each node in $G$ being equipped with an identifier and a port assignment (To simplify the technical definition, we assume that $G$ does not contain any isolated node).

The algorithm $\mathcal{A}$ defines a half-edge labeling  $f_{out,\mathcal{A},(G,f_{in})} \colon H(G) \mapsto \Sigma_{out}$ as follows:
Let $(v,e) \in H(G)$ be an arbitrary half-edge of $G$. We define $t^{(0)} = (t_v)$ and for $i \in \{1,2,\ldots,T(n)\}$, we obtain $t^{(i)}$ from $t^{(i-1)} = (t_{v_0},t_{v_1},\ldots,t_{v_i})$ as follows. Let $(j,p) = f_{n,i}(t^{(i-1)})$ and $v_{i+1}$ the node in $G$ such that $\{v_j,v_{i+1}\}$ is the $p$-th edge incident to $v_j$ (we assume that the degree of $v_j$ is at least $p$). Then, $t^{(i)} = (t_{v_0},t_{v_1},\ldots,t_{v_{i+1}})$. Finally, $f_{n,T(n) + 1}(t_{v_0},t_{v_1},\ldots,t_{v_{T(n)}})$ defines a function $g$ mapping each value in $[\Delta]$ to a value in $\Sigma_{out}$. Let $e$ be the $p$-th edge incident to $v$. We define $f_{out,\mathcal{A},(G,f_{in})}((v,e)) = g(p)$.

We say that $\mathcal{A}$ solves $\Pi$ if $f_{out,\mathcal{A},(G,f_{in})}$ is a valid output labeling for every $\Sigma_{in}$-labeled graph $(G,f_{in})$ with each node in $G$ having a unique ID from a polynomial range, is equipped with a port labeling, and $G$ does not contain an isolated node.

\label{def:volume}

\end{definition}

The notion of an order-invariant algorithm naturally extends to algorithms in the \volume model.
\begin{definition}(Order-invariant \volume algorithm)
We say that $\mathcal{A}$ is order invariant if for every $n \in \mathbb{N}$ and $i \in [T(n)+1]$, $f_{n,i}(t) = f_{n,i}(t')$ for every $t,t' \in Tuples_{i,S}$ with $t$ and $t'$ being almost identical.
\end{definition}

We will use the following basic speed-up result for order-invariant algorithms in both the \local and the \volume model.
\begin{theorem}[Speed-up of order-invariant algorithms (cf. \cite{chang2016exp_separation})]
\label{thm:speedup_order_invariant}
Let $\fA$ be an order-invariant algorithm solving a problem $\Pi$ in $f(n) = o(\log n)$ rounds of the \local model or with $f(n) = o(n)$ probes of the \volume model. 
Then, there is an order-invariant algorithm solving $\Pi$ in $O(1)$ rounds of the \local model or, equivalently, $O(1)$ probes in the \volume model. 
\end{theorem}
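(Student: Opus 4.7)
The plan is, in both models, to show that for a sufficiently large constant $n_0$ the fixed algorithm $\fA_{n_0}$ is already a correct order-invariant algorithm on graphs of every size; since $\fA_{n_0}$ runs in the constant number of rounds/probes $f(n_0)$, this immediately delivers the claimed $O(1)$ speedup.

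First I would choose $n_0$ large enough to guarantee that all ``local views'' of $\fA_{n_0}$ can be embedded into some $n_0$-node graph. In the \local case, let $r$ denote the checking radius of the underlying LCL $\Pi$ and pick $n_0$ with $\Delta^{f(n_0)+r+1} + 1 \le n_0$; such $n_0$ exists since $f(n) = o(\log n)$ forces $\Delta^{f(n)} = n^{o(1)}$. In the \volume case, pick $n_0$ with $f(n_0) + 1 \le n_0$, which exists since $f(n) = o(n)$.

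Next I would argue correctness of $\fA_{n_0}$ on an arbitrary input graph $G$. Fix a node $v$ and consider the set $N$ consisting of its $(f(n_0)+r)$-hop neighborhood in the \local case, or of all nodes that can possibly be reached by the adaptive probe sequence of $\fA_{n_0}$ started at $v$ together with their $r$-hop neighborhoods in the \volume case. By the choice of $n_0$ we have $|N| \le n_0$. I then build an $n_0$-node witness graph $G'$ lying in the same graph class by taking $N$ as a subgraph and padding it up to $n_0$ nodes with ``dummy'' structure placed at distance strictly greater than $f(n_0)+r$ from $v$; identifiers of the added nodes are drawn from the polynomial range that $\fA_{n_0}$ expects, respecting the relative order of the identifiers on $N$. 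By order-invariance, $\fA_{n_0}$ executed at $v$ produces exactly the same half-edge labels in $G$ and in $G'$. Since $\fA_{n_0}$ is a correct algorithm on the $n_0$-node graph $G'$, the $r$-hop output around $v$ satisfies the LCL constraint at $v$ in $G'$; because this $r$-hop neighborhood, together with the outputs in it, is identical in $G$ and $G'$, the LCL constraint at $v$ is satisfied in $G$ as well. Applying this argument at every node yields a globally correct solution on $G$, and order-invariance of $\fA_{n_0}$ is inherited directly from that of $\fA$.

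The main obstacle is the padding step: $G'$ must remain inside the relevant graph class (forests, trees, oriented grids, or general bounded-degree graphs) while leaving the $(f(n_0)+r)$-hop view of $v$ untouched. For forests we pad by isolated vertices, for trees we attach a long path at a far leaf of $N$, and for oriented grids we extend the coordinates of $N$ along an unused direction; each construction is routine and class-dependent, but introduces no new ideas. All remaining details are straightforward bookkeeping, so I expect the proof to go through essentially along the lines sketched by Chang and Pettie~\cite{chang2016exp_separation}.
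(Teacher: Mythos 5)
Your proposal is correct and follows essentially the same route as the paper: hardwire the size parameter to a sufficiently large constant $n_0$, observe that certifying the LCL constraint at any node only involves a constant-size set of ``visible'' nodes, embed that set (with order-preserving identifiers) into an $n_0$-node graph of the same class, and invoke order-invariance to transfer correctness of $\fA$ on $n_0$-node graphs back to the original graph. The only quibble is bookkeeping in the \volume case: the visible set must cover the probe sequences of \emph{all} nodes in the $r$-hop ball around $v$ (size roughly $\Delta^{r+1}(f(n_0)+1)$, as in the paper's condition $\Delta^{r+1}(T(n_0)+1)\le n_0/\Delta$), so your condition $f(n_0)+1\le n_0$ needs the extra $\Delta^{r+1}$ factor — a harmless adjustment of the constant $n_0$.
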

\begin{proof}
The result for the \local model is proven in \cite{chang2016exp_separation}. The proof for the \volume model follows in the exact same manner. We provide it here for completeness.
Let $n_0$ be a fixed constant such that $\Delta^{r+1} \cdot (T(n_0) + 1) \leq n_0/\Delta$.
We now define a \volume algorithm $\fA'$ with probe complexity $T'(n) = \min(n,T(n_0)) = O(1)$ as follows.
For $n \in \mathbb{N}$ and $i \in [T'(n) + 1]$, we define $f^{\fA'}_{n,i} = f^{\fA}_{\min(n,n_0),i}$. 
It remains to show that $\fA'$ indeed solves $\Pi$.
For the sake of contradiction, assume that this is not the case. 
This implies the existence of a $\Sigma_{in}$-labeled graph $(G,f_{in})$ on $n \geq n_0$ nodes (with IDs from a polynomial range, port assignments, and no isolated nodes) such that $\fA'$ "fails" on $(G,f_{in})$.
Put differently, there exists a node $v$ such that $\fA'$ produces a mistake in the $r$-hop neighborhood of $v$.
The $r$-hop neighborhood of $v$ consists of at most $\Delta^{r+1}$ vertices. To answer a given query, $\fA'$ "sees" at most $T(n_0) + 1$ nodes. Hence, to compute the output of all the nodes in the $r$-hop neighborhood of $v$, $\fA'$ "sees" at most  $\Delta^{r+1}(T(n_0) + 1) \leq \frac{n_0}{\Delta}$ many nodes. We denote the set consisting of those nodes as $V^{visible}$.
Now, let $(G',f'_{in})$ be a $\Sigma_{in}$-labeled graph on $n'$ nodes (with IDs from a polynomial range, port assignments, and no isolated nodes) such that every $u \in V^{visible}$ is also contained in $G'$, with its degree being the same in both graphs, as well as the input assigned to each of its incident half edges. The assigned ID can be different, however, the relative orders of the IDs assigned to nodes in $V^{visible}$ in $G$ and $G'$ are the same. As $\Delta^{r+1}(T(n_0) + 1) \leq \frac{n_0}{\Delta}$, such a $(G',f'_{in})$ exists.  
As $\fA$ is order invariant, so is $\fA'$.
Moreover, $f^{\fA'}_{n,i} = f^{\fA'}_{n',i}$ for any $i$. 
Hence, it follows that $\fA'$ assigns the same output to all the half-edges in the $r$-hop neighborhood of $v$ in $G$ and $G'$. Therefore, $\fA'$ also fails on the graph $G'$. From the way we defined $\fA'$, this directly implies that $\fA$ also fails on $G$, a contradiction with the assumption that $\fA$ is a correct algorithm. This finishes the proof.
\end{proof}

\paragraph{\lca model}
We now briefly discuss the related \lca model.
A deterministic local computation algorithm (\lca) is similar to a deterministic \volume algorithm, with two small differences. First, an \lca can perform so-called far probes and second, it can assume that each node in the $n$-node input graph has a unique ID in the set $\{1,2,\ldots,n\}$. However, far probes are not of any help in the complexity regime we consider. 

\begin{theorem}[cf. \cite{goos_nonlocal_probes_16}, Theorem 1]
Any LCL problem that can be solved by an \lca with probe complexity $t(n)$ can also be solved by an \lca with probe complexity $O(t(n^{\log(n)}))$ that does not perform any far probes, provided $t(n) = o(\sqrt{log(n)})$.
\end{theorem}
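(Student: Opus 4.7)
The plan is to prove the theorem by a simulation argument: given an \lca $\mathcal{A}$ with probe complexity $t(n)$ that is allowed far probes, I would build an \lca $\mathcal{A}'$ with probe complexity $O(t(n^{\log n}))$ that uses only local probes. The underlying principle is that with such a small probe budget, $\mathcal{A}$ cannot exploit specific numerical values of the IDs it sees, only their relative order and their local combinatorial structure; this slack is what lets us replace each far probe by a bounded number of local probes on a virtually enlarged input universe.

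The first step is to establish order-invariance of $\mathcal{A}$ via a Ramsey-theoretic argument in the spirit of Naor and Stockmeyer, analogous to the one used in \cref{sec:volume}: since $\mathcal{A}$ sees at most $t(n) = o(\sqrt{\log n})$ IDs drawn from a polynomial range, an iterated application of the finite Ramsey theorem yields a sub-range of IDs on which $\mathcal{A}$'s output is determined by the order type of the observed IDs. The precise Ramsey bookkeeping is what forces the exponent $\sqrt{\log n}$, because the Ramsey number for $k$-tuples grows like a tower of height $k$, and we need the resulting monochromatic sub-range to remain polynomial in the size $n^{\log n}$ of the virtual universe that we are going to embed into.

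The second step is the simulation itself. I would think of $\mathcal{A}'$ as running $\mathcal{A}$ on a conceptual $n^{\log n}$-node graph $\widetilde{G}$ that consists of the real input $G$ together with a library of disjoint ``templates'' realizing every local neighborhood type that the algorithm could encounter, each equipped with IDs chosen to exhaust all relevant order types. When $\mathcal{A}$ issues a local probe inside the image of $G$, $\mathcal{A}'$ performs exactly that local probe in $G$; when $\mathcal{A}$ issues a far probe, $\mathcal{A}'$ uses a constant number of local probes inside $\widetilde{G}$ to walk to a template whose order type matches what $\mathcal{A}$'s far probe would reveal. By order-invariance, $\mathcal{A}$ cannot distinguish this substitution from the real far-probe response. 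The total probe cost is $O(t(n^{\log n}))$, and correctness for the LCL $\Pi$ transfers because LCL correctness is a purely local property and $G$ sits as an induced subgraph of $\widetilde{G}$.

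The main obstacle will be the Ramsey-quantifier accounting required to simultaneously match (i) the size $n^{\log n}$ of the virtual universe, (ii) the polynomial ID range allowed by the \lca model, and (iii) the probe budget $t(n) = o(\sqrt{\log n})$. Making these three compatible, so that order-invariance actually holds on the enlarged input without forcing the probe complexity past $O(t(n^{\log n}))$, is the delicate calibration in the proof; the remaining ingredients, namely the embedding $G \hookrightarrow \widetilde{G}$, the template construction, and the LCL correctness transfer, are comparatively routine once order-invariance is in hand.
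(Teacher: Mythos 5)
The paper does not prove this statement---the ``cf.''\ marks it as imported from G\"o\"os, Hirvonen, Korhonen, and Suomela (Theorem 1 of the cited work), so there is no in-paper proof to compare against. Judged on its own terms, your proposal has a real gap in the Ramsey step. You claim that the Ramsey bookkeeping ``forces the exponent $\sqrt{\log n}$,'' but the hypergraph Ramsey bound used throughout this paper (and in Naor--Stockmeyer and Chang--Pettie), namely $\log^* R(p,m,c) = p + \log^* m + \log^* c + O(1)$, yields a monochromatic sub-range of size $m$ only when $\log^*$ of the available ID range exceeds $p = t(n)$ plus lower-order terms. The ID range is polynomial in $n$---or in $n^{\log n}$, which shifts $\log^*$ of the range by only an additive constant---so the Ramsey route caps out at $t(n) = o(\log^* n)$, far short of the claimed $o(\sqrt{\log n})$. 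Already for $t(n)$ of order $\log\log n$, which is $o(\sqrt{\log n})$ but $\omega(\log^* n)$, the required Ramsey number is super-polynomial and no monochromatic sub-range exists, so your order-invariance step has nothing to stand on.

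The actual argument in the cited work bypasses Ramsey and order-invariance entirely: the near-probe-only simulator runs $\mathcal{A}$ with the size parameter set to $N = n^{\log n}$ and answers each far probe by fabricating a fresh neighborhood disjoint from the real input; because LCL correctness is locally checkable and the real graph sits untouched as an induced subgraph of the virtual $N$-node graph, the simulated run remains correct. The $o(\sqrt{\log n})$ threshold there is a counting constraint---ensuring that the $t(N)$ fabricated responses fit consistently into the $N$-node universe without colliding with the real input or with each other---not a Ramsey constraint. Your template/embedding intuition points in roughly the right direction, but the Ramsey layer you stack on top is both unnecessary and fails to reach the stated regime. One further confusion: in the \lca model a far probe returns a node's local information in a single step given its ID---there is no ``walking'' via local probes---and the simulator can answer far probes from internally constructed templates without spending any probes at all, since those templates live only in the simulator's bookkeeping, not in the input graph.
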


In particular, an \lca  with probe complexity $o(\log^*n)$ implies an \lca for the same problem that does not perform any far probes and has a probe complexity of $o(\log^* n)$. Hence, we can focus on showing that any \lca that does not perform any far probes with a probe complexity of $o(\log^* n)$ implies an \lca for the same problem with a probe complexity of $O(1)$. However, an \lca that does not perform any far probes is the same as a \volume algorithm, with the only difference being that the \lca algorithm only has to produce a valid output if each node in the $n$-node input graph has a unique ID from the set $\{1,2,\ldots,n\}$. However, one can show with a very simple argument that a \volume algorithm (\lca  without far probes) with probe complexity $T(n) = o(\log^*n)$ that assumes unique IDs from the set $\{1,2,\ldots,n\}$ implies a volume algorithm  with probe complexity $T'(n) = T(n^k) = o(\log^*n)$ that only assumes IDs from the set $\{1,2,\ldots,n^k\}$ for an arbitrary constant $k$. Hence, from the discussion above, a speed-up result from $o(\log^*n)$ to $O(1)$ in the \volume model directly implies the same speed-up in the \lca model.

\section{The \local Model Gap on Trees}
\label{sec:main_proof}

In this section we prove the $\omega(1) - o(\log^* n)$ gap for LCLs on trees in the \local model.
We do so by proving Theorem~\ref{thm:treemain} (which is the slightly more formal version of Theorem~\ref{thm:main_informal}) by explicitly designing, for any given (node-edge-checkable) LCL problem $\Pi$ with complexity $o(\log^* n)$, a constant-round algorithm.
As explained in Section~\ref{subsec:introtree}, a very rough outline of our approach is to generate from $\Pi$ a sequence of node-edge-checkable LCL problems of decreasing randomized local complexities (where we allow the local failure probability to grow along the problems in the sequence), find a problem in the sequence that can be solved in $0$ rounds with a reasonably low local failure probability, show that there exists a $0$-round deterministic algorithm for that problem, and turn this algorithm into a constant-round algorithm for $\Pi$ by going back up the sequence of problems and arguing that the deterministic complexities increase slowly along the sequence in this direction.
While the round elimination framework~\cite{brandt19automatic_speedup_theorem,balliu2019LB} provides a blueprint how to generate a suitable sequence, it unfortunately only does so for LCLs on regular trees without inputs.
We provide an extension of the framework that also works for LCLs on irregular trees (or forests) with inputs.

We will start in Section~\ref{subsec:probseq} by extending the definition of the round elimination problem sequence to the setting with inputs (and taking care of a technical issue).
In Section~\ref{subsec:goingdown}, we will carefully bound the evolution of failure probabilities along a sequence of algorithms with decreasing runtimes that solve the problems in the defined problem sequence.
Section~\ref{subsec:goingup} takes care of the reverse step, i.e., showing that the deterministic complexities of the problems in the problem sequence do not increase fast when traversed towards $\Pi$.
Finally, in Section~\ref{subsec:alltog}, we will put everything together and prove Theorem~\ref{thm:treemain}.

\subsection{The Problem Sequence}\label{subsec:probseq}

Similarly to the approach in~\cite{brandt19automatic_speedup_theorem}, we define, for any node-edge-checkable LCL problem $\Pi$, two node-edge-checkable problems $\re(\Pi)$ and $\rere(\Pi)$.
The problems in the aforementioned sequence are then obtained by iteratively applying $\rere(\re(\cdot))$, starting with $\Pi$.


\begin{definition}[$\fR(\Pi)$]
\label{def:repi}
Let $\Pi = (\spinn, \spout, \noco, \edco, \gee)$ be a node-edge-checkable LCL problem.
We define a new node-edge-checkable LCL problem $\re(\Pi) = (\rspinn, \rspout, \rnoco, \redco, \rgee)$ by specifying the five components.
We start by setting $\rspinn := \spinn$ and $\rspout := 2^{\spout}$, i.e., the input label set of $\re(\Pi)$ is simply the input label set of $\Pi$, and the output label set of $\re(\Pi)$ is the power set of the output label set of $\Pi$.
Next, we define $\rgee$ by setting $\rgee(\ell) := 2^{\gee(\ell)}$ for any label $\ell \in \spinn$, i.e., intuitively speaking, in problem $\re(\Pi)$ an input label $\ell$ on some half-edge requires that the output label on the same half-edge is a subset of the set of output labels that were allowed in $\Pi$ on a half-edge with input label $\ell$.

We define the edge constraint $\redco$ of $\re(\Pi)$ as the set of all cardinality-$2$ multisets $\{ \xB_1, \xB_2 \}$ such that $\xB_1, \xB_2 \in \rspout$ and, for all $\xb_1 \in \xB_1$, $\xb_2 \in \xB_2$, we have $\{\xb_1, \xb_2\} \in \edco$.
Finally, we define the node constraint $\rnoco$ of $\re(\Pi)$ as follows.
For each integer $i \geq 1$, define $\rnoco^i$ as the set of all cardinality-$i$ multisets $\{ \xA_1, \dots, \xA_i \}$ such that $\xA_1, \dots, \xA_i \in \rspout$ and there exists some selection $(\xa_1, \dots, \xa_i)$ of labels from $\xA_1 \times \dots \times \xA_i$ such that $\{\xa_1, \dots, \xa_i\} \in \noco^i$.

\end{definition}

Note that our definition of $\re(\Pi)$ differs slightly from the usual definition of $\re(\Pi)$ as given in, e.g., \cite{balliu2020ruling} (beyond the obvious differences due to the fact that we consider LCL problems with input labels): our definition does not remove so-called ``non-maximal'' configurations.
Removing such configurations can be beneficial when trying to determine the complexity of specific problems, but is not required (or helpful) in our setting, where we want to argue about the complexity of many problems at once.

\begin{definition}[$\rere(\Pi)$]
\label{def:rerepi}
The problem $\rere(\Pi)$ differs from $\re(\Pi)$ only in the node and edge constraints; for the remaining three parameters we set $\orrspinn := \rspinn$, $\orrspout := \rspout$, and $\orrgee := \rgee$.

We define the node constraint $\orrnoco$ of $\rere(\Pi)$ as follows.
For each integer $i \geq 1$, define $\orrnoco^i$ as the set of all cardinality-$i$ multisets $\{ \xA_1, \dots, \xA_i \}$ such that $\xA_1, \dots, \xA_i \in \orrspout$ and, for all $(\xa_1, \dots, \xa_i) \in \xA_1 \times \dots \times \xA_i$, we have $\{\xa_1, \dots, \xa_i\} \in \noco^i$.
Moreover, we define the edge constraint $\orredco$ of $\rere(\Pi)$ as the set of all cardinality-$2$ multisets $\{ \xB_1, \xB_2 \}$ such that $\xB_1, \xB_2 \in \orrspout$ and there exists some selection $(\xb_1, \xb_2)$ of labels from $\xB_1 \times \xB_2$ such that $\{\xb_1, \xb_2\} \in \edco$.
\end{definition}

Note that, although the function $\rere(\cdot)$ can take any arbitrary node-edge-checkable LCL problem as argument, we will use as arguments only problems that are of the form $\re(\Pi)$ for some node-edge-checkable LCL problem $\Pi$.

Recall that $\fT$, resp.\ $\fF$, denotes the class of all trees, resp.\ forests, of maximum degree at most $\Delta$. 
Before turning to the analysis of the evolution of the aforementioned failure probabilities, there is a technical issue we have to discuss.
A crucial argument in said analysis is, roughly speaking, that if you consider some (sufficiently small) neighborhood of a node (or edge), and a set of extensions of this neighborhood via different edges leaving the neighborhood\footnote{An extension via some leaving edge is simply a possibility of how the graph could continue for the next hop beyond the respectively chosen leaving edge that is consistent with (some graph in) the considered graph class.}, then there must also be a graph in the considered graph class that (simultaneously) contains all extensions of the set (together with the neighborhood).
Here the term ``considered graph class'' describes the input graph class restricted to the members that are consistent with the knowledge of the nodes about the number $n$ of nodes, i.e., in particular if all nodes are aware of the exact value of $n$ (as they are in our definition of the \local model), then the considered graph class contains only $n$-node graphs. 
Now, if the input graph class is $\fT$, it follows that the aforementioned crucial argument does not hold in general: if all extensions in the considered set ``conclude'' the tree (i.e., do not have leaving edges except those connecting them to the considered initial neighborhood) and the combined number of nodes in the initial neighborhood and the considered extensions does not happen to be precisely $n$, then there is no $n$-node tree that contains the neighborhood together with all considered extensions.

We solve this issue by proving our main theorem first for the class $\fF$ of forests (which do not have the aforementioned issue as the number of nodes\footnote{We remark that the issue does not occur in the variant of the \local model in which nodes are only aware of some upper bound on the number of nodes, but we believe that it is important to ensure that the correctness of (our) results does not depend on such minor details in the model specification.} in some maximal connected component is not known to the nodes) in Theorem~\ref{thm:REmain} and then lifting it to $\fT$ in Theorem~\ref{thm:treemain} by showing that, for node-edge-checkable LCL problems, a complexity of $o(\log^* n)$ on trees implies a complexity of $o(\log^* n)$ on forests.
Lemma~\ref{lem:treetoforest} provides this relation between trees and forests; in Sections~\ref{subsec:goingdown} and~\ref{subsec:goingup} we will then exclusively work with forests.
Note that the complexity of any LCL problem $\Pi$ on $\fF$ is trivially at least as large as its complexity on $\fT$ as $\fT$ is a subclass of $\fF$.

\begin{lemma}\label{lem:treetoforest}
    Let $\Pi$ be some node-edge-checkable LCL problem that has deterministic, resp.\ randomized, complexity $o(\log^* n)$ on $\fT$.
    Then the deterministic, resp.\ randomized, complexity of $\Pi$ on $\fF$ is in $o(\log^* n)$.
\end{lemma}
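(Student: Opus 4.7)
The plan is to convert a tree algorithm $\fA$ of complexity $T(n) = o(\log^* n)$ on $\fT$ into a forest algorithm $\fA'$ of the same asymptotic complexity on $\fF$. Given an $n$-node forest $F$, I pick a polynomial parameter $N$ and set $R := T(N)$; each node $v$ then either (i) simulates $\fA$ with parameter $N$ on its $R$-hop neighborhood $B_F(v,R)$, pretending its component $C_v$ sits inside a large tree, or (ii) applies a deterministic brute-force fallback when $C_v$ is too small for such a pretense to be consistent. To decide between (i) and (ii) in a way on which every node of a given component agrees, I use a component-wide \emph{diameter threshold}: in $2R+1$ rounds each $v$ learns $B_F(v, 2R+1)$, which is enough to test whether $\mathrm{diam}(C_v) \leq 2R$---indeed, if the diameter is $\leq 2R$ then every node of $C_v$ has eccentricity $\leq 2R$ and hence sees all of $C_v$. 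Nodes in small components run a canonical deterministic brute-force (e.g., the lex-smallest valid $\Pi$-labeling of $C_v$ with respect to the identifiers); this is well-defined because $\fA$ has failure probability strictly less than one on every $n$-node tree, so a valid $\Pi$-labeling exists on every tree. Nodes in large components simulate $\fA$ with parameter $N$ on $B_F(v,R)$.

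The correctness for large components hinges on the following embedding claim: for every edge $e = \{u,v\}$ in a component with $\mathrm{diam}(C_v) > 2R$ there exists an $N$-node tree $T^*$ of maximum degree $\leq \Delta$ whose $R$-hop neighborhoods around $u$ and $v$ are isomorphic, as identifier- and input-carrying half-edge graphs, to $B_F(u,R)$ and $B_F(v,R)$. The outputs of $\fA$ at $u,v$ in $T^*$ then agree with those produced by $\fA'$ on $F$, and the edge constraint at $e$ (and, analogously, the node constraints at $u,v$) follows from the correctness of $\fA$ on $T^*$. The construction takes the subtree $B_F(u,R) \cup B_F(v,R)$ of $F$ and attaches a padding subtree at a boundary node of one of the two $R$-balls that lies at distance exactly $R$ from one endpoint of $e$ and at distance $> R$ from the other; such a boundary node exists whenever at least one of the two subtrees $F_u, F_v$ obtained from $C_v$ by deleting $e$ has depth $\geq R$ from its respective endpoint. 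A short tree calculation shows that this depth condition holds at every edge of $C_v$ as soon as $\mathrm{diam}(C_v) > 2R$, since otherwise both depths would be $< R$ and hence $\mathrm{diam}(C_v) \leq (R{-}1)+1+(R{-}1) = 2R-1$. The analogous construction for preserving only the $R$-ball of a single node (which suffices for the node constraint) is strictly easier. Free capacity at the boundary node (some leaf has degree $< \Delta$ whenever $\Delta \geq 2$) together with freedom to choose identifiers from the polynomial range and arbitrary input labels for the padding make the extension possible; the corner cases $\Delta = 1$ and $n$ below a constant are handled by brute force directly.

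The complexity of $\fA'$ is $O(R) = O(T(N)) = o(\log^* N) = o(\log^* n)$, since $\log^*(n^c) = \log^* n + O(1)$ for any constant $c$. In the deterministic setting the choice $N = n$ suffices. In the randomized setting I set $N := n^3$; the local failure probability of $\fA$ on any $N$-node tree is at most $1/N$ per edge and per node, and a union bound over the at most $2n$ relevant edges and nodes of $F$ lying in large components yields global failure probability at most $2n/N \leq 1/n$, while the deterministic brute-force on small components contributes $0$. I expect the main obstacle to be the case analysis at the small/large boundary and the verification of the depth condition for the embedding; in particular, fixing the diameter threshold at $2R$ (rather than, say, $R$) is precisely what guarantees that every edge in the ``large'' regime admits a valid embedding, while simultaneously guaranteeing that every node in a ``small'' component sees all of its component within $2R$ hops.
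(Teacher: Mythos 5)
Your overall strategy is the same as the paper's: classify each component as ``small'' or ``large'' using a threshold tied to $T(N)$, brute-force a canonical solution on small components, and on large components run $\fA$ fooled with an inflated parameter $N$, justifying correctness by embedding the relevant local views into genuine $N$-node trees and finishing with a union bound. The difference is in the classification criterion and the embedding target: the paper declares a component small iff some node's $(T(n^2)+1)$-hop ball contains it entirely, and then embeds a single $(T(n^2)+1)$-ball (which determines the outputs at a node \emph{and} all its neighbors, so node and edge constraints are handled at once), whereas you use a diameter threshold of $2R$ and embed the pair of $R$-balls around the two endpoints of each edge.

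There is a genuine off-by-one gap in your embedding claim. Consider a component that is a path on $2R+2$ nodes, with $e=\{u,v\}$ its middle edge. Its diameter is $2R+1>2R$, so you classify it as large, and both subtrees $F_u,F_v$ have depth exactly $R$, so your depth condition is satisfied. But $B_F(u,R)\cup B_F(v,R)$ is the \emph{entire} component, and every node of the component lies within distance $R$ of $u$ or of $v$. In the paper's \local model a node's $R$-hop view includes the half-edges (hence the degrees and half-edge inputs) of all nodes at distance exactly $R$, so attaching a padding subtree at your chosen boundary node $w$ (at distance exactly $R$ from, say, $v$) raises $\deg(w)$ from $1$ to $2$ and changes $B(v,R)$; consequently no $N$-node tree $T^*$ with $B_{T^*}(u,R)\cong B_F(u,R)$ and $B_{T^*}(v,R)\cong B_F(v,R)$ exists, and the behavior of $\fA$ on this component is simply not constrained by its correctness on $N$-node trees. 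What your construction actually needs is not a boundary node at distance exactly $R$ from one endpoint, but a node of $B_F(u,R)\cup B_F(v,R)$ that already has a neighbor \emph{outside} the union (so that padding replaces an existing subtree rather than adding a new half-edge at a visible node); this is equivalent to $\max(\mathrm{depth}(F_u),\mathrm{depth}(F_v))\ge R+1$, which follows from $\mathrm{diam}(C_v)\ge 2R+2$ but not from $\mathrm{diam}(C_v)=2R+1$. The fix is easy---raise the small/large threshold to $2R+1$ (components of diameter $\le 2R+1$ are still seen entirely, and classified consistently, within your $2R+1$ rounds of gathering), or adopt the paper's radius-based criterion, which sidesteps the issue. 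Two further minor points: in the randomized setting your ``lex-smallest labeling with respect to the identifiers'' on small components needs identifiers to exist, so you must first derive them from the random bits (as the paper does, at an extra $1/\mathrm{poly}(n)$ failure cost); and the single-node embedding for the node constraint is indeed fine already at your threshold, since every node of a diameter-$(>2R)$ component has eccentricity at least $R+1$.
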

\begin{proof}
    Let $\fA$ be a (deterministic or randomized) algorithm solving $\Pi$ on $\fT$ in $T(n) \in o(\log^* n)$ rounds (and observe that the existence of $\fA$ (even if it is randomized) implies that a correct global solution exists).
    We design a new algorithm $\fA'$ solving $\Pi$ on $\fF$ in $o(\log^* n)$ rounds.
    Algorithm $\fA'$ proceeds as follows on any $n$-node input forest $G' \in \fF$, where, for any node $u$, we denote the (maximal) connected component containing $u$ by $C_u$ and the number of nodes in $C_u$ by $|C_u|$.
    
    First, each node collects its ($2T(n^2)+2$)-hop neighborhood in $G'$.
    Then, based on the collected information, each node $u$ determines whether there exists a node $v$ in $C_u$ such that the ($T(n^2)+1$)-hop neighborhood of $v$ contains all of $C_u$.
    
    If such a node $v$ exists, then each node in $C_u$ is aware of the whole component $C_u$ and can simply choose the same solution for $\Pi$ on $C_u$ (by mapping component $C_u$ (including unique identifiers or random bits) in some arbitrary, but fixed, deterministic fashion to some correct solution), and then output the part of the solution it is responsible for.
    Note that this requires all nodes in $C_u$ to be distinguishable from each other (so that each node knows which part of the solution on $C_u$ it is responsible for); for deterministic algorithms this is guaranteed by the unique identifiers, for randomized algorithms it is guaranteed with probability at least $1 - 1/n^2$ by having each node interpret its first $\lceil 4 \log n \rceil$ random bits as an identifier (which guarantees uniqueness of the created identifiers with probability at least $1 - 1/n^2$).
    
    If no such node $v$ exists, then $u$ simply executes $\fA$ with input parameter\footnote{Recall that each node receives as input a parameter representing the number of nodes. Note that nothing prevents us from executing an algorithm using an input parameter that does not represent the correct number of nodes.} $n^2$ (and each node in $C_u$ will do likewise).
    In this case, due to the fact that no ($T(n^2)+1$)-hop node neighborhood fully contains $C_u$, it holds for each node $v$ in $C_u$ that the ($T(n^2)+1$)-hop neighborhood of $v$ in $G'$ is isomorphic to the ($T(n^2)+1$)-hop neighborhood of some node $w$ in some $n^2$-node tree $G \in \fT$.
    Hence, if $\fA'$ fails on some node $v$ in $C_u$ or on some edge incident to $v$, then $\fA$ fails on some node $w$ in some $n^2$-node tree, or on some edge incident to $w$.
    Since the failure probability of $\fA$ on $n^2$-node trees is at most $1/n^2$, it holds for any node $w$ in any $n^2$-node tree that the probability that $\fA$ fails on $w$ or an edge incident to $w$ is at most $1/n^2$.
    It follows for each node $v$ in $C_u$ that the probability that $\fA'$ fails on $v$ or an edge incident to $v$ is at most $1/n^2$.
    
    Now, a union bound over all components $C_u$ of the first kind (``such a node $v$ exists'') and all nodes in components $C_u$ of the second kind (``no such node $v$ exists'') yields that $\fA'$ fails with probability at most $1/n$.
    Note that if $\fA$ is deterministic, then all of the above failure probabilities are $0$, and $\fA'$ is deterministic as well.
    
    For the runtime of $\fA'$, observe that in either of the two considered cases, the initial collection of $u$'s ($2T(n^2)+2$)-hop neighborhood suffices to compute $u$'s output in $\fA'$.
    Since $T(n) \in o(\log^* n)$ implies $2T(n^2)+2 \in o(\log^* n)$, it follows that the runtime of $\fA'$, and therefore also the complexity of $\Pi$ on $\fF$, is in $o(\log^* n)$.
\end{proof}

\subsection{From Harder to Easier Problems}\label{subsec:goingdown}

Recall that, for any set $N$ of positive integers, $\fF_N$ denotes the class of forests with a number of nodes that is contained in $N$.
The goal of this section is to prove the following theorem.

\begin{theorem}\label{thm:fullspeedup}
    Let $\Pi$ be a node-edge-checkable LCL problem and $\fA$ a randomized algorithm solving $\Pi$ on $\fF$ with runtime $T(n)$ and local failure probability at most $p \leq 1$.\footnote{Note that (bounds on) local failure probabilities such as $p$ also (possibly) depend on the number $n$ of nodes; however, for better readability we will omit this dependency.}
    Let $N$ be the set of all positive integers $n$ satisfying $T(n) + 2 \leq \log_{\Delta} n$.
    Then, there exists a randomized algorithm $\fA'$ solving $\rere(\re(\Pi))$ on $\fF_N$ with runtime $\max\{0, T(n) - 1\}$ and local failure probability at most $Sp^{1/(3\Delta + 3)}$, where
    \begin{equation*}
        S = (10\Delta(\lvert \spinn \rvert + \max\{\lvert \spout \rvert, \lvert \rspout \rvert\}))^{4\Delta^{T(n)+1}}.
    \end{equation*}
\end{theorem}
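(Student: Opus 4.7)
The plan is to execute the round-elimination construction in two phases. In Phase~1, I convert the $T$-round algorithm $\fA$ for $\Pi$ into a $(T-1)$-round algorithm $\fA_1$ for $\re(\Pi)$ with a controlled blow-up in local failure probability. In Phase~2, I lift $\fA_1$ to a $(T-1)$-round algorithm $\fA'$ for $\rere(\re(\Pi))$ with no further loss. Essentially all of the technical work lies in Phase~1.

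For Phase~1, each node $v$ collects its $(T-1)$-hop view $N$ --- identifiers, random bits, ports, inputs, and degrees --- and outputs, on every incident half-edge $h = (v,e)$, the set
\begin{equation*}
\xB_h \;=\; \bigl\{\,\ell \in \gee(f_\inn(h)) \;:\; \Pr[\,\fA \text{ labels } h \text{ by } \ell \,\mid\, N\,] \,\geq\, \tau\,\bigr\},
\end{equation*}
where $\tau$ is a threshold to be tuned and the conditional probability is over $\fA$'s random bits outside $N$ together with all extensions of $N$ to a member of $\fF_N$. The restriction $n \in N$ (i.e.\ $T(n)+2 \le \log_{\Delta} n$) guarantees that every neighborhood of radius $\le T+1$ can be realized inside some forest of the allowed size, so the conditional probability is well-defined and the extension argument goes through for irregular forests carrying half-edge inputs. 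Algorithm $\fA_1$ can fail locally in exactly two ways: either (i) the label $\fA$ actually assigns to some relevant half-edge is \emph{missing} from the corresponding $\xB$-set, an event of probability at most $\tau|\spout|$ per half-edge by definition of the threshold, or (ii) the $\xB$-sets around an edge or node contain multiple \emph{spurious} labels whose combination forms a configuration violating $\redco$ or $\rnoco$. Bounding (ii) reduces to a union bound over the at most $3(\Delta+1)$ half-edges that can jointly participate in a single local check (two half-edges per edge, up to $\Delta$ half-edges per node, plus overhead), combined with a union bound over the involved label tuples; balancing (i) against (ii) by choosing $\tau$ of order $p^{1/(3\Delta+3)}$ up to polynomial factors yields a per-location failure probability of at most $S \cdot p^{1/(3\Delta+3)}$, where the prefactor $S$ absorbs the enumeration of neighborhood shapes of radius $\le T+1$ and of all input/output label assignments.

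For Phase~2, I apply the singleton embedding $\xB \mapsto \{\xB\}$ to each output of $\fA_1$. A direct inspection of \cref{def:rerepi} shows that on singleton labels both the existential edge constraint $\orredco$ and the universal-over-selections node constraint $\orrnoco$ reduce exactly to the corresponding constraint of $\re(\Pi)$; hence any valid $\re(\Pi)$-solution becomes a valid $\rere(\re(\Pi))$-solution under this map, with identical local failure probability. The main obstacle lies in Phase~1: the classical round-elimination analysis in the literature is stated only for $\Delta$-regular trees without inputs, whereas here the extension space in the definition of $\xB_h$ must accommodate all depth-$T$ degree profiles and all half-edge inputs consistent with $N$, and one must verify that such extensions always exist (this is precisely where $n \in N$ is needed). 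Once these definitions are in place, the failure-probability calculation becomes a careful but routine double union bound, with all combinatorial overhead absorbed into the constant $S$.
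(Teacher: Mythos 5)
Your Phase~2 (the singleton embedding $\xB \mapsto \{\xB\}$) is a correct observation: unwinding \cref{def:rerepi} on singletons shows that it turns any valid $\re(\Pi)$-labeling into a valid $\rere(\re(\Pi))$-labeling with identical local failure events. However, the paper does not use this route, and for good reason --- Phase~1, where all your work sits, has a genuine gap.

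The problem is the edge constraint of $\re(\Pi)$, which is \emph{universal}: $\{\xB_1,\xB_2\} \in \redco$ requires $\{\xb_1,\xb_2\} \in \edco$ for \emph{every} $\xb_1 \in \xB_1$, $\xb_2 \in \xB_2$. In your construction, $u$ computes $\xB_{(u,e)}$ by thresholding conditioned on $B_G(u,T-1)$, while $v$ computes $\xB_{(v,e)}$ by thresholding conditioned on $B_G(v,T-1)$. These two balls overlap but neither contains the other; in particular, the part of $B_G(v,T-1)$ that lies outside $B_G(u,T-1)$ is information that $u$ \emph{averages over} when deciding whether $\ell_u \in \xB_{(u,e)}$, yet it is \emph{fixed and known} to $v$. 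Consequently, $\Pr[\fA((u,e)) = \ell_u \mid B_G(u,T-1)] \geq \tau$ gives no lower bound on $\Pr[\fA((u,e)) = \ell_u \mid B_G(e,T-1/2)]$ for the \emph{actual} realization of $B_G(e,T-1/2) = B_G(u,T-1) \cup B_G(v,T-1)$ --- the latter can be $0$ even when the former is $1/2$ (e.g.\ if $\fA$'s output at $(u,e)$ is determined by a random bit of a node at distance $T-1$ from $v$). So you cannot multiply the two thresholds and conclude that $\fA$ itself violates $\edco$ at $e$ with probability roughly $\tau^2$, which is exactly the step your bound on your case (ii) would need. This is precisely the role of the paper's intermediate edge-centric algorithm $\fA_{1/2}$: the edge $e$ computes \emph{both} of $\fA_{1/2}((u,e))$ and $\fA_{1/2}((v,e))$ from the \emph{same} view $B_G(e,T-1/2)$, so both thresholds are conditioned on the same information, and in a forest the two extensions beyond $u$'s side and $v$'s side are disjoint, making the multiplication legitimate (\cref{lem:edgebound}). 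The second half-step ($\fA_{1/2} \to \fA'$) is node-centric and handles the universal \emph{node} constraint of $\rere(\re(\cdot))$ for the same reason. Your one-step node-centric scheme aligns the wrong centricity with the universal constraint, and the claimed $(T-1)$-round node-centric algorithm for $\re(\Pi)$ is in general too strong a statement; $\re(\Pi)$ is only ``half a round'' easier than $\Pi$, and the two half-steps are both essential.
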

In other words, we want to show, roughly speaking, that we can solve $\rere(\re(\Pi))$ at least one round faster than $\Pi$ if we allow the indicated increase in the local failure probability of the randomized algorithm solving the problem.

For the proof of Theorem~\ref{thm:fullspeedup}, we will make use of an approach that is an extension of the approach known for the setting of LCLs on regular trees without inputs (see, e.g., \cite{balliu2019LB,balliu2020ruling,balliuseek21}).
More specifically , we will explicitly define an algorithm $\fA'$ (depending on $\fA$) that satisfies the properties stated in Theorem~\ref{thm:fullspeedup}.
While in the regular setting without inputs, the possibilities how the input graph could continue beyond the view of a node $v$ differ only in the random bits of the nodes beyond $v$'s view, the irregularity and inputs in our setting require us to first perform a simulation step in the definition of $\fA'$ that simulates all possible topologies and inputs that a node could encounter (not too far) beyond its view before considering the randomness contained in each such extension defined by the combination of topology and inputs.
As we show, the resulting more complex definition of $\fA'$ still allows us to give an upper bound on the increase of the local failure probability from $\fA$ to $\fA'$ that suffices for our purposes.

It should be noted that, due to the fact that we consider a large class of LCL problems at once (and not a single fixed LCL problem, in which case certain simplification techniques might be applicable), the increase in the number of output labels from $\Pi$ to $\rere(\re(\Pi))$ is doubly exponential.
Since the bound on the local failure probability of $\fA'$ depends (superlinearly) on the number of output labels of $\Pi$, and, ultimately, we want to apply Theorem~\ref{thm:fullspeedup} iteratively (starting with $\Pi$), we cannot apply Theorem~\ref{thm:fullspeedup} more than $\Theta(\log^* n)$ times before the local failure probability grows too large.
This provides a (different) explanation why we cannot extend the $\omega(1) - o(\log^* n)$ gap further with our approach (which also follows from the fact that there are problems with complexity $\Theta(\log^* n)$).

Note that we will define $\fA'$ for all forests from $\fF$, but will prove the guarantee on the local failure probability of $\fA'$ only for the forests in $\fF_N$.
Also recall that, for any node $u$ in a graph $G$, we denote the $r$-hop neighborhood of $u$ in $G$ by $B_G(u,r)$.
By abuse of notation, we will use $B_G(u,r)$ both for all the information contained in the $r$-hop neighborhood of $u$ (i.e., the topology, inputs, and random bits) and for the respective subgraph of $G$ (including input information, but no random bits).

\paragraph*{Deriving $\fA'$}
Let $\fA$ be a randomized algorithm for some node-edge-checkable LCL problem $\Pi$ with runtime $T = T(n)$.
If $T = 0$, we can simply let $\fA'$ simulate $\fA$, and then, for each half-edge $h$, transform the intermediate output $\ell \in \spout$ returned by $\fA$ on $h$ into the final output $\{\{\ell\}\} \in \rrspout$ on $h$.
By construction, $\fA'$ fails on some edge, resp.\ node, if and only if $\fA$ fails on the same edge, resp.\ node; since $p \leq Sp^{1/(3\Delta + 3)}$ for the $S$ specified in Theorem~\ref{thm:fullspeedup}, it follows that $\fA'$ satisfies the properties required in Theorem~\ref{thm:fullspeedup}.
Hence, we will assume in the following that $T \geq 1$.

In order to derive $\fA'$ from $\fA$, we first derive an ``intermediate'' algorithm $\fA_{1/2}$ for $\re(\Pi)$ from $\fA$, and then we derive $\fA'$ from $\fA_{1/2}$.
As we will show later, $\fA_{1/2}$ solves $\re(\Pi)$ (on $\fF_N$) with a moderately increased local failure probability (and, intuitively speaking, very slightly reduced runtime) compared to $\fA$; a similar moderate increase in local failure probability (and slight decrease in runtime) is incurred when going from $\fA_{1/2}$ to $\fA'$.

Deviating from the usual convention that nodes are the entities performing the computation, we will assume for $\fA_{1/2}$ that the edges of the input forest perform the computation.
This is not in contradiction with the definition of the \local model as $\fA_{1/2}$ is only a construct defined for the design of $\fA'$; the actual computation in $\fA'$ is performed by the nodes.
Algorithm $\fA_{1/2}$ proceeds as follows.

Each edge $e = \{u,v\}$ in the input forest $G \in \fF$ first collects all information (i.e., the topology, inputs, and random bits) contained in the union $B_G(e,T - 1/2) := B_G(u,T - 1) \cup B_G(v,T - 1)$ of the ($T - 1$)-hop neighborhoods of $u$ and $v$.
Then, $e$ determines the output label $\ell'$ it outputs on half-edge $(u,e)$ as follows, depending on some parameter $0 < K \leq 1$ that we will choose later.
Label $\ell'$ is simply the set of all labels $\ell$ such that there exists an input forest $G' \in \fF$ (including input labels) and an edge $e'$ in $G'$ such that $B_{G'}(e', T - 1/2) \cong B_G(e, T - 1/2)$ and the probability that the node $u'$ corresponding to $u$ in the isomorphism outputs $\ell$ on $(u',e')$ according to $\fA$ is at least $K$, conditioned on
the assumption that the random bits in $B_{G'}(e', T - 1/2)$ are the same as in $B_G(e, T - 1/2)$.
Here, the isomorphism is w.r.t.\ the topology and the input labels.
In other words, in $\fA_{1/2}$, edge $e$ outputs on $(u,e)$ the set of all labels $\ell$ for which the probability that, in $\fA$, node $u$ outputs $\ell$ on $(u,e)$, conditioned on the random bits that $e$ has collected, is at least $K$ for at least one possible extension of (the topology and input labels of) the graph beyond the $(T - 1/2)$-hop view of $e$.
Edge $e$ computes the output label on half-edge $(v,e)$ analogously.
This concludes the description of $\fA_{1/2}$; in the following we derive $\fA'$ from $\fA_{1/2}$ in a fashion dual to how we derived $\fA_{1/2}$ from $\fA$.
	
In $\fA'$, each node $u$ first collects all information contained in $B_G(u, T - 1)$.
Then, for each incident edge $e$, node $u$ determines the output label $\ell''$ it outputs on half-edge $(u,e)$ as follows, depending on some parameter $0 < L \leq 1$ that we will choose later.
Label $\ell''$ is simply the set of all labels $\ell'$ such that there exists an input forest $G'' \in \fF$ and a node $u''$ in $G''$ such that $B_{G''}(u'', T - 1) \cong B_G(u, T - 1)$ and the probability that the edge $e''$ corresponding to $e$ in the isomorphism outputs $\ell'$ on $(u'',e'')$ according to $\fA_{1/2}$ is at least $L$, conditioned on
the assumption that the random bits in $B_{G''}(u'', T - 1)$ are the same as in $B_G(u, T - 1)$.
In other words, in $\fA'$, node $u$ outputs on $(u,e)$ the set of all labels $\ell'$ for which the probability that, in $\fA_{1/2}$, edge $e$ outputs $\ell'$ on $(u,e)$, conditioned on the random bits that $u$ has collected, is at least $L$ for at least one possible extension of (the topology and input labels of) the graph beyond the $(T - 1)$-hop view of $u$.
This concludes the description of $\fA'$.

In the following, for all forests in $\fF_N$, we bound the local failure probability of $\fA_{1/2}$, depending on (the bound on) the local failure probability of $\fA$.
We start by proving two helper lemmas.
For any edge $e$, resp.\ node $u$, let $p^*_e$, resp.\ $p^*_u$, denote the probability that $\fA_{1/2}$ fails at edge $e$, resp.\ node $u$.
Moreover, for graphs $G$, $G'$, let $f_{\inn}$, resp.\ $f'_{\inn}$, be the functions that, for each half-edge in $G$, resp.\ $G'$, return the input label of the half-edge.
Finally, we will use $\fA(h)$ and $\fA_{1/2}(h)$ to denote the labels that $\fA$ and $\fA_{1/2}$, respectively, output on some half-edge $h$.

\begin{lemma}\label{lem:edgebound}
    Let $e = \{u, v\}$ be an arbitrary edge in an arbitrary forest $G \in \fF_N$.
    It holds that $p^*_e \leq ps/(K^2)$, where $s = (3\lvert \spinn \rvert)^{2\Delta^{T+1}}$.
\end{lemma}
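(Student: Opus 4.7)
First I would unpack what it means for $\fA_{1/2}$ to fail at $e$ under $\re(\Pi)$ and turn each such failure into a failure of the original algorithm $\fA$ at the corresponding edge of a carefully spliced forest. A failure happens either because the edge configuration $\{\fA_{1/2}((u,e)),\fA_{1/2}((v,e))\}$ is not in $\redco$, which requires a pair $\ell_1\in\fA_{1/2}((u,e)),\ell_2\in\fA_{1/2}((v,e))$ with $\{\ell_1,\ell_2\}\notin\edco$, or because one of the two output sets is not contained in the $\rgee$ image of the incident input label. Directly from the definition of $\fA_{1/2}$, each witness label $\ell_i$ is certified by an extension $\xi_i$, i.e.\ a forest $G'_i\in\fF$ with edge $e'_i$ whose $(T-1/2)$-view matches $B_G(e,T-1/2)$, such that conditioned on the random bits $R$ of this shared ball agreeing with $G$, $\fA$ outputs $\ell_i$ at the corresponding half-edge with probability at least $K$. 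I would parametrize failures by pairs of ``side extensions'' $(\xi_u,\xi_v)$ together with the bad labels they certify.

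Second, I would splice the two extensions. Because $G$ is a forest and $B_G(e,T-1/2)=B_G(u,T-1)\cup B_G(v,T-1)$, the fresh nodes added by $\xi_u$ (at distance exactly $T$ from $u$ on $u$'s side) are automatically disjoint from those added by $\xi_v$ on $v$'s side, so the two extensions graft simultaneously onto $B_G(e,T-1/2)$ into a single max-degree-$\Delta$ forest $G^*_{\xi_u,\xi_v}$. The assumption $G\in\fF_N$, i.e.\ $T(n)+2\le\log_\Delta n$, guarantees there is enough room to realize $G^*_{\xi_u,\xi_v}$ as a member of $\fF$, where the local failure guarantee $p$ for $\fA$ applies. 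The key structural point is that the random bits outside $R$ that influence $\fA((u,e))$ in $G^*_{\xi_u,\xi_v}$ live on the $\xi_u$-side, while those influencing $\fA((v,e))$ live on the $\xi_v$-side, making the two events conditionally independent given $R$.

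Finally I would combine these ingredients. Fixing a witness $w=(\xi_u,\xi_v,\ell_1,\ell_2)$ with $\{\ell_1,\ell_2\}\notin\edco$, let $E_w$ denote the event that the random bits in the shared ball certify $w$. Conditional independence gives $\Pr[\fA\text{ fails at }e\text{ in }G^*_{\xi_u,\xi_v}\mid R]\ge K^2$ whenever $R\in E_w$, and the local failure bound on $\fA$ yields $K^2\Pr[E_w]\le p$, i.e.\ $\Pr[E_w]\le p/K^2$ (the $\rgee$-failure case is easier and gives $\Pr[E_w]\le p/K$, since only one extension is needed). A union bound over all witness types delivers $p^*_e\le sp/K^2$ once we check that the number of relevant extension configurations is at most $s=(3|\spinn|)^{2\Delta^{T+1}}$; this follows from a crude count of isomorphism classes of input-labeled $T$-hop topologies around $e$, where for each of the at most $2\Delta^{T+1}$ node positions in the combined view we absorb the local topological and input-label choices into the factor $3|\spinn|$. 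The main obstacle is the splicing step: one has to verify carefully that independently chosen $\xi_u$ and $\xi_v$ really glue into a single forest of max-degree $\Delta$ in the correct graph class, and that the conditional independence of the two $\fA$-outputs on the half-edges of $e$ is genuine; here the ``forest'' hypothesis (rather than ``tree'') is essential to avoid the global node-count constraint that would otherwise obstruct the construction.
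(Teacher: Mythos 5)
Your proof is correct and follows essentially the same approach as the paper's: both case-split on the two failure modes of $\fA_{1/2}$, both observe that the forest structure makes the two half-edge outputs conditionally independent given the random bits in $B_G(e,T-1/2)$, and both bound the number of possible ``beyond-the-view'' extensions by $s$ and finish with a counting step (you phrase it as a union bound over extension pairs $(\xi_u,\xi_v)$; the paper phrases it as a pigeonhole over the at most $s$ balls $B_{G'}(e,T+1/2)$---these are logically equivalent). The one point worth tightening is the final counting step: as written, your witnesses $w$ include the bad label pair $(\ell_1,\ell_2)$, which would inflate the union-bound count beyond $s$; you should instead define $E_{(\xi_u,\xi_v)}$ as the event that this extension pair certifies \emph{some} bad label pair, for which the same bound $\Pr[E_{(\xi_u,\xi_v)}]\le p/K^2$ holds, and union-bound over the at most $s$ extension pairs. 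Also, ``enough room to realize $G^*$ as a member of $\fF$'' should be made precise as ``$G^*$ can be realized with exactly $|V(G)|$ nodes,'' since the local failure probability $p$ is a function of $n$; this is exactly what the condition $G\in\fF_N$ (i.e., $T(n)+2\le\log_\Delta n$) buys, and the paper states this explicitly.
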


\begin{proof}
	Consider an arbitrary assignment of random bits in $B_G(e, T - 1/2)$ for which $\fA_{1/2}$ fails on $e$, i.e., for which
	\begin{enumerate}
	    \item\label{badcase1} the cardinality-$2$ multiset of labels that $\fA_{1/2}$ outputs on $(u,e)$ and $(v,e)$ is not contained in $\redco$ \\ , i.e., $\{\fA_{1/2}((u,e)), \fA_{1/2}((v,e))\} \notin \redco$, or
	    \item\label{badcase2} we have $\fA_{1/2}((u,e)) \notin \rgee(f'_{\inn}((u,e)))$ or $\fA_{1/2}((v,e)) \notin \rgee(f'_{\inn}((v,e)))$.
	\end{enumerate}
	First, consider the case that Condition~\ref{badcase1} is satisfied.
	Then, by the definition of $\re(\Pi)$, there are two labels $\ell_u \in \fA_{1/2}((u,e))$, $\ell_v \in \fA_{1/2}((v,e))$ such that $\{\ell_u, \ell_v\} \notin \edco$.
	Moreover, by the definition of $\fA_{1/2}$, there is a forest $G' \in \fF_N$ containing\footnote{For better readability, we refrain from using the mathematically precise term ``isomorphism'' in the following, and instead identify isomorphic objects with each other, e.g., we consider $B_G(e, T - 1/2)$ to be a subgraph of $G'$ if it is isomorphic to some subgraph of $G'$.} $B_G(e, T - 1/2)$ such that $|V(G')| = |V(G)|$ and, conditioned on the already fixed random bits in $B_{G'}(e, T - 1/2) = B_G(e, T - 1/2)$, the probability that $\fA$, when executed on $G'$, returns $\ell_u$ on $(u,e)$ and $\ell_v$ on $(v,e)$ is at least $K^2$.
	Note that we use here that the input graph is a forest: in order to be able to simply multiply the two probabilities of $\ell_u$ being returned on $(u,e)$ and $\ell_v$ being returned on $(v,e)$ (which are both lower bounded by $K$), we require that those two probabilities are independent (which is guaranteed if the input graph is a forest, as then $B_{G'}(u, T) \setminus B_{G'}(e, T - 1/2)$ and $B_{G'}(v, T) \setminus B_{G'}(e, T - 1/2)$ are disjoint).
	Note further that we use that $G \in \fF_N$ (which implies that the number of nodes in $B_G(e, T - 1/2)$ is sufficiently small compared to $|V(G)|$) to guarantee the property $|V(G')| = |V(G)|$, and that this property is needed because the lemma statement relating the probabilities $p^*_e$ and $p$ (which technically speaking are functions of the number $n$ of nodes) is supposed to hold for any fixed $n$.
	Finally, note that this is a place where it is crucial that we consider forests, not trees, as on trees is might be the case that no graph $G'$ as described exists: if the two extensions beyond $B_G(e, T - 1/2)$ that are responsible for the containment of $\ell_u$ in $\fA_{1/2}((u,e))$ and of $\ell_v$ in $\fA_{1/2}((v,e))$ both have no edges leaving the respective extension except those connecting them to $B_G(e, T - 1/2)$, and the total number of nodes in the union of $B_G(e, T - 1/2)$ and those two extensions does not happen to be precisely $n$, then those two extensions cannot appear simultaneously if we assume the input graph to be an $n$-node tree.
	
	Now consider the case that Condition~\ref{badcase2} is satisfied.
	Then, by the definition of $\re(\Pi)$, there is some label $\ell_u \in \fA_{1/2}((u,e))$ satisfying $\ell_u \notin \gee(f'_{\inn}((u,e)))$ or some label $\ell_v \in \fA_{1/2}((v,e))$ satisfying $\ell_v \notin \gee(f'_{\inn}((v,e)))$. With an analogous argumentation to the one used in the previous case, we obtain that there is a forest $G' \in \fF_N$ containing $B_G(e, T - 1/2)$ such that $|V(G')| = |V(G)|$ and the probability that $\fA$, when executed on $G'$, fails on $e$ is at least $K$.
	
	Since $0 < K \leq 1$, we can conclude that in either case the probability (conditioned on the already fixed random bits in $B_{G'}(e, T - 1/2)$) that, in $G'$, $\fA$ fails on $e$ is at least $K^2$.
	Observe that the output of $\fA$ on the two half-edges belonging to $e$ depends only on $B_{G'}(e, T + 1/2) = B_{G'}(u,T) \cup B_{G'}(u,T)$.
	Given the fixed topology and input in $B_{G'}(e, T - 1/2)$, there are at most $s := (3\lvert \spinn \rvert)^{2\Delta^{T+1}}$ different possibilities for the topology and input in $B_{G'}(e, T + 1/2)$: there are at most $2\Delta^T$ nodes in $B_{G'}(e, T + 1/2) \setminus B_{G'}(e, T - 1/2)$, and for each of the $\Delta$ possible ports of such a node, there are at most $3$ possibilities regarding topology (being connected to a node in $B_{G'}(e, T - 1/2)$, being connected to a node outside of $B_{G'}(e, T + 1/2)$, or non-existing due to the degree of the node being too small) and at most $\lvert \spinn \rvert$ possibilities regarding the input label on the half-edge corresponding to that port.
	Let $\fB$ denote the set of different balls of the form $B_{G'}(e, T + 1/2)$ (for some $G'$) that contain $B_G(e, T - 1/2)$.
	As shown above, $\lvert \fB \rvert \leq s$.
	
	Now, forget the fixing of the random bits in $B_G(e, T - 1/2)$.
	By the above discussion, it follows that there is some ball $B \in \fB$ (and therefore also some forest $G'$) containing $B_G(e, T - 1/2)$ such that the probability that $\fA$, when executed on $B$ (or $G'$), fails on $e$ is at least $(p^*_e/s) \cdot K^2$.
	Since this probability is upper bounded by $p$, we obtain $p^*_e \leq ps/(K^2)$, as desired.
\end{proof}

\begin{lemma}\label{lem:nodebound}
    Let $u$ be an arbitrary node in an arbitrary forest $G \in \fF_N$.
    It holds that $p^*_u \leq p + \lvert \spout \rvert \Delta K + \frac{ps\Delta}{K}$, where $s = (3\lvert \spinn \rvert)^{2\Delta^{T+1}}$.
\end{lemma}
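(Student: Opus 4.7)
The plan is to decompose the failure event of $\fA_{1/2}$ at $u$ into two sub-events by cause of failure, and bound each in a way that mirrors and complements the analysis in Lemma~\ref{lem:edgebound}. Write $d = \deg(u)$, let $e_1,\dots,e_d$ be the edges incident to $u$, and set $L_i := \fA_{1/2}((u,e_i))$. Sub-event $A$ is the violation of the $\rgee$ constraint on some incident half-edge: there exist $i \in [d]$ and $\ell \in L_i$ with $\ell \notin \gee(f_{\inn}((u,e_i)))$. Sub-event $B$ is the violation of the node constraint of $\re(\Pi)$ at $u$: $\{L_1,\dots,L_d\} \notin \rnoco^d$. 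Since a failure of $\fA_{1/2}$ at $u$ entails $A$ or $B$, we have $p^*_u \leq \Pr[A] + \Pr[B]$, so it suffices to show $\Pr[A] \leq \frac{ps\Delta}{K}$ and $\Pr[B] \leq p + \lvert\spout\rvert\Delta K$.

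For $\Pr[A]$ I would argue exactly as in Lemma~\ref{lem:edgebound}. If $\omega$ witnesses $A$ via some pair $(i,\ell)$, then the definition of $L_i$ supplies an extension $G^*$ of $B_G(e_i,T-1/2)$ such that $\Pr[\fA_{G^*}((u,e_i)) = \ell \mid \omega|_{B_G(e_i,T-1/2)}] \geq K$. Since the input on $(u,e_i)$ is preserved by the isomorphism defining the extension, outputting $\ell$ forces $\fA$ to fail at $(u,e_i)$ in $G^*$. Hence $A \subseteq \bigcup_{i,G^*} \mathrm{BAD}_{i,G^*}$, where $\mathrm{BAD}_{i,G^*}$ is the event (on $\omega|_{B_G(e_i,T-1/2)}$) that the conditional failure probability of $\fA_{G^*}$ at $(u,e_i)$ is at least $K$. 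The inequality $\Pr[\mathrm{BAD}_{i,G^*}]\cdot K \leq p$ follows by conditioning and using the hypothesis on $\fA$, and the number of $(i,G^*)$ pairs is at most $\Delta\cdot s$ because $s = (3\lvert\spinn\rvert)^{2\Delta^{T+1}}$ already bounds the number of one-hop extensions of any fixed $B_G(e_i,T-1/2)$. A union bound gives $\Pr[A] \leq \frac{ps\Delta}{K}$.

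For $\Pr[B]$ I would split further according to whether $\fA$ itself fails at $u$ in $G$. The failing case contributes at most $p$ by hypothesis. In the non-failing case, the outputs $\fA((u,e_1)),\dots,\fA((u,e_d))$ form a valid node configuration in $\noco^d$, yet event $B$ asserts that no selection in $L_1\times\dots\times L_d$ lies in $\noco^d$, so some $\fA((u,e_i))$ must fall outside $L_i$. The crucial observation is that $G$ itself is an admissible extension in the definition of $L_i$, so the contrapositive of that definition yields, for any label $\ell \notin L_i(\omega)$, the bound $\Pr[\fA_G((u,e_i)) = \ell \mid \omega|_{B_G(e_i,T-1/2)}] < K$. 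Summing over $\ell \in \spout$ and union-bounding over the $d\leq\Delta$ edges gives $\Pr[\exists i : \fA((u,e_i)) \notin L_i] \leq \lvert\spout\rvert\Delta K$, hence $\Pr[B] \leq p + \lvert\spout\rvert\Delta K$. Adding the two bounds completes the plan.

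The main obstacle is the non-failing case of $B$: one must convert a purely combinatorial non-containment statement about $\{L_1,\dots,L_d\}$ into a probabilistic bound, by exploiting that $G$ is its own witnessing extension. Care is needed because the conditioning must be performed on the bits in $B_G(e_i,T-1/2)$ (which also determine $L_i$) rather than on the strictly smaller view $B_G(u,T-1)$ that $u$ actually sees; working on this larger $\sigma$-algebra is precisely what makes the $\lvert\spout\rvert\Delta K$ term free of the extension count $s$, which in turn is what makes the combined estimate $p + \lvert\spout\rvert\Delta K + \frac{ps\Delta}{K}$ tight enough to be useful when Lemma~\ref{lem:nodebound} is later composed with Lemma~\ref{lem:edgebound} to prove Theorem~\ref{thm:fullspeedup}.
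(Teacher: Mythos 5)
Your proof is correct and follows essentially the same approach as the paper: the same decomposition into the node-constraint sub-event and the $\rgee$-constraint sub-event, the same ``$G$ itself is a witnessing extension'' observation that gives the $\lvert\spout\rvert\Delta K$ term without any $s$ factor, and the same counting of extensions to get the $ps\Delta/K$ term. The only cosmetic difference is that for the $\rgee$ sub-event you take a direct union bound over all pairs $(i,G^*)$, whereas the paper pigeonholes first on $i$ and then on the ball $B_{G'}(u,T)$; both yield the same estimate.
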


\begin{proof}
	Let $e_1, \dots, e_{\deg(u)}$ denote the edges incident to $u$.
	Moreover, let $p^{(1)}_u$ denote the probability that \\ $\{\fA_{1/2}((u, e_i))\}_{1 \leq i \leq \deg(u)} \notin \rnoco^{\deg(u)}$ and $p^{(2)}_u$ the probability that there exists some $1 \leq i \leq \deg(u)$ satisfying $\fA_{1/2}((u, e_i))\} \notin \rgee(f_{\inn}((u,e_i)))$.
	Since those two conditions together cover all cases in which $\fA_{1/2}$ fails at $u$, we have $p^*_u \leq p^{(1)}_u + p^{(2)}_u$.
	We start by bounding $p^{(1)}_u$.
	
	Observe that correctness at $u$, for both $\fA$ and $\fA_{1/2}$, depends only on $B_G(u, T)$.
	Given the topology and input in $B_G(u, T)$ (which is already fixed since we are considering some fixed graph $G$), we call, for each label $\ell \in \spout$ and each $1 \leq i \leq \deg(u)$, an assigment of random bits in $B_G(u, T)$ \emph{bad for the pair $(\ell,i)$} if $\fA((u, e_i)) = \ell$ and $\ell \notin \fA_{1/2}((u, e_i))$ (under this assignment).
	Observe that the definition of $\fA_{1/2}$ ensures, for each fixed assignment of random bits in $B_G(e_i, T-1/2)$, that if $\ell \notin \fA_{1/2}((u, e_i))$ under this assignment\footnote{Note that $\fA_{1/2}((u, e_i))$ is uniquely determined after fixing the random bits in $B_G(e_i, T-1/2)$.}, then the probability that $\fA((u, e_i)) = \ell$ (conditioned on the fixed assignment in $B_G(e_i, T-1/2)$) is smaller than $K$.
	Hence, (prior to fixing any random bits) for each pair $(\ell,i) \in \spout \times \{1, \dots, \deg(u)\}$, the probability that an assignment of random bits in $B_G(u,T)$ is bad for $(\ell,i)$ is smaller than $K$.
	It follows by a union bound that the probability that an assignment of random bits is bad for \emph{some} pair $(\ell,i) \in \spout \times \{1, \dots, \deg(u)\}$ is upper bounded by $\lvert \spout \rvert \cdot \Delta \cdot K$.
	
	Now, consider an arbitrary assignment of random bits in $B_G(u,T)$ such that $\{\fA_{1/2}((u, e_i))\}_{1 \leq i \leq \deg(u)} \notin \rnoco^{\deg(u)}$ under this assignment.
	We argue that then $\fA$ fails at $u$ or there is some pair $(\ell,i) \in \spout \times \{1, \dots, \deg(u)\}$ such that the assignment is bad for $(\ell,i)$: indeed, if there is no such pair and $\fA$ does not fail at $u$, then, for each $1 \leq i \leq \deg(u)$, we have $\fA((u, e_i)) \in \fA_{1/2}((u, e_i))$, which (combined again with the correctness of $\fA$ at $u$) would imply, by the definition of $\rnoco^{\deg(u)}$, that $\{\fA_{1/2}((u, e_i))\}_{1 \leq i \leq \deg(u)} \in \rnoco^{\deg(u)}$, contradicting the stated property of the considered assignment.
	We conclude that $p^{(1)}_u \leq p + \lvert \spout \rvert \cdot \Delta \cdot K$.
	
	Next, we bound $p^{(2)}_u$.
	By a union bound, it follows from the definition of $p^{(2)}_u$ that there is some $1 \leq i \leq \deg(u)$ such that the probability that $\fA_{1/2}((u, e_i)) \notin \rgee(f_{\inn}((u,e_i)))$ is at least $p^{(2)}_u/\deg(u) \geq p^{(2)}_u/\Delta$.
	
	Consider an arbitrary assignment of random bits in $B_G(e_i, T - 1/2)$ such that $\fA_{1/2}((u, e_i)) \notin \rgee(f_{\inn}((u,e_i)))$.
	Then, by the definition of $\re(\Pi)$, there is some label $\ell \in \fA_{1/2}((u, e_i))$ satisfying $\ell \notin \gee(f_{\inn}((u, e_i)))$.
	Moreover, by the definition of $\fA_{1/2}$, there is some ball $B_{G'}(u, T)$ (in some forest $G' \in \fF_N$ satisfying $|V(G')| = |V(G)|$) containing $B_G(e_i, T - 1/2)$ such that, conditioned on the already fixed random bits in $B_{G'}(e_i, T - 1/2) = B_G(e_i, T - 1/2)$, the probability that $\fA$, when executed on $B_{G'}(u, T)$ (or $G'$), returns $\ell$ on $(u,e_i)$ is at least $K$.
	Note that since $\ell \notin \gee(f_{\inn}((u, e_i)))$, returning $\ell$ on $(u, e_i)$ implies that $\fA$ fails at $u$.
	
	As already established in the proof of Lemma~\ref{lem:edgebound}, there are at most $s := (3\lvert \spinn \rvert)^{2\Delta^{T+1}}$ different possibilities for the topology and input of a ball $B_{G'}(e_i, T + 1/2)$ containing $B_G(e_i, T - 1/2)$, which implies the same bound on the number of different balls $B_{G'}(u, T)$ containing $B_G(e_i, T - 1/2)$.
	Analogously to before (and forgetting the fixing of the random bits in $B_G(e_i, T - 1/2)$), we obtain that there is some ball $B_{G'}(u, T)$ (in some forest $G'$ satisfying $|V(G')| = |V(G)|$) containing $B_G(e_i, T - 1/2)$ such that the probability that $\fA$, when executed on $B_{G'}(u, T)$ (or $G'$), fails at $u$ is at least $((p^{(2)}_u/\Delta)/s) \cdot K$.
	Since this probability is upper bounded by $p$, we obtain $p^{(2)}_u \leq ps\Delta/K$, which implies
	\begin{equation}
	    p^*_u \leq p + \lvert \spout \rvert \Delta K + \frac{ps\Delta}{K} \enspace,
	\end{equation}
	as desired.
\end{proof}

By using Lemmas~\ref{lem:edgebound} and \ref{lem:nodebound} and choosing $K$ suitably, we now give an upper bound on the local failure probability of $\fA_{1/2}$ on $\fF_N$.

\begin{lemma}\label{lem:firsthalfstep}
    Algorithm $\fA_{1/2}$ has local failure probability at most $2\Delta(s + \lvert \spout \rvert)p^{1/3}$ on $\fF_N$, where $s = (3\lvert \spinn \rvert)^{2\Delta^{T+1}}$.
\end{lemma}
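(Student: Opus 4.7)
The plan is to combine the two bounds from \cref{lem:edgebound} and \cref{lem:nodebound} by choosing the free parameter $K$ so that the edge-failure and node-failure bounds match up to constants. The local failure probability of $\fA_{1/2}$ on $\fF_N$ is the maximum of $\max_e p^*_e$ and $\max_u p^*_u$, so I need a single choice of $K \in (0,1]$ that makes both
\begin{equation*}
    \frac{ps}{K^2} \quad \text{and} \quad p + |\spout|\Delta K + \frac{ps\Delta}{K}
\end{equation*}
as small as possible. The edge bound pushes $K$ upward (to kill $ps/K^2$), while the term $|\spout|\Delta K$ in the node bound pushes it downward.

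First I would set $K := p^{1/3}$ (assuming $p>0$; the case $p=0$ is trivial as both bounds are $0$). This is legal since $p \leq 1$ implies $K \in (0,1]$. Plugging into \cref{lem:edgebound} gives $p^*_e \leq p s / p^{2/3} = s p^{1/3}$. Plugging into \cref{lem:nodebound}, and using $p \leq p^{1/3}$ as well as $p^{2/3} \leq p^{1/3}$, I obtain
\begin{equation*}
    p^*_u \;\leq\; p + |\spout|\Delta p^{1/3} + s\Delta p^{2/3} \;\leq\; \bigl(1 + |\spout|\Delta + s\Delta\bigr) p^{1/3}.
\end{equation*}

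Next I would compare both bounds against the target $2\Delta(s + |\spout|) p^{1/3}$. For the edge case, $s p^{1/3} \leq 2\Delta(s+|\spout|) p^{1/3}$ since $\Delta \geq 1$. For the node case, it suffices to check $1 + |\spout|\Delta + s\Delta \leq 2\Delta(s + |\spout|)$, i.e.\ $1 \leq \Delta(s + |\spout|)$, which holds because $\Delta, s \geq 1$. Taking the maximum over all edges and nodes then yields the claimed upper bound of $2\Delta(s + |\spout|)p^{1/3}$ on the local failure probability of $\fA_{1/2}$ on $\fF_N$.

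There is essentially no obstacle here beyond bookkeeping: all the hard work (handling the irregularity, the input labels, the dependence on possible extensions, and the forest-vs-tree subtlety that allows the probabilities across a cut edge to factorize) is already packaged into the two helper lemmas. The only thing I need to be mildly careful about is that the choice $K = p^{1/3}$ is a valid parameter (nonempty, bounded by $1$) and that the comparison between the node bound and the target is slightly asymmetric in the coefficient of $|\spout|$ versus $s$; the cheap estimates $p \leq p^{1/3}$ and $p^{2/3} \leq p^{1/3}$ absorb the additive and lower-order terms into the leading factor.
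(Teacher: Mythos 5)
Your proposal is correct and matches the paper's own proof essentially verbatim: both set $K := p^{1/3}$, plug into the two helper lemmas, and verify the resulting bounds are at most $2\Delta(s+|\spout|)p^{1/3}$. You are slightly more explicit about the elementary inequality check $1 + |\spout|\Delta + s\Delta \leq 2\Delta(s+|\spout|)$ and about why $K \in (0,1]$, but the argument is the same.
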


\begin{proof}
    Choose the parameter in the definition of $\fA_{1/2}$ as $K := p^{1/3}$.
    Then, by Lemma~\ref{lem:nodebound}, for each node $u$ in the input forest $G$, we obtain
    \begin{equation*}
        p^*_u \leq p + \lvert \spout \rvert \Delta p^{1/3} + s\Delta p^{2/3} \leq 2\Delta(s + \lvert \spout \rvert)p^{1/3} \enspace,
    \end{equation*}
    and, by Lemma~\ref{lem:edgebound}, for each edge $e$ in $G$, we obtain
    \begin{equation*}
        p^*_e \leq s p^{1/3} \leq 2\Delta(s + \lvert \spout \rvert)p^{1/3} \enspace.
    \end{equation*}
    The lemma statement follows by the definition of local failure probability.
\end{proof}

Next, we prove an analogous statement to Lemma~\ref{lem:firsthalfstep} relating the local failure probabilities of $\fA_{1/2}$ and $\fA'$.
For any edge $e$, resp.\ node $u$, let $p'_e$, resp.\ $p'_u$, denote the probability that $\fA'$ fails at edge $e$, resp.\ node $u$.

\begin{lemma}\label{lem:secondhalfstep}
    If $\fA_{1/2}$ has local failure probability at most $p^* \leq 1$ on $\fF_N$, then $\fA'$ has local failure probability at most $3(s + \lvert \rspout \rvert)(p^*)^{1/(\Delta + 1)}$ on $\fF_N$, where $s = (3\lvert \spinn \rvert)^{2\Delta^{T+1}}$.
\end{lemma}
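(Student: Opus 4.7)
The plan is to mirror the structure of Lemma~\ref{lem:firsthalfstep}, but with the roles of nodes and edges swapped, reflecting the duality between how $\re$ and $\rere$ transform node and edge constraints. Accordingly, I would first prove two helper bounds---one on the local failure probability $p'_u$ of $\fA'$ at an arbitrary node $u$, and one on the local failure probability $p'_e$ at an arbitrary edge $e$---and then choose the parameter $L$ in the definition of $\fA'$ to balance them.

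For $p'_u$, the argument is the analogue of Lemma~\ref{lem:edgebound}. A failure of $\fA'$ at $u$ splits into a structural case and a $\rrgee$-violation case. In the structural case, by the ``for all'' semantics of $\rrnoco$ in $\rere(\re(\Pi))$, there must exist a selection $\ell'_i \in \fA'((u,e_i))$ (over the incident edges $e_i$ of $u$) whose multiset is \emph{not} in $\rnoco^{\deg(u)}$. By the definition of $\fA'$, each such $\ell'_i$ comes from a separate extension $G''_i$ in which $\fA_{1/2}$ returns $\ell'_i$ on $(u,e_i)$ with probability at least $L$, conditioned on the random bits in $B_G(u,T-1)$. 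Because the input graph is a forest, the parts of these extensions that lie outside $B_G(u,T-1)$---namely the sets $B_{G''_i}(v_i,T-1)\setminus B_G(u,T-1)$, where $v_i$ is the far endpoint of $e_i$---are pairwise disjoint; hence they can be assembled into a single forest $G' \in \fF_N$ with $|V(G')|=|V(G)|$, and the $\deg(u)$ events ``$\fA_{1/2}$ outputs $\ell'_i$ on $(u,e_i)$'' become mutually independent with joint conditional probability at least $L^{\deg(u)}$. Averaging over the at most $s$ distinct extensions exactly as in Lemma~\ref{lem:edgebound} then yields a bound of order $p^* s / L^{\deg(u)} \leq p^* s / L^{\Delta}$ for the structural part. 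The $\rrgee$-violation case is handled as in Case~2 of Lemma~\ref{lem:edgebound}, giving a term of order $p^* s / L$.

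For $p'_e$, the argument is the analogue of Lemma~\ref{lem:nodebound}. I would introduce a ``bad pair'' $(\ell',w)$ with $w \in \{u,v\}$ consisting of a bit assignment for which $\fA_{1/2}$ outputs $\ell'$ on $(w,e)$ but $\ell' \notin \fA'((w,e))$; by the definition of $\fA'$ each such event has conditional probability strictly less than $L$, so a union bound over the at most $2|\rspout|$ pairs contributes at most $2|\rspout|L$. If no bad pair occurs and $\fA_{1/2}$ does not fail at $e$, then $\fA_{1/2}((u,e)) \in \fA'((u,e))$ and $\fA_{1/2}((v,e)) \in \fA'((v,e))$, and by the ``exists'' semantics of $\rredco$ in $\rere(\re(\Pi))$ the edge constraint is satisfied at $e$; the structural part is therefore bounded by $p^* + 2|\rspout|L$. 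The $\rrgee$-violation case follows Case~2 of Lemma~\ref{lem:nodebound} and adds a further term of order $p^* s / L$.

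Finally, setting $L := (p^*)^{1/(\Delta+1)}$ balances the two dominant terms $|\rspout|L$ and $p^* s / L^{\Delta}$, each of which becomes at most $\max(|\rspout|,s)\cdot(p^*)^{1/(\Delta+1)}$; the subdominant $p^* s / L$ is of order $s(p^*)^{\Delta/(\Delta+1)} \leq s(p^*)^{1/(\Delta+1)}$ since $\Delta \geq 1$. Collecting terms gives $p'_u, p'_e \leq 3(s+|\rspout|)(p^*)^{1/(\Delta+1)}$, which is the claimed bound. The main obstacle is making the independence step in the bound on $p'_u$ rigorous: one must verify that the up to $\Delta$ separately chosen extensions can be simultaneously realized as a single forest in $\fF_N$---matching both the maximum-degree bound and the exact node count---so that the joint probability of the $\deg(u)$ individual events is indeed at least $L^{\deg(u)}$. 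This is precisely where working with $\fF$ rather than $\fT$ is essential, as forests afford the slack needed to match $|V(G')|=|V(G)|$ while still accommodating all the branch-wise extensions.
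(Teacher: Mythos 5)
Your proposal is correct and follows essentially the same route as the paper: the paper's proof likewise obtains the two bounds $p'_u \leq p^*s/L^{\Delta}$ and $p'_e \leq p^* + 2\lvert \rspout \rvert L + 2p^*s/L$ by dualizing Lemmas~\ref{lem:edgebound} and~\ref{lem:nodebound} (swapping nodes and edges, reducing radii by $1/2$), and then sets $L = (p^*)^{1/(\Delta+1)}$ exactly as you do. Your discussion of the independence of the $\deg(u)$ branch-wise extensions in a forest is the correct dual of the independence argument in Lemma~\ref{lem:edgebound}, which the paper leaves implicit.
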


\begin{proof}
    We start by obtaining analogous statements to Lemmas~\ref{lem:edgebound} and \ref{lem:nodebound}.
    By simply exchanging the roles of nodes and edges\footnote{Note that, as usual for round elimination, all proofs also directly extend to hypergraphs. Hence, exchanging the roles of nodes and edges is very natural: a node $v$ of degree $\deg(v)$ simply becomes a hyperedge containing $\deg(v)$ endpoints, while an edge becomes a node of degree $2$.} and reducing the radii of the considered balls by $1/2$ (as well as using parameter $L$ instead of $K$, and using $\rspout$ instead of $\spout$), we directly obtain analogous proofs resulting in the following two statements, (assuming the stated upper bound $p^*$ on the local failure probability of $\fA_{1/2}$):
    \begin{enumerate}
        \item For any node $u$ in the input forest $G$, we have $p'_u \leq p^*s/(L^{\Delta})$.
        \item For any edge $e$ in $G$, we have $p'_e \leq p^* + \lvert \rspout \rvert \cdot 2L + p^*\cdot 2s/L$.
    \end{enumerate}
    Note that, compared to Lemmas~\ref{lem:edgebound} and \ref{lem:nodebound}, each $2$ has been replaced by $\Delta$ and vice versa, simply because the roles of edges (that end in $2$ nodes) and nodes (that ``end'' in at most $\Delta$ edges) are reversed in the analogous proofs.
    Also observe that, technically, the expression for $s$ that we obtain in the new proofs is $(3\lvert \spinn \rvert)^{\Delta \cdot \Delta^{(T-1)+1}}$; however, since this is upper bounded by the original expression for $s$, the obtained statements also hold for the original $s = (3\lvert \spinn \rvert)^{2\Delta^{T+1}}$ (which we will continue to use).
    
    Now, analogously to the choice of $K = p^{1/(2+1)}$ in the proof of Lemma~\ref{lem:firsthalfstep}, set $L := (p^*)^{1/(\Delta + 1)}$.
    We obtain, for each edge $e$ in $G$,
    \begin{equation*}
        p'_e \leq p^* + \lvert \rspout \rvert \cdot 2(p^*)^{1/(\Delta + 1)} + 2s (p^*)^{\Delta/(\Delta + 1)} \leq 3(s + \lvert \rspout \rvert)(p^*)^{1/(\Delta + 1)} \enspace,
    \end{equation*}
    and, for each node $u$ in $G$,
    \begin{equation*}
        p'_u \leq s (p^*)^{1/(\Delta + 1)} \leq 3(s + \lvert \rspout \rvert)(p^*)^{1/(\Delta + 1)} \enspace.
    \end{equation*}
    Again, the lemma statement follows by the definition of local failure probability.
\end{proof}

Combining Lemmas~\ref{lem:firsthalfstep} and \ref{lem:secondhalfstep}, we are finally ready to prove Theorem~\ref{thm:fullspeedup}.

\begin{proof}[Proof of Theorem~\ref{thm:fullspeedup}]
    Let $S$ be as specified in the theorem, and let $\fA'$ be as derived before.
    As already argued during the definition of $\fA'$, if the runtime $T = T(n)$ of $\fA$ is $0$, then $\fA'$ satisfies the stated properties regarding runtime and local failure probability.
    Hence, assume in the following that $T \geq 1$.
    
    From the definition of $\fA'$, it follows directly that the runtime of $\fA'$ is $T - 1$.
    It remains to show that $\fA'$ has local failure probability at most $Sp^{1/(3\Delta + 3)}$ on $\fF_N$.
    We start by showing that $\fA'$ has local failure probability at most $S'p^{1/(3\Delta + 3)}$ on $\fF_N$, where
    \begin{equation*}
        S' := 3 \cdot ((3\lvert \spinn \rvert)^{2\Delta^{T+1}} + \lvert \rspout \rvert) \cdot (2\Delta((3\lvert \spinn \rvert)^{2\Delta^{T+1}} + \lvert \spout \rvert))^{1/(\Delta + 1)} \enspace.
    \end{equation*}
    Indeed, if $p$ satisfies $2\Delta((3\lvert \spinn \rvert)^{2\Delta^{T+1}} + \lvert \spout \rvert)p^{1/3} \leq 1$, then this is a direct consequence of Lemmas~\ref{lem:firsthalfstep} and \ref{lem:secondhalfstep}.
    If $p$ does not satisfy the given condition, then the straightforward combination of Lemmas~\ref{lem:firsthalfstep} and \ref{lem:secondhalfstep} does not work, as the condition $p^* \leq 1$ in Lemma~\ref{lem:secondhalfstep} is not satisfied.
    However, in this case, i.e., if $2\Delta((3\lvert \spinn \rvert)^{2\Delta^{T+1}} + \lvert \spout \rvert)p^{1/3} > 1$, we obtain $S'p^{1/(3\Delta + 3)} > 1$, which trivially implies that $\fA'$ has local failure probability at most $S'p^{1/(3\Delta + 3)}$.
    
    Now, the theorem statement follows from the fact that $S' \leq S$, which in turn follows from
    \begin{align*}
        S' &= 3 \cdot ((3\lvert \spinn \rvert)^{2\Delta^{T+1}} + \lvert \rspout \rvert) \cdot (2\Delta((3\lvert \spinn \rvert)^{2\Delta^{T+1}} + \lvert \spout \rvert))^{1/(\Delta + 1)}\\
        &\leq (9(\lvert \spinn \rvert + \lvert \rspout \rvert))^{2\Delta^{T+1}} \cdot (6\Delta (\lvert \spinn \rvert + \lvert \spout \rvert))^{2\Delta^{T+1}}\\
        &\leq (10\Delta(\lvert \spinn \rvert + \max\{\lvert \spout \rvert, \lvert \rspout \rvert\}))^{2\Delta^{T+1}} \cdot (10\Delta (\lvert \spinn \rvert + \max\{\lvert \spout \rvert, \lvert \rspout \rvert\}))^{2\Delta^{T+1}}\\
        &= (10\Delta(\lvert \spinn \rvert + \max\{\lvert \spout \rvert, \lvert \rspout \rvert\}))^{4\Delta^{T+1}}\\
        &= S \enspace.
    \end{align*}
\end{proof}

\subsection{From Easier to Harder Problems}\label{subsec:goingup}

In this section, we prove the following lemma, which, in a sense, provides a counterpart to Theorem~\ref{thm:fullspeedup}: it states that the time needed to solve some problem $\Pi$ is not much larger than the time needed to solve $\rere(\re(\Pi))$.
It is a simple extension of one direction of \cite[Theorem 4.1]{brandt19automatic_speedup_theorem} to the case of graphs with inputs.
We note that the lemma also holds for randomized algorithms (with essentially the same proof), but we will only need the deterministic version.

\begin{lemma}\label{lem:up}
    Let $\Pi$ be a node-edge-checkable LCL problem and let $\fA$ be a deterministic algorithm solving $\rere(\re(\Pi))$ in $T(n)$ rounds, for some function $T$ (on some arbitrary class of graphs).
    Then there exists a deterministic algorithm $\fA'$ solving $\Pi$ in $T(n) + 1$ rounds.
\end{lemma}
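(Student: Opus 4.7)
The plan is to spend $T(n)$ rounds simulating $\fA$ and then use one extra round to execute a two-step ``picking'' procedure that unfolds the two layers of set structure introduced by $\re$ and $\rere$. Concretely, I will first pick, at each half-edge $h$, a subset $\xB(h) \in \xA(h)$ -- turning the $\rere(\re(\Pi))$-labeling returned by $\fA$ into an $\re(\Pi)$-labeling -- and then pick a label $\xa(h) \in \xB(h)$, turning the $\re(\Pi)$-labeling into a $\Pi$-labeling. Why this is even possible comes from the structural duality between $\re$ and $\rere$ visible in \cref{def:repi,def:rerepi}: at each picking step, one of the two constraints (node or edge) is automatically satisfied by an arbitrary choice, while the other constraint guarantees the existence of a safe joint choice at the relevant edge or node.

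Carrying this out, after $T(n)$ rounds each node $v$ knows the $\rere(\re(\Pi))$-label $\xA((v,e))$ on its incident half-edges, and one additional round lets $v$ also see the $T(n)$-hop neighborhood of each neighbor $u$, so both endpoints of an edge $e=\{u,v\}$ know both of $\xA((u,e))$ and $\xA((v,e))$. Using any canonical tie-breaker based on the pair of identifiers, $u$ and $v$ independently settle on the same pair $(\xB_u,\xB_v) \in \xA((u,e))\times\xA((v,e))$ with $\{\xB_u,\xB_v\} \in \redco$; such a pair exists by the definition of $\rredco$ applied to the valid $\rere(\re(\Pi))$-edge-configuration $\{\xA((u,e)),\xA((v,e))\}$. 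Now each node $v$ sees a multiset $\{\xB^{(j)}\}_{j=1}^{\deg(v)}$ of $\re(\Pi)$-labels on its incident half-edges, and the definition of $\rrnoco$ guarantees that this multiset lies in $\rnoco^{\deg(v)}$ regardless of which choices the incident edges made. Thus $v$ can, purely locally, select $\xa_j \in \xB^{(j)}$ with $\{\xa_j\}_j \in \noco^{\deg(v)}$ (again via a deterministic tie-break), and output $\xa_j$ on half-edge $(v,e_j)$.

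The remaining correctness check is essentially bookkeeping. Node constraints of $\Pi$ hold by construction; edge constraints hold because once an edge has committed to a pair $(\xB_u,\xB_v) \in \redco$, the very definition of $\redco$ says that \emph{every} pair of elements chosen from $\xB_u \times \xB_v$ already lies in $\edco$, so the independent node-level selections at $u$ and at $v$ cannot conflict; and input constraints propagate through the tower $\rrgee(\ell) = 2^{2^{\gee(\ell)}}$, giving $\xB \subseteq \gee(\ell)$ and hence $\xa \in \gee(\ell)$.

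The only delicate point is the coordination needed in the edge-picking step: before round $T(n)+1$, node $v$ does not yet know $\xA((u,e))$ and therefore cannot apply the same tie-break as $u$, which is precisely why the single extra round is necessary. Everything else -- in particular the second, node-level picking -- is a purely local consequence of the combinatorial definitions of $\re$ and $\rere$. The same argument extends verbatim to randomized algorithms, since the tie-breaks only refer to information visible to both endpoints of an edge.
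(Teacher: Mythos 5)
Your proposal is correct and follows essentially the same two-step unfolding argument as the paper: simulate $\fA$ for one extra round so both endpoints of each edge see both $\rere(\re(\Pi))$-labels, then use the existential guarantee of $\orredco$ to agree on an $\re(\Pi)$-edge-configuration and the existential guarantee of $\rnoco$ to select a $\Pi$-node-configuration, with the complementary universal guarantees of $\orrnoco$ and $\redco$ ensuring the other constraints hold automatically. The only cosmetic difference is that the paper breaks ties canonically as a function of the pair $(\fA((v,e)),\fA((w,e)))$ rather than via identifiers; both work.
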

\begin{proof}
    We will define $\fA'$ as follows, depending on $\fA$.\footnote{Note that the $\fA$ and $\fA'$ considered here are not the same as the ones considered in Section~\ref{subsec:goingdown}.}
    
    For each half-edge $h$, let $\fA(h)$ denote the output label that $\fA$ outputs at $h$.
    In $\fA'$, each node $v$ starts by computing $\fA(h)$ for each half-edge $h = (w,e)$ such that $w$ is a neighbor of or identical to $v$ and $e$ is an edge incident to $v$.
    As, in $\fA$, each neighbor of $v$ computes its output in $T(n)$ rounds, $v$ can compute all mentioned $\fA(h)$ in $T(n) + 1$ rounds, by simulating $\fA$.
    Node $v$ will decide on its output solely based on the information collected so far, which implies that the runtime of $\fA'$ is $T(n) + 1$.
    
    For choosing its output, $v$ proceeds in two steps (computed without gathering any further information).
    In the first step, for each incident edge $e = \{ v, w \}$, node $v$ chooses, in some deterministic fashion, a label $L_{(v,e)} \in \fA((v,e))$ and a label $L_{(w,e)} \in \fA((w,e))$ such that $\{ L_{(v,e)}, L_{(w,e)} \} \in \redco$.
    Such a pair of labels exists, by the definition of $\rredco$ (and the fact that $\fA$ correctly solves $\rere(\re(\Pi))$).
    Moreover, the definition of $\rrnoco^{\deg(v)}$ implies that the multiset $\{ L_{(v,e')} \}_{e' \ni v}$ is contained in $\rnoco^{\deg(v)}$, and the definition of $\rrgee$ implies that $L_{(v,e)} \in \rgee(i_{(v,e)})$ and $L_{(w,e)} \in \rgee(i_{(w,e)})$ where $i_{(v,e)}, i_{(w,e)} \in \spinn$ denote the input labels assigned to $(v,e)$ and $(w,e)$, respectively.
    Note also that since the labels $L_{(v,e)}$ and $L_{(w,e)}$ are chosen in a deterministic fashion, depending only on $\fA((v,e))$ and $\fA((w,e))$, node $v$ and node $w$ will compute the same label $L_{(v,e)}$ and the same label $L_{(w,e)}$.
    We conclude that labeling each half-edge $h$ with label $L_{h}$ yields a correct solution for $\re(\Pi)$.
    
    In the second step, $v$ computes the final output for each half-edge incident to $v$, in a fashion analogous to the first step.
    More precisely, for each incident edge $e$, node $v$ chooses a final output label $\ell_{(v,e)} \in L_{(v,e)}$ such that the multiset $\{ \ell_{(v,e')} \}_{e' \ni v}$ is contained in $\noco^{\deg(v)}$.
    Such labels $\ell_{(v,e)}$ exist, by the definition of $\rnoco^{\deg(v)}$ (and the fact that the labeling assigning $L_h$ to each half-edge $h$ correctly solves $\re(\Pi)$).
    Moreover, the definition of $\redco$ implies that, for any edge $e = \{ v, w \}$, the multiset $\{ \ell_{(v,e)}, \ell_{(w,e)} \}$ is contained in $\edco$, and the definition of $\rgee$ implies that, for any half-edge $(v,e)$, we have $\ell_{(v,e)} \in \gee(i_{(v,e)})$.
    
    We conclude that labeling each half-edge $h$ with label $\ell_{h}$ yields a correct solution for $\Pi$.
    It follows that $\fA'$ solves $\Pi$ in $T(n) + 1$ rounds, as desired.
\end{proof}

\subsection{Proving the Gap}\label{subsec:alltog}

In this section, we will finally prove our main result that on forests or trees, any LCL problem with complexity $o(\log^* n)$ can be solved in constant time.
We will first take care of the case of forests.

\begin{theorem}\label{thm:REmain}
    Let $\Pi$ be an arbitrary LCL problem that has (deterministic or randomized) complexity $o(\log^* n)$ on $\fF$.
    Then $\Pi$ can be solved in constant time on $\fF$ (both deterministically and randomized).
\end{theorem}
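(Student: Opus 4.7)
The plan is to iteratively apply the round-elimination speedup of \cref{thm:fullspeedup} until we reach a $0$-round randomized algorithm with vanishingly small local failure probability, derive from it a deterministic $0$-round algorithm for the bottom-of-the-chain problem, and finally apply \cref{lem:up} iteratively to lift this back up to a constant-round deterministic algorithm for $\Pi$. By \cref{lem:LCL_and_nodeedgeLCL_are_same} I may assume $\Pi$ is node-edge-checkable; fix a randomized algorithm $\fA$ solving $\Pi$ on $\fF$ in $T(n) = o(\log^* n)$ rounds with (global, hence local) failure probability at most $1/n$. Define $\Pi_0 := \Pi$ and $\Pi_{i+1} := \rere(\re(\Pi_i))$. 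For a sufficiently large constant $n_0$, iterating \cref{thm:fullspeedup} $T(n_0)$ times starting from $\fA_0 := \fA$ produces randomized algorithms $\fA_i$ for $\Pi_i$ of runtime $\max\{0, T(n_0)-i\}$ on suitable classes $\fF_{N_i}$, whose local failure probabilities obey $p_{i+1} \leq S_i\, p_i^{1/(3\Delta+3)}$, ending with a $0$-round algorithm $\fA^*$ for $\Pi^* := \Pi_{T(n_0)}$.

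The first check is quantitative. Unrolling the recurrence gives $p^* \leq \bigl(\prod_{i=0}^{T(n_0)-1} S_i^{(3\Delta+3)^{-(T(n_0)-1-i)}}\bigr) \cdot n_0^{-(3\Delta+3)^{-T(n_0)}}$. Each step of $\rere \circ \re$ inflates the output-label set by at most two nested power-set operations, so $|\Sigma^{\Pi_i}_{\out}|$ is bounded by a tower of exponentials of height $O(i)$; since $T(n_0) = o(\log^* n_0)$, this tower (and the aggregate built from it) grows slower than any fixed positive power of $n_0$. Meanwhile $(3\Delta+3)^{T(n_0)} = o(\log n_0)$, so the dominant factor $n_0^{-(3\Delta+3)^{-T(n_0)}}$ decays as $\exp(-(\log n_0)^{1-o(1)})$ and beats the aggregate. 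Hence $p^* \to 0$ as $n_0 \to \infty$, and we may take $p^*$ below any threshold depending only on $\Pi^*$.

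The hard part is extracting a \emph{deterministic} $0$-round algorithm for $\Pi^*$ from $\fA^*$. The key observations are that the output of a $0$-round algorithm at each node depends only on the node's own local configuration (degree, port assignment, incident input labels, identifier, random bits), and that outputs at distinct nodes use independent randomness. There are only finitely many local configurations modulo identifiers, and for each configuration $c$ the algorithm $\fA^*$ induces a distribution $D_c$ over output assignments at $c$. With $p^*$ sufficiently small, I plan to show that one can canonically choose $f(c)$ supported by $D_c$ so that (i) $f(c)$ satisfies the node constraint and the input--output constraint at $c$, and (ii) for every edge-admissible pair $(c_1,c_2)$ the combined choice $(f(c_1),f(c_2))$ satisfies the edge constraint. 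A Ramsey-type argument in the spirit of \cite{naorstockmeyer,chang2017time_hierarchy} additionally eliminates dependence on identifiers, yielding a deterministic $0$-round algorithm for $\Pi^*$. I expect the simultaneous reconciliation of node and edge constraints across every edge-compatible pair of local configurations to be the delicate point of the argument.

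With a deterministic $0$-round algorithm for $\Pi^*$ in hand, I apply \cref{lem:up} $T(n_0)$ times to obtain a deterministic $T(n_0)$-round algorithm for $\Pi = \Pi_0$. Since $n_0$ is fixed, $T(n_0)$ is an absolute constant. The resulting algorithm inspects only a constant-radius neighborhood of each node, and its correctness is certified purely by the local constraints of $\Pi$; although the construction initially guarantees correctness only on $\fF_{N_{T(n_0)}}$, a standard argument using the fact that every constant-radius neighborhood appearing in any forest in $\fF$ also appears in some larger forest in $\fF_{N_{T(n_0)}}$ extends the algorithm's correctness to all of $\fF$, yielding the desired constant-time deterministic algorithm.
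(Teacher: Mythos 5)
Your overall strategy matches the paper's: reduce to node-edge-checkable LCLs, iterate \cref{thm:fullspeedup} down to a $0$-round randomized algorithm $\fA^*$ with tiny local failure probability, extract a deterministic $0$-round algorithm, and iterate \cref{lem:up} back up. The quantitative analysis of the failure-probability recurrence is also essentially the one in the paper. But the step you flag as "delicate" — extracting a deterministic $0$-round algorithm for $\Pi^*$ — is exactly where your proposal stops short of being a proof, and it is also where you take an unnecessary detour.

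First, the Ramsey argument you invoke is a red herring: in the randomized \local model as defined in \cref{def:local_model}, nodes are equipped with private random bit strings \emph{instead of} identifiers, so the output distribution of a $0$-round algorithm at a node is a function of the node's input tuple (degree, port numbering, incident input labels) alone. The distribution $D_c$ you define is therefore automatically the same for every node with local configuration $c$; there is no identifier dependence to eliminate, and no Ramsey theorem is needed. Second, the reconciliation of node and edge constraints that you leave as a hoped-for step has a clean argument that you should spell out, and it dictates the precise threshold on $p^*$. By pigeonhole, for each input tuple $I$ there is some output tuple $O$ that $\fA^*$ emits with probability at least $1/\lvert\Sigma_{\out}^{\Pi^*}\rvert^{\Delta}$; set $\fA_{\det}(I) := O$. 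If $\fA_{\det}$ ever violated the node constraint or the input--output relation at some node with input tuple $I$, then $\fA^*$ would fail at that node with probability at least $1/\lvert\Sigma_{\out}^{\Pi^*}\rvert^{\Delta}$. If $\fA_{\det}$ ever violated the edge constraint on some edge $\{v,v'\}$, then — because the random bit strings of $v$ and $v'$ are independent — $\fA^*$ would output the offending pair on that edge with probability at least $1/\lvert\Sigma_{\out}^{\Pi^*}\rvert^{2\Delta}$. Both possibilities contradict the bound on $p^*$ \emph{provided} $p^*$ is pushed strictly below $1/\lvert\Sigma_{\out}^{\Pi^*}\rvert^{2\Delta}$; this is the concrete threshold that drives the choice of $n_0$ in the paper (its condition~(\ref{eq:complexbound})), and "any threshold depending only on $\Pi^*$" is only adequate once you realize that $\lvert\Sigma_{\out}^{\Pi^*}\rvert$ itself depends on $n_0$ through $T(n_0)$, so the two quantities must be balanced simultaneously. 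Finally, the extension from $\fF_{n_0}$ to all of $\fF$ is cleaner at the $0$-round level (as in the paper: $\fA_{\det}$ is literally a function from input tuples to output tuples, hence $n$-independent, and then \cref{lem:up} is applied on all of $\fF$) than at the end, where your embedding argument additionally requires checking that the final algorithm is identifier-insensitive.
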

\begin{proof}
    Observe that, by Lemma~\ref{lem:LCL_and_nodeedgeLCL_are_same}, it suffices to prove the theorem for node-edge-checkable LCL problems.
    Hence, assume in the following that $\Pi$ is node-edge-checkable (and has deterministic or randomized complexity $o(\log^* n)$ on $\fF$).

    Observe further that if each node independently chooses an identifier (of length $O(\log n)$ bits) uniformly at random from a set of size $n^3$, then the probability that there are two nodes that choose the same identifier is at most $1/n$, by a union bound over all pairs of nodes.
    Hence, a deterministic algorithm solving $\Pi$ in $o(\log^* n)$ rounds can be transformed into a randomized algorithm with the same runtime and a failure probability of at most $1/n$ by having each node create its own identifier from its random bits before executing the deterministic algorithm.
    Thus, we can assume that the $o(\log^* n)$-round algorithm for $\Pi$ guaranteed in the theorem is randomized and fails with probability at most $1/n$.
    Let $\fA$ denote this algorithm, and observe that the bound of $1/n$ on the failure probability of $\fA$ implies that also the local failure probability of $\fA$ is bounded by $1/n$.
    
    Let $T(n)$ denote the runtime of $\fA$ on the class of $n$-node forests from $\fF$.
    Let $n_0$ be a sufficiently large positive integer; in particular, we require that
    \begin{equation}\label{eq:nodebound}
        T(n_0) + 2 \leq \log_{\Delta} n_0,
    \end{equation}
    \begin{equation}\label{eq:logstarbound}
        2T(n_0) + 5 \leq \log^* n_0,
    \end{equation}
    and
    \begin{equation}\label{eq:complexbound}
        \left((S^*)^2 \cdot (\log n_0)^{2\Delta}\right)^{(3\Delta + 3)^{T(n_0)}} < n_0
    \end{equation}
    where $S^* := (10\Delta(\lvert \spinn \rvert + \log n_0))^{4\Delta^{T(n_0)+1}}$.
    Since $T(n) \in o(\log^* n)$ (and $\Delta$ and $\spinn$ are fixed constants), such an $n_0$ exists.
    Moreover, for simplicity, define $f(\cdot) := \rere(\re(\cdot))$, and recall that $\fF_{n_0}$ denotes the class of all forests from $\fF$ with $n_0$ nodes.
    Repeatedly applying $f(\cdot)$, starting with $\Pi$, yields a sequence of problems $\Pi, f(\Pi), f^2(\Pi), \dots$.
    Our first goal is to show that there is a $0$-round algorithm solving $f^{T(n_0)}(\Pi)$ on $\fF_{n_0}$ with small local failure probability, by applying Theorem~\ref{thm:fullspeedup} repeatedly.
    
    Recall that for any integer $i \geq 0$, we have $\sout^{\re(f^i(\Pi))} = 2^{\sout^{f^i(\Pi)}}$ and $\sout^{\rere(\re(f^i(\Pi)))} = 2^{\sout^{\re(f^i(\Pi))}}$, by definition.
    This implies that for any $0 \leq i \leq T(n_0)$, we have
    \[
        \max\{\lvert \sout^{f^i(\Pi)} \rvert, \lvert \sout^{\re(f^i(\Pi))} \rvert\} \leq 2^{2^{\iddots^{2^{\spout}}}}
    \]
    where the power tower is of height $2T(n_0) + 3$.
    By (\ref{eq:logstarbound}), we obtain that
    \begin{equation}\label{eq:labelbound}
        \max\{\lvert \sout^{f^i(\Pi)} \rvert, \lvert \sout^{\re(f^i(\Pi))} \rvert\} \leq \log n_0 \text{ for all } 0 \leq i \leq T(n_0).
    \end{equation}
    Recall that $S^* = (10\Delta(\lvert \spinn \rvert + \log n_0))^{4\Delta^{T(n_0)+1}}$ and note that $\sinn^{f^{T(n_0)}(\Pi)} = \sinn^{f^{T(n_0) - 1}(\Pi)} = \ldots = \spinn$.
    By the above discussion, we conclude that when applying Theorem~\ref{thm:fullspeedup} to problem $f^i(\Pi)$ for some $0 \leq i \leq T(n_0)$, then the parameter $S$ in the obtained upper bound $Sp^{1/(3\Delta + 3)}$ (in Theorem~\ref{thm:fullspeedup}) is upper bounded by $S^*$.
    Hence, by applying Theorem~\ref{thm:fullspeedup} $T(n_0)$ times, each time using the upper bound $S^*p^{1/(3\Delta + 3)}$ (instead of $Sp^{1/(3\Delta + 3)}$ with the respective $S$ from Theorem~\ref{thm:fullspeedup}), we obtain that there exists a randomized algorithm $\fA^*$ solving $f^{T(n_0)}(\Pi)$ on $\fF_{n_0}$ in $0$ rounds with local failure probability $p^*$ at most
    \begin{align*}
        &S^* \cdot (S^*)^{1/(3\Delta + 3)} \cdot (S^*)^{1/(3\Delta + 3)^2} \cdot \ldots \cdot (S^*)^{1/(3\Delta + 3)^{T(n_0) - 1}} \cdot 1/(n_0)^{1/(3\Delta + 3)^{T(n_0)}}\\
        &= (S^*)^{\sum_{i = 0}^{T(n_0) - 1} (1/(3\Delta + 3)^i)} \cdot 1/(n_0)^{1/(3\Delta + 3)^{T(n_0)}}\\
        &\leq (S^*)^2 \cdot 1/(n_0)^{1/(3\Delta + 3)^{T(n_0)}}\\
        &< 1/(\log n_0)^{2\Delta}\\
        &\leq 1/\left(\sout^{f^{T(n_0)}(\Pi)}\right)^{2\Delta},
    \end{align*}
    where the second-to-last inequality follows from (\ref{eq:complexbound}), and the last inequality from (\ref{eq:labelbound}).
    Note that (\ref{eq:nodebound}) guarantees the applicability of Theorem~\ref{thm:fullspeedup} to the graph class $\fF_{n_0}$ for our purposes.
    Moreover, we can assume that $\fA^*$ outputs only labels from $\sout^{f^{T(n_0)}(\Pi)}$ (even if it fails), as otherwise we can turn $\fA^*$ into such an algorithm (with the same runtime and a smaller or equally small local failure probability) by simply replacing each label that $\fA^*$ outputs at some node and is not contained in $\sout^{f^{T(n_0)}(\Pi)}$ by some arbitrary label from $\sout^{f^{T(n_0)}(\Pi)}$.
    
    Our next step is to show that the obtained bound on the local failure probability $p^*$ of $\fA^*$ implies that there exists a \emph{deterministic} algorithm solving $f^{T(n_0)}(\Pi)$ on $\fF_{n_0}$ in $0$ rounds.
    To this end, let $\fI$ be the set of all tuples $I = (i_1, \dots, i_k)$ consisting of $k \in \{ 1, \dots, \Delta \}$ labels from $\spinn$, and $\fO$ the set of all tuples $O = (o_1, \dots, o_k)$ consisting of $k \in \{ 1, \dots, \Delta \}$ labels from $\sout^{f^{T(n_0)}(\Pi)}$.
    We say that $I = (i_1, \dots, i_k)$, resp.\ $O = (o_1, \dots, o_k)$, is the \emph{input tuple}, resp.\ \emph{output tuple}, of some node $v$ if $\deg(v) = k$ and, for each $1 \leq j \leq k$, we have that $i_j$ is the input label assigned to, resp.\ the output label returned by $\fA^*$ at, the half-edge incident to $v$ corresponding to port $j$ at $v$.
    
    Since the runtime of $\fA^*$ is $0$ rounds, any node $v$ chooses its output tuple solely based on its random bit string and its input tuple.
    We now define a function $\fA_{\det} \colon \fI \rightarrow \fO$ (that will constitute the desired deterministic algorithm) as follows.
    Consider an arbitrary tuple $I = (i_1, \dots, i_k) \in \fI$ and let $v$ be a node with input label $I$.
    Since $\fA^*$ outputs only labels from $\sout^{f^{T(n_0)}(\Pi)}$ and $k \leq \Delta$, there are at most $\left(\sout^{f^{T(n_0)}(\Pi)}\right)^{\Delta}$ different possibilities for the output tuple of $v$. 
    Hence, there exists a tuple $O \in \fO$ that $v$ outputs with probability at least $1/\left(\sout^{f^{T(n_0)}(\Pi)}\right)^{\Delta}$ (when executing $\fA^*$).
    Fix such an $O$ arbitrarily, and set $\fA_{\det}(I) = O$.
    This concludes the description of $\fA_{\det}$.
    Note that the definition of $\fA_{\det}$ is independent of the choice of $v$ (under the given restrictions) as the only relevant parameters are the input tuple at $v$ (which is fixed) and the random bit string at $v$ (which comes from the same distribution for each node).
    
    We claim that for any two (not necessarily distinct) configurations $I, I' \in \fI$, and any two (not necessarily distinct) labels $o \in \fA_{\det}(I)$ and $o' \in \fA_{\det}(I')$, we have $\{o, o'\} \in \fE_{f^{T(n_0)}(\Pi)}$.
    For a contradiction, assume that there are two configurations $I = (i_1, \dots, i_k), I' = (i'_1, \dots, i'_{k'})$ from $\fI$ and two labels $o \in \fA_{\det}(I)$ and $o' \in \fA_{\det}(I')$ satisfying $\{o, o'\} \notin \fE_{f^{T(n_0)}(\Pi)}$.
    Let $O = (o_1, \dots, o_k)$ and $O' = (o'_1, \dots, o'_{k'})$ be the tuples that $I$ and $I'$, respectively, are mapped to by $\fA_{\det}$, i.e., $\fA_{det}(I) = O$ and $\fA_{\det}(I') = O'$.
    Let $j \in \{ 1, \dots, k \}$ and $j' \in \{ 1, \dots, k' \}$ be two ports/indices such that $o_j = o$ and $o'_{j'} = o'$.
    Consider a forest (with $n_0$ nodes) containing two adjacent nodes $v, v'$ such that $I$ is the input tuple of $v$, $I'$ is the input tuple of $v'$, and the edge $e := \{ v, v' \}$ corresponds to port $j$ at $v$ and to port $j'$ at $v'$.
    By the definition of $\fA_{\det}$, we know that, when executing $\fA^*$, the probability that $v$ outputs $o$ on half-edge $(v,e)$ and the probability that $v'$ outputs $o'$ on half-edge $(v',e)$ are each at least $1/\left(\sout^{f^{T(n_0)}(\Pi)}\right)^{\Delta}$.
    Since these two events are independent (as one depends on the random bit string of $v$ and the other on the random bit string of $v'$), it follows that the probability that the output on edge $e$ is $\{ o, o' \}$ is at least $1/\left(\sout^{f^{T(n_0)}(\Pi)}\right)^{2\Delta}$.
    Now, $\{ o, o' \} \notin \fE_{f^{T(n_0)}(\Pi)}$ yields a contradiction to the fact the local failure probability of $\fA^*$ is strictly smaller than $1/\left(\sout^{f^{T(n_0)}(\Pi)}\right)^{2\Delta}$, proving the claim.
    
    Observe that for any configuration $I = (i_1, \dots, i_k)$ from $\fI$, we also have that
    \begin{enumerate}
        \item $\fA_{\det}(I)$ (or, more precisely, the unordered underlying multiset) is contained in $\fN_{f^{T(n_0)}(\Pi)}^k$, and
        \item for each port $1 \leq j \leq k$, the entry from $\fA_{\det}(I)$ corresponding to port $j$ is contained in $g_{f^{T(n_0)}(\Pi)}(i_j)$,
    \end{enumerate}
    as otherwise for each node $v$ with input tuple $I$, algorithm $\fA^*$ would fail at $v$ with probability at least $1/\left(\sout^{f^{T(n_0)}(\Pi)}\right)^{\Delta}$, yielding again a contradiction to the aforementioned upper bound on the local failure probability of $\fA^*$.
    By the above discussion, we conclude that the output returned by $\fA_{\det}$ is correct at each node and each edge, and therefore $\fA_{\det}$ constitutes a deterministic $0$-round algorithm solving $f^{T(n_0)}(\Pi)$ on $\fF_{n_0}$.
    
    By definition, algorithm $\fA_{\det}$ is simply a function from $\fI$ to $\fO$.
    Hence, while technically $\fA_{\det}$ has been defined only for forests from $\fF_{n_0}$, we can execute $\fA_{\det}$ also on forests with an arbitrary number of nodes and it will still yield a correct output.
    We conclude that $\fA_{\det}$ constitutes a deterministic $0$-round algorithm solving $f^{T(n_0)}(\Pi)$ on $\fF$.
    
    Now, we can simply apply Lemma~\ref{lem:up} $T(n_0)$ times, starting with $f^{T(n_0)}(\Pi)$, and obtain that there exists a deterministic algorithm solving $\Pi$ on $\fF$ in $T(n_0)$ rounds (which implies the existence of a randomized algorithm solving $\Pi$ on $\fF$ in $T(n_0)$ rounds, using the same argument as in the beginning of the proof).
    As $n_0$ is a fixed positive integer (depending only on $\Delta$ and $\Pi$), it follows that $\Pi$ can be solved in constant time on $\fF$ (both deterministically and randomized).
\end{proof}

Now, by combining Theorem~\ref{thm:REmain} with Lemma~\ref{lem:treetoforest} and Lemma~\ref{lem:LCL_and_nodeedgeLCL_are_same} (which guarantees that Lemma~\ref{lem:treetoforest}, which is stated for node-edge-checkable LCL problems, can be applied), we obtain our main result as a corollary.

\begin{theorem}[Formal version of \cref{thm:main_informal}]\label{thm:treemain}
    Let $\Pi$ be an arbitrary LCL problem that has (deterministic or randomized) complexity $o(\log^* n)$ on $\fT$.
    Then $\Pi$ can be solved in constant time on $\fT$ (both deterministically and randomized).
\end{theorem}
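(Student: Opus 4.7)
The plan is to assemble the theorem as a direct corollary of the three main technical pieces already developed in the section: Lemma~\ref{lem:LCL_and_nodeedgeLCL_are_same} (reduction to the node-edge-checkable setting), Lemma~\ref{lem:treetoforest} (lifting complexities on trees to complexities on forests), and Theorem~\ref{thm:REmain} (the constant-time gap on forests). The strategy is simply: reduce $\Pi$ to a node-edge-checkable problem $\Pi'$, move from $\fT$ to $\fF$, apply the gap result on $\fF$, and then transfer back.

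More concretely, given an arbitrary LCL $\Pi$ with (deterministic or randomized) complexity $o(\log^* n)$ on $\fT$, I would first invoke Lemma~\ref{lem:LCL_and_nodeedgeLCL_are_same} to obtain a node-edge-checkable LCL $\Pi'$ whose complexities on $\fT$ (and on $\fF$) agree with those of $\Pi$ up to an additive constant. Since adding a constant to a function in $o(\log^* n)$ keeps it in $o(\log^* n)$, the problem $\Pi'$ also has complexity $o(\log^* n)$ on $\fT$ in both the deterministic and randomized models.

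Now I can feed $\Pi'$ into Lemma~\ref{lem:treetoforest}, whose hypotheses are precisely that the LCL is node-edge-checkable and has complexity $o(\log^* n)$ on $\fT$. The conclusion is that $\Pi'$ has complexity $o(\log^* n)$ on $\fF$ as well. At this point Theorem~\ref{thm:REmain} applies directly to $\Pi'$ on $\fF$, yielding an algorithm (deterministic or randomized, as appropriate) that solves $\Pi'$ in constant time on $\fF$. Since $\fT \subseteq \fF$, the same algorithm solves $\Pi'$ in constant time on $\fT$.

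Finally, the additive-constant equivalence from Lemma~\ref{lem:LCL_and_nodeedgeLCL_are_same} transfers constant-time solvability back from $\Pi'$ to $\Pi$ on $\fT$, completing the proof. There is no genuine obstacle in this step: all of the difficult work, namely the round-elimination analysis with inputs and irregular degrees culminating in Theorem~\ref{thm:REmain}, as well as the forest-trick in Lemma~\ref{lem:treetoforest}, has already been carried out. The only point worth checking carefully is that the reduction in Lemma~\ref{lem:LCL_and_nodeedgeLCL_are_same} preserves the graph class (so that a complexity bound on $\fT$ for $\Pi$ really does translate into the same bound on $\fT$ for $\Pi'$, and vice versa), which is exactly the property highlighted in the paragraph following that lemma.
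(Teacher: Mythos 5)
Your proposal is correct and is essentially identical to the paper's (very terse) proof, which also just chains Lemma~\ref{lem:LCL_and_nodeedgeLCL_are_same}, Lemma~\ref{lem:treetoforest}, and Theorem~\ref{thm:REmain} in exactly this order. You also correctly flag the one non-trivial point, namely that the reduction in Lemma~\ref{lem:LCL_and_nodeedgeLCL_are_same} must preserve the graph class so that bounds transfer both ways.
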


\section{The \volume Model Gap}
\label{sec:volume}
In this section, we show that a deterministic or randomized \volume algorithm (\lca) with a probe complexity of $o(\log^* n)$ implies a deterministic \volume algorithm (\lca) with a probe complexity of $O(1)$. We first show the speed-up only for deterministic \volume algorithms and later discuss how to extend the result to the full generality.
\begin{theorem}
\label{thm:volume_speedup}
There does not exist an LCL with a deterministic \volume complexity between $\omega(1)$ and $o(\log^* n)$.
\end{theorem}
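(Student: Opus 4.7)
The plan is to show that every deterministic \volume algorithm $\mathcal{A}$ solving an LCL $\Pi$ with probe complexity $T(n) = o(\log^* n)$ can be replaced by an order-invariant algorithm with the same probe complexity; \cref{thm:speedup_order_invariant} then produces an $O(1)$-probe algorithm, establishing the theorem. The conversion itself is carried out via the Ramsey-theoretic argument of \cite{naorstockmeyer,chang2017time_hierarchy}, adapted to account for adaptive probes.

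Fix a sufficiently large $n$ and let $k := T(n)+1$, an upper bound on the number of nodes inspected while answering any single query. For each ordered tuple $(x_1, \dots, x_k)$ of distinct IDs drawn from the polynomial range $[n^c]$, I assign a \emph{color} as follows. For every choice of ``non-ID content'' for the $k$ positions (a degree in $[\Delta]$ together with a half-edge input labeling in $\Sigma_{in}^{[\Delta]}$), the prescribed functions $f_{n,i}$ of $\mathcal{A}$ applied to the prefix tuples determine the entire transcript---the sequence of probes $(j,p) \in [k] \times [\Delta]$ performed at each step and the final output in $\Sigma_{out}^{[\Delta]}$. The color of $(x_1, \dots, x_k)$ is the whole map from non-ID contents to transcripts. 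Since $\Delta$, $|\Sigma_{in}|$ and $|\Sigma_{out}|$ are constants, this color space is bounded by a function of $T(n)$ alone.

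Now apply the hypergraph Ramsey theorem: for hyperedge size $k = o(\log^* n)$ and a number of colors bounded by a function of $T(n)$ only, the Ramsey number grows as a tower of height $\Theta(k)$ and is therefore far smaller than $n^c$ for $n$ large enough. This yields a monochromatic subset $S \subseteq [n^c]$ of size at least $n$, on which the transcript of $\mathcal{A}$ depends only on the order type of the ID tuple and its non-ID content. I then define the new algorithm $\mathcal{A}'$ by setting $f^{\mathcal{A}'}_{n,i}(t) := f^{\mathcal{A}}_{n,i}(t^*)$, where $t^*$ is obtained from $t$ by replacing its IDs with their images under any fixed order-preserving injection of the ID range into $S$. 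By construction $\mathcal{A}'$ is order-invariant, and on any valid input $(G, f_{in})$ it behaves exactly like $\mathcal{A}$ on the graph with IDs remapped into $S$. Since $\mathcal{A}$ is correct on every instance with IDs from the polynomial range, $\mathcal{A}'$ solves $\Pi$ correctly. Applying \cref{thm:speedup_order_invariant} to $\mathcal{A}'$ then delivers an $O(1)$-probe order-invariant algorithm for $\Pi$.

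The main obstacle I anticipate is the handling of adaptive probes, which is precisely the distinction from the \local setting stressed in the introduction. Because the $k$ encountered IDs are revealed incrementally in an order chosen by the probes themselves, the ``color'' of a tuple must record an entire decision tree rather than a single output; one must verify both that this decision tree is well-defined as a function of the ordered ID tuple plus the non-ID content, and that the resulting color count grows only as a bounded-height tower in $T(n)$, so that the Ramsey bound stays below the polynomial ID range when $T(n) = o(\log^* n)$. Once this bookkeeping is in place the remainder of the argument parallels the non-adaptive case, and the extension to randomized algorithms and to the \lca model follows from the standard reductions sketched at the end of \cref{sec:volume}.
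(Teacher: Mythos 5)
Your high-level plan is the same as the paper's: use a Ramsey-theoretic argument to find a set of IDs on which the adaptive probes of $\mathcal{A}$ depend only on order type, remap IDs into that set to obtain an order-invariant algorithm $\mathcal{A}'$, and finish with \cref{thm:speedup_order_invariant}. You also correctly identify the conceptual crux: because probes are adaptive, the Ramsey color must encode an entire decision tree (probe transcript plus final output) rather than a single value. Up to that point the proposal matches the paper's proof of \cref{lem:order_invariant_over_set}.

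However, there is a genuine quantitative gap that breaks the argument as written. You assert that Ramsey's theorem yields a monochromatic set $S \subseteq [n^c]$ of size at least $n$, and then define $\mathcal{A}'$ via a single fixed order-preserving injection of the whole ID range into $S$. This is impossible. Using the bound $\log^* R(p,m,c) = p + \log^* m + \log^* c + O(1)$ that the paper cites, a monochromatic set of size $m$ in a hypergraph on $n^c$ vertices requires $\log^* (n^c) \geq p + \log^* m + \log^* c + O(1)$. Taking $p = T(n)+1 \geq 1$ and $m = n$ already gives $\log^* m = \log^* n$, so the right-hand side strictly exceeds $\log^*(n^c)$; there is no room left for $p$. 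A fortiori one cannot take $m = n^c$, which is what an injection of the full ID range would require. Put simply, the Ramsey budget constrains $\log^* m$ to be at least $T(n)+1$ below $\log^* n$; the monochromatic set must therefore be much smaller than $n$.

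The fix---and this is what the paper does---is to settle for a monochromatic set $S_n$ of size only $(T(n)+1)\cdot\Delta^{r+1}$, which easily fits the Ramsey budget since $\log^* |S_n| = o(\log^* n)$. The algorithm $\mathcal{A}'$ is then defined not by a global injection but by remapping, for each query tuple $t$, the (at most $T(n)+1$) IDs that actually appear in $t$ onto the smallest elements of $S_n$ in an order-preserving way. Correctness then requires a localization argument that your writeup also glosses over: $\mathcal{A}'$ does not behave like $\mathcal{A}$ on a single consistently remapped input (different queries remap differently), so one instead argues by contradiction that if $\mathcal{A}'$ fails inside some $r$-ball around a node $v$, then the at most $\Delta^{r+1}(T(n)+1)$ IDs of nodes ``visible'' to $\mathcal{A}'$ while computing all outputs in that $r$-ball can be reassigned, order-preservingly, into $S_n$ without changing the failure; by \cref{lem:order_invariant_over_set}, $\mathcal{A}$ then also fails on that reassigned instance, a contradiction. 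This is exactly why $|S_n|$ must be at least $\Delta^{r+1}(T(n)+1)$ (and why the paper peels off the $T(n)+1$ largest elements of $S^{big}_n$ so that the ``smallest elements of $S_n$'' remapping stays consistent with the monochromatic coloring). Your argument, as stated, relies on the unachievable $|S| \geq n$ and on a global-remapping correctness claim that does not hold, so it needs to be reworked along these lines.
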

\begin{proof}
Consider some LCL $\Pi = (\Sigma_\inn, \Sigma_\out, r, \fP)$. Let $\mathcal{A}$  be a \volume model algorithm with a probe complexity of $T(n) = o(\log^*n)$ that solves $\Pi$. We show that this implies the existence of a \volume model algorithm that solves $\Pi$ and has a probe complexity of $O(1)$. To do so, we first show that we can turn $\mathcal{A}$ into an order-invariant algorithm with the same probe complexity. Once we have shown this, \cref{thm:speedup_order_invariant}, which shows a speed-up result for order-invariant algorithms, implies that there also exists a \volume model algorithm for $\Pi$ with a probe complexity of $O(1)$, as needed.

We start by proving the lemma below, which, informally speaking, states that there exists a sufficiently large set $S_n \subseteq [n]$ such that $\mathcal{A}$ is order-invariant as long as the IDs it "encounters" are from the set $S_n$. The proof adapts a Ramsey-theoretic argument first introduced in \cite{naorstockmeyer} and further refined in \cite{chang2016exp_separation} in the context of the \local model to the \volume model.

\begin{lemma}
\label{lem:order_invariant_over_set}
There exists a $n_0 \in \mathbb{N}$ such that the following holds for every $n \geq n_0$. There exists a set $S_n \subseteq [n]$ of size $(T(n) + 1) \cdot \Delta^{r+1}$ such that for every $i \in [T(n)+1]$ and $t,t' \in Tuples_{i,S_n}$ it holds that $f_{n,i}(t) = f_{n,i}(t')$ if $t$ and $t'$ are almost identical.
\end{lemma}
\begin{proof}
Consider an $n \in \mathbb{N}$. We denote with $H$ a complete $(T(n) + 1)$-uniform hypergraph on $n$ nodes. Each node in $H$ corresponds to a (unique) ID in the set $\{1,2,\ldots,n\}$.
For each hyperedge $X \subseteq \{1,2,\ldots,n\}$ we define a function $f_X$.

The input to $f_X$ is a tuple $t = (t_1,\ldots,t_{T(n)+1}) \in Tuples_{T(n)+1,[T(n) + 1]}$. 
For $j \in [T(n)+1]$, let $t_j = (id_j,deg_j,in_j)$. We define a new tuple $t^X_j = (id^X_j,deg_j,in_j)$, where $id^X_j$ is the $id_j$-th smallest element in $X$. 
We now define

\[f_X(t) = (f_{n,1}(t^X_1), f_{n,2}(t^X_1,t^X_2),\ldots,f_{n,T(n)+1}(t^X_1,t^X_2,\ldots,t^X_{T(n) + 1})).\]

We now prove two things. First, we show that for $n$ being sufficiently large, there exists a set $S^{big}_n \subseteq [n]$ of size $(T(n)+1)\cdot \Delta^{r+1} + T(n) + 1$ such that for any two hyperedges $X,Y \subseteq S^{big}_n$, $f_X = f_Y$. This will follow by Ramsey Theory and an upper bound on the number of possible different functions $f_X$.
Second, let $S_n$ be the set one obtains from $S^{big}_n$ by discarding the $T(n) + 1$ largest elements of $S^{big}_n$. We show that $S_n$ satisfies the conditions of \cref{lem:order_invariant_over_set}.

For the Ramsey-theoretic argument, we start by upper bounding the total number of possible functions. Note that $|Tuples_{[T(n) + 1]}| \leq (T(n)+1) \cdot \Delta \cdot |\Sigma_{in}|^{\Delta}$ and therefore the possible number of inputs to the function $f^X$ is $|Tuples_{T(n)+1,[T(n)+1]}| \leq \left( (T(n)+1) \cdot \Delta \cdot |\Sigma_{in}|^{\Delta} \right)^{T(n) + 1} = T(n)^{O(T(n))}$. Note that the output of $f_X$ is contained in the set $\left( \bigtimes_{i=1}^{T(n)} ([i] \times [\Delta]) \right) \times (\Sigma_{out})^{[\Delta]}$
and therefore there are at most $(T(n) \cdot \Delta )^{T(n)} \cdot |\Sigma_{out}|^{\Delta} = T(n)^{O(T(n))}$ possible outputs. Hence, there exist at most $\left(T(n)^{O(T(n))}\right)^{T(n)^{O(T(n))}}$ different possible functions. Let $R(p,m,c)$ denote the smallest number such that any $p$-uniform complete hypergraph on $R(p,m,c)$ nodes with each hyperedge being assigned one of $c$ colors contains a monochromatic clique of size $m$. It holds that $\log^*(R(p,m,c)) = p + \log^* m + \log^* c + O(1)$ \cite{chang2017time_hierarchy}.

Setting $p = T(n) + 1$, $m = (T(n) + 1) \cdot \Delta^{r+1} + T(n) + 1$ and $c = \left(T(n)^{O(T(n))}\right)^{T(n)^{O(T(n))}}$, we can conclude that $H$ contains a set $S_n^{big} \subseteq [n]$ of size $m$ such that for any two hyperedges $X,Y \subseteq S_n^{big}, f_X = f_Y$ as long as $\log^*(n) \geq T(n) + 1 + \log^* m + \log^* c + O(1)$, which is the case for sufficiently large $n$ as $T(n) = o(\log^* n)$, $\log^* m = o(T(n))$ and $\log^* c = o(T(n))$. \\

Now, let $S_n$ be the set one obtains form $S^{big}_n$ by discarding the $T(n) + 1$ largest elements from $S^{big}_n$.
Let $i \in [T(n) + 1]$ and $t^{(\ell)} = ((id^{\ell}_1,deg_1,in_1), $ $(id^{\ell}_2,deg_2,in_2),$ $ \ldots, (id^{\ell}_i,deg_i,in_i)) \in Tuples_{i,S_n}$ for $\ell \in [2]$ such that $t^{(1)}$ and $t^{(2)}$ are almost identical. It remains to show that $f_{n,i}(t^{(1)}) = f_{n,i}(t^{(2)})$. For $\ell \in [2]$, let $X^{\ell} \subseteq S^{big}_n$ such that $\{id^\ell_1,id^\ell_2,\ldots,id^\ell_i \}$ contains the $|\{id^\ell_1,id^\ell_2,\ldots,id^\ell_i \}|$-th lowest elements of $X^{\ell}$. 
Now, let $t = (t_1,t_2,\ldots,t_{T(n) + 1}) \in S_{T(n) + 1,[T(n) + 1]}$ such that for every $j \in [i]$, $t^{X^\ell}_j = (id^{\ell}_j,deg_j,in_j)$ . Note that it follows from the way we defined $X^1$ and $X^2$ and the fact that $t^{(1)}$ and $t^{(2)}$ are almost identical that such a tuple $t$ exists.
As $X^{\ell} \subseteq S^{big}_n$, we have $f_{X^1}(t) = f_{X^2}(t)$. In particular, this implies that \[f_{n,i}(t^{(1)}) = f_{n,i}(t^{X^1}_1,t^{X^1}_2, \ldots, t^{X^1}_i) = f_{n,i}(t^{X^2}_1,t^{X^2}_2, \ldots, t^{X^2}_i) = f_{n,i}(t^{(2)}),\]

which finishes the proof.

\end{proof}

We now construct an order-invariant algorithm $\fA'$ with probe complexity $T'(n) = \max(O(1),T(n)) = o(\log^*n)$. Note that it is easy to make $\fA'$ order-invariant for every input graph having fewer than $n_0$ nodes. For $n \geq n_0$, we have $T'(n) = T(n)$ and for every $i \in \{1,2,\ldots,T'(n)\}$ and tuple $t = ((id_1,deg_1,in_1),(id_2,deg_2,in_2),\ldots,(id_i,deg_i,in_i)) \in Tuples_{i,\mathbb{N}}$, we define

\[f^{\fA'}_{n,i}(t) = f^{\fA}_{n,i}(t'=(id'_1,deg_1,in_1),(id'_2,deg_2,in_2),\ldots,(id'_i,deg_i,in_i))\]

where $id'_1,\ldots,id'_i$ is chosen in such a way that $\{id'_1,\ldots,id'_i\}$ contains the $|\{id'_1,\ldots,id'_i\}|$-th smallest elements of $S_n$ and $t$ and $t'$ are almost identical.
It is easy to verify that $\fA'$ is indeed order-invariant. 

It remains to show that $\fA'$ indeed solves $\Pi$.
For the sake of contradiction, assume that this is not the case. 
This implies the existence of a $\Sigma_{in}$-labeled graph $(G,f_{in})$ on $n$ nodes (with each node having a unique ID from a polynomial range, a port assignment, and $G$ does not have any isolated nodes) such that $\fA'$ "fails" on $(G,f_{in})$.
Put differently, there exists a node $v$ such that $\fA'$ produces a mistake in the $r$-hop neighborhood of $v$.
The $r$-hop neighborhood of $v$ consists of at most $\Delta^{r+1}$ vertices. To answer a given query, $\fA'$ "sees" at most $T(n) + 1$ nodes. Hence, to compute the output of all the half-edges in the $r$-hop neighborhood of $v$, $\fA'$ "sees" at most  $\Delta^{r+1}(T(n) + 1) \leq |S_n|$ many nodes. We denote this set of nodes by $V^{visible}$.
Even if the IDs of nodes outside of $V^{visible}$ are changed, $\fA'$ still fails around $v$. Moreover, as $\fA'$ is order-invariant, changing the IDs of the nodes in $V^{visible}$ in a manner that preserves the relative order does not change the fact that $\fA'$ fails around $v$. Hence, we can find a new assignment of IDs such that each node in $V^{visible}$ is assigned an ID from the set $S_n$ such that $\fA'$ still fails around $v$. However, from the way we defined $\fA'$ and the property that $S_n$ satisfies, it follows that $\fA'$ and $\fA$ produce the same output in the $r$-hop neighborhood around $v$. This contradicts the fact that $\fA$ is a correct algorithm.

\end{proof}

We already argued in the preliminaries that \cref{thm:volume_speedup} also implies that there does not exist an LCL with a deterministic \lca complexity between $\omega(1)$ and $o(\log^* n)$. 

As noted in \cite{Rosenbaum2020}, the derandomization result by \cite{chang2016exp_separation} can be used to show that randomness does not help (up to an additive constant in the round/probe complexity) in the \local and \volume model for complexities in $O(\log^* n)$, and the same is true for the \lca model.

Hence, we obtain the following more general theorem.

\begin{theorem}
There does not exist an LCL with a randomized/deterministic \lca/\volume complexity between $\omega(1)$ and $o(\log^* n)$.
\end{theorem}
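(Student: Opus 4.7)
The plan is to obtain the full generalization of \cref{thm:volume_speedup} by assembling three building blocks that are already available. First, \cref{thm:volume_speedup} itself handles the deterministic \volume case. Second, the result of \cite{goos_nonlocal_probes_16} cited just before the theorem shows that, in the regime $t(n) = o(\sqrt{\log n})$ (which comfortably contains $o(\log^* n)$), a deterministic \lca algorithm can be converted into one without far probes with only a polynomial blow-up in the identifier range and probe complexity $O(t(n^{\log n}))$; since $t(n) = o(\log^* n)$ remains $o(\log^* n)$ after this substitution, this reduces the deterministic \lca question to the deterministic \volume question. Third, the derandomization result of \cite{chang2016exp_separation}, as noted in \cite{Rosenbaum2020}, asserts that in the \local/\volume/\lca models randomness helps only by an additive constant when the complexity is in $O(\log^* n)$.

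The argument would then proceed in three short steps. Assume an LCL $\Pi$ with (randomized or deterministic) \lca or \volume complexity $T(n) \in o(\log^* n)$. First, by the derandomization result, we obtain a deterministic algorithm in the same model with complexity $T(n) + O(1) \in o(\log^* n)$. Second, if this algorithm is in the \lca model, we invoke \cite{goos_nonlocal_probes_16} to eliminate far probes, paying only a polynomial factor in the ID range; since a \volume algorithm is insensitive to a polynomial blow-up of the ID range (as discussed in the paragraph just after \cref{def:volume}), the resulting algorithm is effectively a deterministic \volume algorithm with probe complexity still $o(\log^* n)$. Third, we apply \cref{thm:volume_speedup} to conclude that this algorithm can be replaced by a deterministic \volume algorithm of probe complexity $O(1)$, and hence an \lca of probe complexity $O(1)$ as well, contradicting the assumption that $T(n) \in \omega(1)$.

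I do not expect any genuine obstacle here; the theorem is essentially a packaging of \cref{thm:volume_speedup} with the two model-comparison results that were explicitly set up in the preliminaries and in the paragraphs immediately preceding the theorem. The only point requiring a bit of care is confirming that the derandomization of \cite{chang2016exp_separation}, originally phrased for the \local model, transfers cleanly to both \volume and \lca; this is exactly what \cite{Rosenbaum2020} observes, so it can be cited rather than reproved. Consequently the proposed ``proof'' is effectively a one-line deduction once the correct order of reductions is fixed: randomized $\to$ deterministic (derandomization), \lca $\to$ \volume (elimination of far probes), and finally \volume speed-up via \cref{thm:volume_speedup}.
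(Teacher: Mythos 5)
Your proposal is correct and takes essentially the same approach as the paper: the theorem is obtained by combining \cref{thm:volume_speedup} with the far-probe elimination of \cite{goos_nonlocal_probes_16} (plus the ID-range argument from the preliminaries) and the derandomization observation of \cite{chang2016exp_separation,Rosenbaum2020}. The only cosmetic difference is the order in which you apply the reductions (derandomize first, then eliminate far probes), which is immaterial.
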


\section{Speedup in Grids}
\label{sec:grids}

In this section we prove that on $d$-dimensional oriented grids, any $o(\log^* n)$-round \local algorithm for an LCL problem $\Pi$ (with input labels) can be turned into a \local algorithm for $\Pi$ with a round complexity of $O(1)$. An oriented $d$-dimensional grid is a grid where each edge is labeled with a label from $[d]$ that says which dimension this edge corresponds to. Moreover, all edges corresponding to the same dimension are consistently oriented. The grid is assumed to be toroidal, that is, without boundaries. We believe, but do not prove, that the same result can be shown also for non-toroidal grids, that is, grids with boundaries. From now on fix a dimension $d$. 

\begin{theorem}
\label{thm:main_grids}
There is no LCL problem $\Pi$ with a randomized/deterministic \local complexity between $\omega(1)$ and $o(\log^* n)$ in $d$-dimensional oriented grids for $d = O(1)$.
\end{theorem}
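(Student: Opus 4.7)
The plan is to adapt the Ramsey-theoretic approach of \cref{thm:volume_speedup}, together with the Naor--Stockmeyer/Chang--Pettie framework, to the \local model on $d$-dimensional oriented toroidal grids. The overall strategy is to turn the given $o(\log^* n)$-round algorithm into an order-invariant one and then invoke \cref{thm:speedup_order_invariant}, which converts order-invariant \local algorithms of complexity $o(\log n)$ into algorithms of complexity $O(1)$. By \cref{lem:LCL_and_nodeedgeLCL_are_same} it suffices to handle node-edge-checkable LCLs, and the standard derandomization argument in this complexity regime allows us to restrict to deterministic algorithms; fix such a deterministic $T(n)$-round algorithm $\mathcal{A}$ solving some node-edge-checkable $\Pi$, with $T(n) = o(\log^* n)$.

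The heart of the proof is an analog of \cref{lem:order_invariant_over_set}: for every sufficiently large $n$, there exists a subset $S_n \subseteq [n]$ of IDs, of size polynomial in $T(n)$, such that the output of $\mathcal{A}$ on any $T(n)$-hop neighborhood whose IDs and inputs are labeled from $S_n$ depends only on the relative order of the IDs. Because the grid is oriented and toroidal, every $T(n)$-hop neighborhood has the same canonical topology (an $\ell_1$-ball of radius $T(n)$ with oriented edges), so $\mathcal{A}$ can be viewed as a function mapping ID-and-input labelings of a fixed $\Theta(T(n)^d)$-node structure to an output label. The Ramsey coloring then colors each appropriately sized subset of $[n]$ by the behavior of $\mathcal{A}$ on the canonical structure when those IDs are assigned in increasing order of the canonical position index; the number of colors is bounded independently of $n$, and Ramsey's theorem extracts a monochromatic $S_n$ with the desired property. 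Given $S_n$, the order-invariant algorithm $\mathcal{A}'$ is built exactly as in the proof of \cref{thm:volume_speedup}: on any input, reindex the locally seen IDs to their relative-order-preserving images in $S_n$ and run $\mathcal{A}$. A failure of $\mathcal{A}'$ on some oriented grid would, after relabeling the IDs in the offending constant-radius view into $S_n$, yield a failure of $\mathcal{A}$, contradicting its correctness.

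The main obstacle, and the reason prior work obtained only an $o(\sqrt[d]{\log^* n})$ speedup, is controlling the Ramsey ``depth.'' A naive coloring colors $k$-subsets of $[n]$ with $k = \Theta(T(n)^d)$ (the size of the local view), and the bound $\log^*\!R(k,m,c) \approx k + \log^* m + \log^* c + O(1)$ then forces $T(n)^d = o(\log^* n)$, i.e.\ $T(n) = o(\sqrt[d]{\log^* n})$. The key new idea I would pursue is to exploit the grid's orientation to decouple the $d$ dimensions: rather than coloring all $\Theta(T(n)^d)$ ID assignments simultaneously, I would use that, in an oriented grid, each node already knows its canonical offset from the center, so order-invariance only needs to be enforced among IDs that might actually be compared by $\mathcal{A}$. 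Concretely, I would set up the coloring on tuples of size $\Theta(T(n))$ by sampling IDs along a single axis-aligned line and iterating the Ramsey extraction a constant number $d$ of times (once per dimension), combining the resulting per-axis order-invariant sets via the orientation into a single $S_n$ on which $\mathcal{A}$ is fully order-invariant. This reduces the required Ramsey depth from $\Theta(T(n)^d)$ to $\Theta(T(n))$, which is $o(\log^* n)$, so a suitable $S_n$ exists for sufficiently large $n$. Verifying that the per-axis order-invariance composes correctly, thanks to the rigidity of the oriented toroidal structure, is the technically delicate step. Once $\mathcal{A}'$ is order-invariant with complexity $T(n) = o(\log n)$, \cref{thm:speedup_order_invariant} immediately yields the desired $O(1)$-round algorithm, completing the proof.
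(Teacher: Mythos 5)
You correctly identify the obstacle (a naive Ramsey coloring would color $\Theta(T(n)^d)$-subsets, yielding only an $o(\sqrt[d]{\log^* n})$ speedup) and the overall framework (Ramsey $\to$ order-invariance $\to$ constant time), but the step you flag as ``technically delicate'' is exactly where the argument breaks, and your sketch of how to get around it does not work as stated. In the \local model, identifiers are arbitrary unique integers on nodes; they have no per-coordinate product structure. So there is no well-defined notion of ``sampling IDs along a single axis-aligned line'' independently of the rest of the ball, and an iterated per-dimension Ramsey extraction has no clear meaning: after extracting a monochromatic set for one axis, the invariance you obtain is conditional on the (unstructured) IDs elsewhere in the $T(n)$-hop ball, and the conditional invariances for different axes do not compose. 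The orientation tells a node its \emph{offset} in the ball, but it imposes no structure on the \emph{identifiers}, which is the object Ramsey must act on.

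The paper's proof closes this gap with a different move: it introduces an auxiliary model, \prodlocal, in which each node carries a $d$-tuple of identifiers, one per coordinate, and nodes share the $i$-th identifier iff they agree on coordinate $i$. Proposition~\ref{fact:prod} observes that any \local algorithm on oriented grids can be simulated in \prodlocal by packing the $d$ per-coordinate IDs into one polynomial-range ID, so one may WLOG work in \prodlocal. Crucially, a $T(n)$-hop ball in a $d$-dimensional grid contains only $O(d\,T(n))$ distinct per-coordinate identifiers (not $\Theta(T(n)^d)$), so a \emph{single} Ramsey application with $p=d(2T(n)+1)=o(\log^* n)$ suffices to make the \prodlocal algorithm order-invariant (Proposition~\ref{prop:ramsey}). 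Finally, because the orientation supplies a canonical total order on per-coordinate identifiers, Proposition~\ref{prop:final} converts the order-invariant \prodlocal algorithm back into a genuine $O(1)$-round \local algorithm (an adaptation of \cref{thm:speedup_order_invariant}). This is the one idea your proposal is missing: you need to change the \emph{type} of identifier the Ramsey argument ranges over, rather than iterate Ramsey dimension by dimension over the unstructured node IDs.
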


The proof of the speedup is a relatively simple extension of an argument of Suomela in the paper of Chang and Pettie \cite{chang2016exp_separation}. There, the argument is proven for LCLs without inputs on unoriented grids. It shows that every $o(\log^* n)$-round algorithm can be even sped up to a $0$-round algorithm. This is not the case anymore with inputs, which makes the proof slightly more involved.

The high-level idea of the proof is the following. First, we observe that a \local algorithm on oriented grids implies a \prodlocal algorithm with the same asymptotic complexity. A \prodlocal algorithm is defined as follows. 

\begin{definition}[\prodlocal model]
An algorithm in the \prodlocal model is defined as an algorithm in the \local model, but it expects that each node $u \in V(G)$ for $G$ an oriented grid gets an ordered sequence of $d$ identifiers $id_1(u), \dots, id_d(u) \in [n^{O(1)}]$, one for each dimension of the input grid $G$. 
We require that the $i$-th identifier $id_i(u), id_i(v)$ of two nodes $u,v$ is the same if and only if the two nodes have the same $i$-th coordinate in $G$.

An order-invariant \prodlocal model is defined as follows. 
We say that two identifier assignments are order-indistinguishable in some neighborhood around a node $u$ if for any two nodes $v,w$ in that neighborhood of $u$ and any $i,j \in [d]$, the $i$-th identifier of one node is either smaller than the $j$-th identifier of the other node in both ID assignments or larger in both ID assignments. Then, an order-invariant $t(n)$-round \prodlocal algorithm outputs the same value at the half edges of a node in case two ID assignments are order-indistinguishable in its $t(n)$-hop neighborhood. 
\end{definition}

We have the following fact:
\begin{proposition}
\label{fact:prod}
If an LCL $\Pi$ allows has local deterministic/randomized complexity $t(n)$ in the \local model, it has complexity $t(n)$ in the \prodlocal model. 
\end{proposition}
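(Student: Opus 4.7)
The plan is to reduce the \prodlocal model to the \local model by having each node synthesize, with zero communication, a globally unique polynomial-range identifier from its $d$ per-axis identifiers. Given a $t(n)$-round \local algorithm $\mathcal{A}$ solving $\Pi$, I will define a \prodlocal algorithm $\mathcal{A}'$ as follows: each node $u$ computes the synthetic identifier $\widehat{\id}(u) := \sum_{i=1}^{d} id_i(u) \cdot M^{i-1}$, where $M$ is a fixed polynomial upper bound on the per-axis identifier range, and then runs $\mathcal{A}$ using $\widehat{\id}$ in place of \local identifiers. Since each $id_i(u)$ lies in $[n^{O(1)}]$ and $d = O(1)$, the synthesized identifier also lies in $[n^{O(1)}]$, matching the polynomial-range requirement of \local.

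The key step is to verify global uniqueness of $\widehat{\id}$. By the defining property of the \prodlocal model, $id_i(u) = id_i(v)$ if and only if $u$ and $v$ share their $i$-th coordinate in $G$. Hence $\widehat{\id}(u) = \widehat{\id}(v)$ forces $id_i(u) = id_i(v)$ for all $i$ (by the $M$-ary encoding), which in turn forces $u$ and $v$ to share every coordinate, and therefore $u = v$. Thus the synthesized identifiers form a valid \local identifier assignment, and running $\mathcal{A}$ on the resulting $\widehat{\id}$-labeled grid produces a correct solution for $\Pi$.

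The ID-synthesis step uses only information already at $u$, so it costs no rounds; the overall round complexity of $\mathcal{A}'$ is exactly $t(n)$. For randomized algorithms, the private random bit strings are available in both models, so the same simulation applies verbatim, preserving both round complexity and failure probability. I expect no serious obstacle: the only routine check is that $d$ being constant keeps the encoded range polynomial, which is immediate.
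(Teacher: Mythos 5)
Your ID-synthesis argument is exactly the paper's for the deterministic case: the paper also forms $I = \sum_{i=1}^d I_i \cdot n^{c(i-1)}$ (your $M$-ary encoding) and observes this gives globally unique polynomial-range identifiers, allowing zero-overhead simulation of the \local algorithm in \prodlocal. Your uniqueness check is slightly more explicit but is the same reasoning.

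Where you diverge is the randomized case. You observe that private random bits are present in both models and conclude that the simulation ``applies verbatim,'' preserving a \emph{randomized} \prodlocal algorithm. The paper instead first invokes the derandomization result of Chang and Pettie \cite{chang2016exp_separation} to turn a $t(n)$-round randomized \local algorithm into a $t(2^{O(n^2)})$-round deterministic one (which is still $O(\log^* n)$ in the regime of interest), and only then does the ID synthesis, yielding a \emph{deterministic} \prodlocal algorithm. This is not a cosmetic difference: the next step in the paper's chain, Proposition~\ref{prop:ramsey}, treats the \prodlocal algorithm as a deterministic function of its input neighborhood in order to color hyperedges for the Ramsey argument, so a randomized \prodlocal algorithm would not plug in. Your reading is defensible for the proposition as literally phrased (and is tighter, preserving $t(n)$ exactly rather than only asymptotically for $t = \Theta(\log^* n)$), but it proves a weaker statement than what the proposition is used for downstream. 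If you wanted your version to fit the paper's pipeline, you would either need to fold in the derandomization here, as the paper does, or restate and re-prove Proposition~\ref{prop:ramsey} to handle randomized algorithms.
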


\begin{proof}
By result of \cite{chang2016exp_separation}, a $t(n)$ round randomized algorithm in the local model implies a $t(2^{O(n^2)})$ round deterministic algorithm. This implies that we can turn an $O(\log^*n)$-rounds randomized algorithm into $O(\log^*n)$-rounds deterministic one. 

Each of the $d$ identifiers of a node are bounded by $n^c$ for some fixed constant $c$ in the \prodlocal model. Assigning each node the identifier $I = \sum_{i=1}^d I_i \cdot n^{c(i-1)}$, where $(I_i)$ is its $i$-th identifier results in globally unique identifiers from a polynomial range. This allows simulation of the deterministic \local algorithm in the \prodlocal model as needed. 
\end{proof}

Next, we argue that an $o(\log^* n)$-round \prodlocal algorithm can be turned into an order-invariant \prodlocal algorithm. 
To show the existence of such an order-invariant algorithm, we use a Ramsey theory based argument very similar to the one in \cref{sec:volume} and \cite{chang2016exp_separation}. 

\begin{proposition}
\label{prop:ramsey}
If there is a $t(n) = o(\log^* n)$ round \prodlocal algorithm for an LCL $\Pi$, then there is also an order-invariant \prodlocal algorithm with the same round complexity.
\end{proposition}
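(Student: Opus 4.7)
The approach is to extend the Ramsey-theoretic argument of Lemma~\ref{lem:order_invariant_over_set} (in the spirit of~\cite{naorstockmeyer,chang2016exp_separation}) from the \volume setting to the \prodlocal model on oriented $d$-dimensional grids. Since the $t(n)$-hop neighborhood of a node in such a grid has fixed topology and contains at most $2t(n)+1$ distinct identifiers in each dimension, the goal is to find, for every $i\in[d]$, a subset $S_i\subseteq[n^{O(1)}]$ of size at least $2t(n)+1$ such that the output of $\fA$ at a node is determined by the input labeling and the order-structure of identifiers in its neighborhood whenever each $i$-th identifier lies in $S_i$. The sought-after order-invariant algorithm $\fA'$ will then order-preservingly remap each node's view into the $S_i$'s and simulate $\fA$ on the remapped view.

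\textbf{The Ramsey step.} For each $d$-tuple $(X_1,\ldots,X_d)$ of $(2t(n)+1)$-element subsets of $[n^{O(1)}]$, define a coloring $f_{X_1,\ldots,X_d}$ that records, for every ``combinatorial pattern''---i.e.\ every input labeling of the $t(n)$-neighborhood together with every assignment of each node's $i$-th identifier to a position within $X_i$ compatible with the coordinate constraint---the label configuration $\fA$ outputs at the central node's half-edges. The number of patterns and outputs depends only on $t(n)$, $d$, $|\sinn|$, $|\sout|$, and $\Delta$, so the total number of colors $c$ is finite and satisfies $\log^* c = O(\log^* t(n))$. Invoking the product hypergraph Ramsey theorem (whose bound follows by iterating, once per dimension, the standard bound $\log^* R(p,m,c) = p+\log^* m+\log^* c+O(1)$ from~\cite{chang2017time_hierarchy}), we obtain $S_1,\ldots,S_d\subseteq[n^{O(1)}]$ of size at least $2t(n)+1$ on which $f_{X_1,\ldots,X_d}$ is constant, provided $\log^* n = \Omega(d\cdot(t(n)+\log^* t(n)))$. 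This holds for large $n$ since $d=O(1)$ and $t(n)=o(\log^* n)$.

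\textbf{Constructing $\fA'$ and verifying correctness.} Choose the $S_i$ to be pairwise disjoint intervals of $[n^{O(1)}]$. Define $\fA'$ at a node $v$ by collecting the $t(n)$-hop neighborhood, order-preservingly mapping for each $i$ the distinct $i$-th identifiers seen there to the smallest elements of $S_i$, and then outputting what $\fA$ would return on this mapped view. Two identifier assignments order-indistinguishable in $v$'s neighborhood induce the same mapped view and hence the same $\fA'$-output, making $\fA'$ order-invariant. For correctness, suppose $\fA'$ fails at some $v$ in an input grid $G$. Rewrite $G$ by replacing the $i$-th identifiers of nodes in $v$'s $t(n)$-neighborhood with their mapped values (which all lie in $S_i$) and reassigning identifiers elsewhere using fresh values from $[n^{O(1)}]\setminus(S_1\cup\cdots\cup S_d)$; this is possible because the $S_i$ are much smaller than $n^{O(1)}$. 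On the rewritten grid, $v$'s view equals the mapped view, so $\fA$'s output at $v$ coincides with $\fA'$'s on the original $G$, meaning $\fA$ fails at $v$ on the rewritten grid---contradicting its correctness.

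\textbf{Main obstacle.} The principal additional difficulty relative to the \volume analogue is the handling of $d$ independent identifier coordinates, which forces a move from standard hypergraph Ramsey to its product (box) version. The delicate point is to package the joint behavior of $\fA$ across all $d$ dimensions into an $n$-independent number of colors so that product Ramsey applies, and to check that the cumulative cost $O(d\cdot(t(n)+\log^* t(n)))$ of the $d$-fold iteration remains absorbed by the slack $\log^* n - t(n)\to\infty$. Both are straightforward once $d$ is a fixed constant and $t(n)=o(\log^* n)$, but this multi-dimensional Ramsey accounting is where the extension from the single-coordinate \volume argument requires the most care.
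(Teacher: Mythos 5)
Your overall route---Ramsey on identifier sets, then remapping each node's view order-preservingly into the monochromatic set and simulating $\fA$---is essentially the paper's, with one cosmetic difference: you find $d$ per-dimension sets via a product Ramsey theorem, whereas the paper applies ordinary hypergraph Ramsey once to $p = d(2t(n)+1)$-element subsets and distributes a single set $T$ over the dimensions via a permutation. That difference is harmless. However, there is a genuine gap in your correctness argument.

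You size each $S_i$ at $2t(n)+1$, just enough for one node's $t(n)$-hop view, and your contradiction only establishes that $\fA$'s output \emph{at $v$ itself} on the rewritten grid coincides with $\fA'$'s output at $v$ on $G$. But ``$\fA'$ fails at $v$'' means the LCL constraint is violated on $B_G(v,r)$, which involves the outputs of \emph{all} nodes within the checkability radius $r$ of $v$, each computed by $\fA'$ from its \emph{own} remapped view. A node $w \in B_G(v,r)$ sees coordinates outside $v$'s $t(n)$-hop neighborhood; on your rewritten grid those carry ``fresh'' identifiers outside $\bigcup_i S_i$, and even the coordinates $w$ shares with $v$'s neighborhood are generally not mapped to the smallest elements of the $S_i$ (they are mapped according to $v$'s remap, not $w$'s). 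So nothing forces $\fA$'s output at $w$ on the rewritten grid to equal $\fA'$'s output at $w$ on $G$, the violated constraint does not transfer to $\fA$, and no contradiction is obtained. This is precisely why the paper takes the monochromatic set of size $m = d\cdot(2(t(n)+r)+1)$, strictly larger than $p$: it is big enough to assign identifiers from $S$ consistently to the entire $(t(n)+r)$-hop neighborhood, so that every node within distance $r$ of $v$ outputs under $\fA'$ exactly what $\fA$ outputs under that one global assignment, and correctness of $\fA$ then rules out a violation at $v$. Your proof is repaired by enlarging each $S_i$ to $2(t(n)+r)+1$ elements, making the coloring record the behavior on $(t(n)+r)$-hop patterns, and performing the rewriting for the whole $(t(n)+r)$-hop neighborhood at once, invoking monochromaticity to conclude that each $w \in B_G(v,r)$ outputs the same whether its view is mapped to the smallest elements of the $S_i$ or to the elements selected by the global remap.
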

\begin{proof}
The algorithm $\fA$ of local complexity $t(n)$ can be seen as a function that maps $(2t(n)+1) \times (2t(n)+1)$ sized grids with edges consistently oriented and labeled with numbers from $[d]$. Moreover, each vertex has a $d$-tuple of its identifiers and each half-edge a label from $\Sigma_{in}$. The function $\fA$ maps this input to an output label for each half-edge incident to a given node.  
Let $R(p,m,c)$ be defined as in the previous section and let $H$ be a hypergraph on $n^{O(1)}$ nodes, each representing an identifier. 

\begin{enumerate}
    \item Assume we have a $t(n)$-hop neighborhood already labeled without assigned identifiers and we want to assign the identifiers. There are at most $p = d \cdot (2t(n)+1)$ numbers to assign. 
    \item Assume we have a $t(n) + r$-hop neighborhood already labeled without assigned identifiers and we want to assign the identifiers. There are at most $m = d \cdot (2(t(n) + r) + 1)$ numbers to assign. Here, $r$ is the local checkability radius of the problem. 
    \item The number $z$ is defined to be the number of possible input neighborhoods not labeled with identifiers. We have $z \leq |\Sigma_{in}|^{2d(2t(n)+1)^d}$. Note that the extra $2d$-factor in the exponent comes from the fact that we label half-edges. 
    \item Finally, $c$ is the number of colors such that each color encodes a distinct function which takes as input one of the $z$ possible neighborhoods labeled with $\Sigma_{in}$ inputs. Moreover, for each such neighborhood we fix a given permutation on $p$ elements $\pi$. The function outputs a tuple of $d$ output labels from $\Sigma_{out}$, one entry corresponding to each half-edge incident to a vertex. 
    Hence, for the number of such colors $c$ we have $c = |\Sigma_{out}|^{d p! z}$. 
\end{enumerate}

We color each $p$-edge $T$ of $H$ by the following color out of $c$ colors: we consider all of the $z$ possible ways of labeling the $t(n)$-hop neighborhood with labels from $\Sigma_{in}$ and all $p!$ ways of how one can assign a set of $p$ different identifiers of $T$ to that neighborhood (we think of $T$ as a sorted list on which we apply $\pi$ and then in this order we fill in identifiers to the $t(n)$-hop neighborhood). For each such input (out of $p!\cdot z$ possible) we now apply $\fA$ to that neighborhood and record the output from $\Sigma_{out}^d$. This gives one of $c$ possible colors.  
We now use the bound $\log^* R(p,m,c) = p + \log^*m + \log^*c + O(1)$ as in the previous section. 
Hence, as we assume $t(n) = o(\log^* n)$, this implies $|V(H)| \ge R(p,m,c)$, and therefore there exists a set $S \subseteq V(H)$ of $m$ distinct IDs such that all $p$-sized subsets $T \subseteq S$ have the same color. 

We now define the new algorithm $\fA'$ as follows. Let $N$ be a neighborhood from the order-invariant \prodlocal model; 
take an arbitrary $p$-sized subset $T\subseteq S$ and label $N$ with identifiers from $T$ by the permutation $\pi$ given by the order on $N$ that we got. Then, apply $\fA$ on this input. 

First, this algorithm is well defined as it does not matter which $T$ we pick, the algorithm always answers the same because every $T \subseteq S$ is colored by the same color. 
Second, the algorithm is correct, since for any $(t(n)+r)$-hop neighborhood that has at most $m$ vertices we can choose some way of assigning identifiers from $S$ ($S = m$) to that neighbourhood and $\fA'$ answers the same as what $\fA$ would answer with this assignment of identifiers for all nodes within the local checkability radius. 
\end{proof}

Once we have the order-invariant algorithm with a round complexity of $o(\log^* n)$, an easy adaptation of the proof for \cref{thm:speedup_order_invariant} implies that we can turn it into an order-invariant \prodlocal algorithm with a round complexity of $O(1)$. 
However, because of the oriented-grid, we get a local order on the vertices for free, so we can turn the order-invariant \prodlocal algorithm into an (order-invariant) \local algorithm, thus finishing the proof. 

\begin{proposition}
\label{prop:final}
If an LCL $\Pi$ has complexity $o(n^{1/d})$ in the order-invariant \prodlocal model, it has deterministic/randomized/deterministic order-invariant local complexity $O(1)$ in the \local model. 
\end{proposition}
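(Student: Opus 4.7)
The plan is to mimic the two-step blueprint used for \cref{thm:speedup_order_invariant} but tailored to the \prodlocal model on oriented grids, and then to remove the \prodlocal assumption using the grid orientation. Concretely, I would first speed up the given order-invariant \prodlocal algorithm $\fA$ of complexity $t(n) = o(n^{1/d})$ to an order-invariant \prodlocal algorithm $\fA'$ of complexity $O(1)$, and then convert $\fA'$ into a deterministic order-invariant \local algorithm $\fA''$ of complexity $O(1)$, which trivially also yields the deterministic and randomized \local bounds claimed in the proposition.

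For the speedup step, I would choose a constant $n_0$ large enough that $(2t(n_0) + 2r + 1)^d \leq n_0$, which is possible because $t(n) = o(n^{1/d})$ and $r, d$ are constants. Let $\fA'$ be the algorithm that, on an arbitrary $n$-node input, simulates $\fA$ pretending the size parameter is $n_0$; its round complexity is $t(n_0) = O(1)$, and it is order-invariant because $\fA$ is. For correctness, suppose towards a contradiction that $\fA'$ errs at some node $v$ in an $n$-node grid $G$. Then the failure is witnessed by the $(t(n_0) + r)$-hop neighborhood $N$ of $v$, which has at most $(2t(n_0) + 2r + 1)^d$ nodes and diameter at most $2(t(n_0) + r)$. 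By the choice of $n_0$, $N$ embeds as a subgraph (preserving inputs, edge labels, and edge orientations) into some $n_0$-node toroidal oriented grid $G_0$, and the \prodlocal identifiers in $N$ can be replaced by any order-consistent assignment from the identifier range used by $\fA$ on $G_0$. Since $\fA$ is order-invariant and correct on $G_0$, the output of $\fA$ at the image of $v$ in $G_0$ is correct, hence so is the output of $\fA'$ at $v$, a contradiction.

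For the conversion step, let $c$ denote the constant round complexity of $\fA'$. I would define $\fA''$ so that at each node $v$ it gathers its $(c+r)$-hop neighborhood $N_{c+r}(v)$ in $G$ and then simulates $\fA'$ on it. The key observation is that the edge labels and orientations inside this neighborhood canonically determine, for every pair of nodes and every coordinate $i \in [d]$, whether one node's $i$-th coordinate is less than, equal to, or greater than the other's: equality holds iff the two nodes are connected by a path using no dimension-$i$ edge, and otherwise the consistent orientation of dimension-$i$ edges gives the ordering. Thus $v$ can canonically assign to each node in its neighborhood a $d$-tuple of identifiers that realizes the true coordinate order, and then invoke $\fA'$ on this labeled neighborhood.

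The main obstacle I anticipate is the global consistency of these local simulations: different nodes may assign different absolute identifier tuples to the same node in their overlapping neighborhoods. The saving grace is that the orders they extract are derived solely and canonically from the oriented-grid structure, so the restricted orders on overlaps coincide; order-invariance of $\fA'$ then ensures that any two nodes assign the same output label to a shared half-edge, so the half-edge labels are globally well-defined, and checkability of $\Pi$ at each node follows because $\fA''$ reproduces exactly the output that $\fA'$ would have produced on any \prodlocal input whose $(c+r)$-hop neighborhood of $v$ is order-isomorphic to $N_{c+r}(v)$. Finally, $\fA''$ consults neither \local identifiers nor random bits, only the graph topology, inputs, and orientation, so it is automatically deterministic and order-invariant in the \local sense, yielding all three claimed complexity bounds at once.
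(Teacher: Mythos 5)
Your two-step approach (speed up to $O(1)$ rounds within \prodlocal, then remove the identifier assumption using the orientation) follows the same blueprint as the paper, which merely merges the two steps into a single simulation with fixed size parameter $n_0$. Two small technical points are worth pinning down. First, the order-indistinguishability notion for \prodlocal compares $id_i(v)$ with $id_j(w)$ even for $i \neq j$; the per-dimension coordinate order you extract from the orientation does not by itself determine these cross-dimensional comparisons, so the ``canonical'' identifier assignment you invoke must also specify a cross-dimensional tiebreak (the paper's choice is $id_i(u) < id_j(v)$ whenever $i < j$, regardless of coordinates). Without such a rule, order-invariance of $\fA'$ does not by itself guarantee the overlap consistency you rely on at the end of the conversion step. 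Second, embedding the $(t(n_0)+r)$-hop ball into an $n_0$-node torus requires $n_0$ to admit a factorization into $d$ side lengths each at least $2(t(n_0)+r)+1$; the inequality $(2t(n_0)+2r+1)^d \leq n_0$ alone does not guarantee this, but choosing $n_0 = m^d$ with $m$ sufficiently large (possible since $t(n) = o(n^{1/d})$) fixes it. With these adjustments your argument is sound and essentially identical to the paper's.
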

\begin{proof}
This is an argument very similar to  \cite{chang2017time_hierarchy}. 
Let $\fA$ be the algorithm in the \prodlocal model. 
We choose $n_0$ large enough and run $\fA$ ``fooled'' into thinking that the size of the grid is $n_0$. 
As the order on the identifiers, choose the order where two identifiers $id_i(u), id_j(v)$ have $id_i(u) < id_j(v)$ if and only if either $i < j$ or $i = j$ and $v$ is further than $u$ in the $i$-th coordinate (the notion of ``further than'' is given by the orientation of the grid). 
A standard argument as in \cite{chang2017time_hierarchy} shows that if $n_0$ is chosen large enough, the ``fooled'' algorithm needs to be correct, as otherwise $\fA$ would be incorrect when being run on grids of size $n_0$. 
\end{proof}

\cref{thm:main_grids} follows by application of \cref{fact:prod,prop:ramsey,prop:final}.

\section*{Acknowledgements}
We would like to thank Yi-Jun Chang, Mohsen Ghaffari, Jan Grebík, and Dennis Olivetti for helpful comments and discussions.

CG and VR were supported by the European Research Council (ERC) under the European Unions Horizon 2020 research and innovation programme (grant agreement No. 853109).

\bibliography{ref}

\begin{thebibliography}{10}

\bibitem{brandt_chang_grebik_grunau_rozhon_vidnyanszky2022trees}


\bibitem{alon2012space}
Noga Alon, Ronitt Rubinfeld, Shai Vardi, and Ning Xie.
\newblock Space-efficient local computation algorithms.
\newblock In {\em Proceedings of the twenty-third annual ACM-SIAM symposium on
  Discrete Algorithms}, pages 1132--1139. SIAM, 2012.

\bibitem{balliu2019distributed}
Alkida Balliu, Sebastian Brandt, Yi{-}Jun Chang, Dennis Olivetti, Mika{\"{e}}l
  Rabie, and Jukka Suomela.
\newblock The distributed complexity of locally checkable problems on paths is
  decidable.
\newblock In Peter Robinson and Faith Ellen, editors, {\em Proceedings of the
  2019 {ACM} Symposium on Principles of Distributed Computing, {PODC} 2019,
  Toronto, ON, Canada, July 29 - August 2, 2019}, pages 262--271. {ACM}, 2019.
\newblock \href {https://doi.org/10.1145/3293611.3331606}
  {\path{doi:10.1145/3293611.3331606}}.

\bibitem{balliu20binary}
Alkida Balliu, Sebastian Brandt, Yuval Efron, Juho Hirvonen, Yannic Maus,
  Dennis Olivetti, and Jukka Suomela.
\newblock Classification of distributed binary labeling problems.
\newblock In {\em 34th International Symposium on Distributed Computing, {DISC}
  2020, October 12-16, 2020, Virtual Conference}, pages 17:1--17:17, 2020.
\newblock \href {http://arxiv.org/abs/1911.13294} {\path{arXiv:1911.13294}},
  \href {https://doi.org/10.4230/LIPIcs.DISC.2020.17}
  {\path{doi:10.4230/LIPIcs.DISC.2020.17}}.

\bibitem{balliu2019LB}
Alkida Balliu, Sebastian Brandt, Juho Hirvonen, Dennis Olivetti, Mika{\"e}l
  Rabie, and Jukka Suomela.
\newblock Lower bounds for maximal matchings and maximal independent sets.
\newblock In {\em Proc.~Foundations of Computer Science (FOCS)}, 2019.

\bibitem{balliuseek21}
Alkida Balliu, Sebastian Brandt, Fabian Kuhn, and Dennis Olivetti.
\newblock Distributed {\(\Delta\)}-coloring plays hide-and-seek.
\newblock {\em CoRR}, abs/2110.00643, 2021.
\newblock URL: \url{https://arxiv.org/abs/2110.00643}, \href
  {http://arxiv.org/abs/2110.00643} {\path{arXiv:2110.00643}}.

\bibitem{balliu2020ruling}
Alkida Balliu, Sebastian Brandt, and Dennis Olivetti.
\newblock Distributed lower bounds for ruling sets.
\newblock In {\em 61st {IEEE} Annual Symposium on Foundations of Computer
  Science, {FOCS} 2020, Durham, NC, USA, November 16-19, 2020}, pages 365--376,
  2020.
\newblock \href {http://arxiv.org/abs/2004.08282} {\path{arXiv:2004.08282}},
  \href {https://doi.org/10.1109/FOCS46700.2020.00042}
  {\path{doi:10.1109/FOCS46700.2020.00042}}.

\bibitem{balliurooted21}
Alkida Balliu, Sebastian Brandt, Dennis Olivetti, Jan Studen{\'{y}}, Jukka
  Suomela, and Aleksandr Tereshchenko.
\newblock Locally checkable problems in rooted trees.
\newblock In Avery Miller, Keren Censor{-}Hillel, and Janne~H. Korhonen,
  editors, {\em {PODC} '21: {ACM} Symposium on Principles of Distributed
  Computing, Virtual Event, Italy, July 26-30, 2021}, pages 263--272. {ACM},
  2021.
\newblock \href {https://doi.org/10.1145/3465084.3467934}
  {\path{doi:10.1145/3465084.3467934}}.

\bibitem{balliu2020almost_global_problems}
Alkida Balliu, Sebastian Brandt, Dennis Olivetti, and Jukka Suomela.
\newblock Almost global problems in the local model.
\newblock {\em Distributed Computing}, pages 1--23, 2020.

\bibitem{balliusmall21}
Alkida Balliu, Keren Censor{-}Hillel, Yannic Maus, Dennis Olivetti, and Jukka
  Suomela.
\newblock Locally checkable labelings with small messages.
\newblock In Seth Gilbert, editor, {\em 35th International Symposium on
  Distributed Computing, {DISC} 2021, October 4-8, 2021, Freiburg, Germany
  (Virtual Conference)}, volume 209 of {\em LIPIcs}, pages 8:1--8:18. Schloss
  Dagstuhl - Leibniz-Zentrum f{\"{u}}r Informatik, 2021.
\newblock \href {https://doi.org/10.4230/LIPIcs.DISC.2021.8}
  {\path{doi:10.4230/LIPIcs.DISC.2021.8}}.

\bibitem{balliu2018new_classes-loglog*-log*}
Alkida Balliu, Juho Hirvonen, Janne~H Korhonen, Tuomo Lempi{\"a}inen, Dennis
  Olivetti, and Jukka Suomela.
\newblock New classes of distributed time complexity.
\newblock In {\em Proceedings of the 50th Annual ACM SIGACT Symposium on Theory
  of Computing}, pages 1307--1318, 2018.

\bibitem{balliu2019hardness_homogeneous}
Alkida Balliu, Juho Hirvonen, Dennis Olivetti, and Jukka Suomela.
\newblock Hardness of minimal symmetry breaking in distributed computing.
\newblock In {\em Proceedings of the 2019 ACM Symposium on Principles of
  Distributed Computing}, pages 369--378, 2019.

\bibitem{bernshteyn2020LLL}
Anton Bernshteyn.
\newblock Distributed algorithms, the {L}ovász local lemma, and descriptive
  combinatorics.
\newblock {\em arXiv preprint arXiv:2004.04905}, 2020.

\bibitem{brandt_LLL_lower_bound}
S.~Brandt, O.~Fischer, J.~Hirvonen, B.~Keller, T.~Lempi{\"a}inen, J.~Rybicki,
  J.~Suomela, and J.~Uitto.
\newblock A lower bound for the distributed {L}ov{\'a}sz local lemma.
\newblock In {\em Proc.\ 48th ACM Symp.\ on Theory of Computing (STOC)}, pages
  479--488, 2016.

\bibitem{brandt19automatic_speedup_theorem}
Sebastian Brandt.
\newblock An automatic speedup theorem for distributed problems.
\newblock In {\em Proceedings of the 2019 {ACM} Symposium on Principles of
  Distributed Computing, {PODC} 2019, Toronto, ON, Canada, July 29 - August 2,
  2019}, pages 379--388, 2019.
\newblock \href {https://doi.org/10.1145/3293611.3331611}
  {\path{doi:10.1145/3293611.3331611}}.

\bibitem{brandt_grunau_rozhon2021LLL_in_LCA}
Sebastian Brandt, Christoph Grunau, and V{\'{a}}clav Rozhon.
\newblock The randomized local computation complexity of the {L}ov{\'{a}}sz
  local lemma.
\newblock {\em CoRR}, abs/2103.16251, 2021.
\newblock URL: \url{https://arxiv.org/abs/2103.16251}, \href
  {http://arxiv.org/abs/2103.16251} {\path{arXiv:2103.16251}}.

\bibitem{brandt_grids}
Sebastian Brandt, Juho Hirvonen, Janne~H. Korhonen, Tuomo Lempi\"{a}inen,
  Patric~R.J. \"{O}sterg\r{a}rd, Christopher Purcell, Joel Rybicki, Jukka
  Suomela, and Przemys\l{}aw Uzna\'{n}ski.
\newblock {LCL} problems on grids.
\newblock page 101–110, 2017.
\newblock \href {https://doi.org/10.1145/3087801.3087833}
  {\path{doi:10.1145/3087801.3087833}}.

\bibitem{chang2018edge_coloring}
Y.-J. Chang, Q.~He, W.~Li, S.~Pettie, and J.~Uitto.
\newblock The complexity of distributed edge coloring with small palettes.
\newblock In {\em Proc.\ 29th ACM-SIAM Symp.\ on Discrete Algorithms}, pages
  2633--2652, 2018.

\bibitem{chang2016exp_separation}
Y.-J. Chang, T.~Kopelowitz, and S.~Pettie.
\newblock An exponential separation between randomized and deterministic
  complexity in the {LOCAL} model.
\newblock In {\em Proc.\ 57th IEEE Symp.\ on Foundations of Computer Science
  (FOCS)}, 2016.

\bibitem{chang2020n1k_speedups}
Yi-Jun Chang.
\newblock {The Complexity Landscape of Distributed Locally Checkable Problems
  on Trees}.
\newblock In Hagit Attiya, editor, {\em 34th International Symposium on
  Distributed Computing (DISC 2020)}, volume 179 of {\em Leibniz International
  Proceedings in Informatics (LIPIcs)}, pages 18:1--18:17, Dagstuhl, Germany,
  2020. Schloss Dagstuhl--Leibniz-Zentrum f{\"u}r Informatik.
\newblock URL: \url{https://drops.dagstuhl.de/opus/volltexte/2020/13096}, \href
  {https://doi.org/10.4230/LIPIcs.DISC.2020.18}
  {\path{doi:10.4230/LIPIcs.DISC.2020.18}}.

\bibitem{chang2017time_hierarchy}
Yi-Jun Chang and Seth Pettie.
\newblock A time hierarchy theorem for the {LOCAL} model.
\newblock In {\em Proc.\ 58th IEEE Symp.\ on Foundations of Computer Science
  (FOCS)}, pages 156--167, 2017.

\bibitem{changauto21}
Yi{-}Jun Chang, Jan Studen{\'{y}}, and Jukka Suomela.
\newblock Distributed graph problems through an automata-theoretic lens.
\newblock In Tomasz Jurdzinski and Stefan Schmid, editors, {\em Structural
  Information and Communication Complexity - 28th International Colloquium,
  {SIROCCO} 2021, Wroc{\l}aw, Poland, June 28 - July 1, 2021, Proceedings},
  volume 12810 of {\em Lecture Notes in Computer Science}, pages 31--49.
  Springer, 2021.
\newblock \href {https://doi.org/10.1007/978-3-030-79527-6\_3}
  {\path{doi:10.1007/978-3-030-79527-6\_3}}.

\bibitem{cole86}
Richard Cole and Uzi Vishkin.
\newblock Deterministic coin tossing with applications to optimal parallel list
  ranking.
\newblock {\em Information and Control}, 70(1):32--53, 1986.

\bibitem{conley_grebik_pikhurko2020divisibility_of_spheres}
Clinton~T. Conley, Jan Grebík, and Oleg Pikhurko.
\newblock Divisibility of spheres with measurable pieces, 2020.
\newblock \href {http://arxiv.org/abs/2012.07567} {\path{arXiv:2012.07567}}.

\bibitem{FischerGK17}
Manuela Fischer, Mohsen Ghaffari, and Fabian Kuhn.
\newblock Deterministic distributed edge-coloring via hypergraph maximal
  matching.
\newblock In {\em Proc.\ 58th IEEE Symp.\ on Foundations of Computer Science
  (FOCS)}, 2017.

\bibitem{ghaffari_grunau_rozhon2020improved_network_decomposition}
Mohsen Ghaffari, Christoph Grunau, and Václav Rozhoň.
\newblock Improved deterministic network decomposition.
\newblock 2020.
\newblock \href {http://arxiv.org/abs/2007.08253} {\path{arXiv:2007.08253}}.

\bibitem{ghaffari_harris_kuhn2018derandomizing}
Mohsen Ghaffari, David Harris, and Fabian Kuhn.
\newblock On derandomizing local distributed algorithms.
\newblock In {\em Proc.~Foundations of Computer Science (FOCS)}, pages
  662--673, 2018.

\bibitem{ghaffari2019conditional_hardness}
Mohsen Ghaffari, Fabian Kuhn, and Jara Uitto.
\newblock Conditional hardness results for massively parallel computation from
  distributed lower bounds.
\newblock In {\em Proc.~Foundations of Computer Science (FOCS)}, 2019.

\bibitem{ghaffari_su2017distributed_degree_splitting}
Mohsen Ghaffari and Hsin-Hao Su.
\newblock Distributed degree splitting, edge coloring, and orientations.
\newblock In {\em Proceedings of the Twenty-Eighth Annual ACM-SIAM Symposium on
  Discrete Algorithms}, pages 2505--2523. Society for Industrial and Applied
  Mathematics, 2017.

\bibitem{goos_nonlocal_probes_16}
Mika G{\"o}{\"o}s, Juho Hirvonen, Reut Levi, Moti Medina, and Jukka Suomela.
\newblock Non-local probes do not help with many graph problems.
\newblock In Cyril Gavoille and David Ilcinkas, editors, {\em Distributed
  Computing}, pages 201--214, Berlin, Heidelberg, 2016. Springer Berlin
  Heidelberg.

\bibitem{OlegCircle}
Lukasz Grabowski, András Máthé, and Oleg Pikhurko.
\newblock Measurable circle squaring.
\newblock {\em Ann. of Math.}, 185(2):671--710, 2017.

\bibitem{grebik_rozhon2021toasts_and_tails}
Jan Grebík and Václav Rozhoň.
\newblock Of toasts and tails, 2021.
\newblock \href {http://arxiv.org/abs/2103.08394} {\path{arXiv:2103.08394}}.

\bibitem{holroydliggett2015finitely_dependent_coloring}
Alexander Holroyd, Thomas Liggett, et~al.
\newblock Symmetric 1-dependent colorings of the integers.
\newblock {\em Electronic Communications in Probability}, 20, 2015.

\bibitem{holroyd_schramm_wilson2017}
Alexander~E. Holroyd, Oded Schramm, and David~B. Wilson.
\newblock Finitary coloring.
\newblock {\em Ann. Probab.}, 45(5):2867--2898, 09 2017.
\newblock \href {https://doi.org/10.1214/16-AOP1127}
  {\path{doi:10.1214/16-AOP1127}}.

\bibitem{laczkovich}
Miklós Laczkovich.
\newblock Equidecomposability and discrepancy; a solution of {T}arski's
  circle-squaring problem.
\newblock {\em J. Reine Angew. Math.}, 404:77--117, 1990.

\bibitem{linial1987LOCAL}
Nathan Linial.
\newblock Distributive graph algorithms -- global solutions from local data.
\newblock In {\em Proc.\ 28th IEEE Symp.\ on Foundations of Computer Science
  (FOCS)}, pages 331--335, 1987.

\bibitem{linial92}
Nati Linial.
\newblock Locality in distributed graph algorithms.
\newblock {\em SIAM Journal on Computing}, 21(1):193--201, 1992.

\bibitem{Circle}
Andrew~S. Marks and Spencer~T. Unger.
\newblock Borel circle squaring.
\newblock {\em Ann. of Math.}, 186(2):581--605, 2017.

\bibitem{JordanCircle}
András Máthé, Jonathan~A. Noel, and Oleg Pikhurko.
\newblock Circle squaring with {J}ordan measurable pieces.
\newblock {\em work in progress.}

\bibitem{naor1991lower_bound_ring_coloring}
Moni Naor.
\newblock A lower bound on probabilistic algorithms for distributive ring
  coloring.
\newblock {\em SIAM Journal on Discrete Mathematics}, 4(3):409--412, 1991.

\bibitem{naorstockmeyer}
Moni Naor and Larry Stockmeyer.
\newblock What can be computed locally?
\newblock {\em SIAM Journal on Computing}, 24(6):1259--1277, 1995.

\bibitem{Rosenbaum2020}
Will Rosenbaum and Jukka Suomela.
\newblock Seeing far vs. seeing wide: Volume complexity of local graph
  problems.
\newblock In {\em Proceedings of the 39th Symposium on Principles of
  Distributed Computing}, PODC '20, page 89–98, New York, NY, USA, 2020.
  Association for Computing Machinery.
\newblock \href {https://doi.org/10.1145/3382734.3405721}
  {\path{doi:10.1145/3382734.3405721}}.

\bibitem{RozhonG19}
Václav Rozho\v{n} and Mohsen Ghaffari.
\newblock Polylogarithmic-time deterministic network decomposition and
  distributed derandomization.
\newblock In {\em Proc.~Symposium on Theory of Computation (STOC)}, 2020.

\bibitem{rubinfeld2011fast}
Ronitt Rubinfeld, Gil Tamir, Shai Vardi, and Ning Xie.
\newblock Fast local computation algorithms.
\newblock {\em arXiv preprint arXiv:1104.1377}, 2011.

\bibitem{Spinka}
Yinon Spinka.
\newblock Finitely dependent processes are finitary.
\newblock {\em Ann. Probab.}, 48(4):2088--2117, 2020.

\bibitem{Jukka_presentation}
Jukka Suomela.
\newblock Landscape of locality.
\newblock \url{https://jukkasuomela.fi/landscape-of-locality/}, 2020.

\end{thebibliography}

\appendix

\end{document}